\DeclareMathOperator*{\argmin}{arg\,min}
\newcommand{\red}[1]{{\textcolor{red}{#1}}}
\newcommand{\blue}[1]{{\textcolor{blue}{#1}}}
\newcommand{\brown}[1]{{\textcolor{brown}{#1}}}
\newcommand{\mypara}[1]{\smallskip\noindent\textbf{#1.} \xspace}
\definecolor{revision}{RGB}{0,0,0}
\newcommand{\revision}[2]{{\color{revision} #2\xspace}\xspace}
\newcommand{\revisionstart}{\begin{color}{revision}}
\newcommand{\revisionend}{~\!\!\end{color}}
\newcommand{\etal}{\textit{et al.}\xspace}
\newcommand{\ie}{\textit{i.e.}\xspace}
\newcommand{\eg}{\textit{e.g.}\xspace}
\renewcommand{\Pr}[1]{\ensuremath{\mathsf{Pr}\left[#1\right]}\xspace}
\newcommand{\abs}[1]{\ensuremath{\left| #1 \right|}}
\newtheorem{theorem}{Theorem}
\newtheorem{lemma}[theorem]{Lemma}
\newtheorem{prop}[theorem]{Proposition}
\newtheorem{definition}{Definition}
\newcommand{\method}{\ensuremath{\mathsf{PrivSyn}}\xspace}
\newcommand{\synflow}{\ensuremath{\mathsf{MCF}}\xspace}
\newcommand{\priview}{\ensuremath{\mathsf{PriView}}\xspace}
\newcommand{\privbayes}{\ensuremath{\mathsf{PrivBayes}}\xspace}
\newcommand{\jtree}{\ensuremath{\mathsf{JTree}}\xspace}
\newcommand{\mcf}{\ensuremath{\mathsf{MCF}}\xspace}
\newcommand{\gum}{\ensuremath{\mathsf{GUM}}\xspace}
\newcommand{\gumfull}{Graduate Update Method\xspace}
\newcommand{\pgm}{\ensuremath{\mathsf{PGM}}\xspace}
\newcommand{\bsg}{\ensuremath{\mathsf{BSG}}\xspace}
\newcommand{\dq}{\ensuremath{\mathsf{DualQuery}}\xspace}
\newcommand{\maj}{\ensuremath{\mathsf{Majority}}\xspace}
\newcommand{\nonpriv}{\ensuremath{\mathsf{NonPriv}}\xspace}
\newcommand{\privbayesid}{\ensuremath{\mathsf{PrivBayes(InDif)}}\xspace}
\newcommand{\marginal}{\ensuremath{\mathsf{DenseMarg}}\xspace}
\newcommand{\equallap}{\ensuremath{\mathsf{Equal \ Laplace}}\xspace}
\newcommand{\equalgauss}{\ensuremath{\mathsf{Equal \ Gaussian}}\xspace}
\newcommand{\weightgauss}{\ensuremath{\mathsf{Weighted \ Gaussian}}\xspace}
\newcommand{\margselect}{\ensuremath{\mbox{InDif}}\xspace}
\newcommand{\margselectfull}{Independent Difference\xspace}
\renewcommand{\AA}{\mathcal{A}\xspace}
\newcommand{\calN}{\mathcal{N}\xspace}
\newcommand{\doo}{\ensuremath{D_o}\xspace}
\newcommand{\ds}{\ensuremath{D_s}\xspace}
\newcommand{\dr}[3]{\mathrm{D}_{#1}\left(#2\middle\|#3\right)}
\newcommand{\ex}[2]{\underset{#1}{\mathbb{E}}\left[#2\right]}
\newcommand{\zcdp}{zCDP\xspace}
\newcommand{\Lapp}[1]{\ensuremath{\mathcal{L}\left(#1\right)}\xspace}
\begin{document}

\date{}

\author{
Zhikun Zhang\textsuperscript{1,2}\ \ \
Tianhao Wang\textsuperscript{3}\ \ \
Ninghui Li\textsuperscript{3}\\
Jean Honorio\textsuperscript{3}\ \ \
Michael Backes\textsuperscript{2}\ \ \
Shibo He\textsuperscript{1}\ \ \
Jiming Chen\textsuperscript{1}\ \ \
Yang Zhang\textsuperscript{2}\ \ \
\\
\\
\textsuperscript{1}\textit{Zhejiang University} \\
\textsuperscript{2}\textit{CISPA Helmholtz Center for Information Security} \\
\textsuperscript{3}\textit{Purdue University}
}


\title{\method: Differentially Private Data Synthesis}

\maketitle 
\begin{abstract}
In differential privacy (DP), a challenging problem is to generate synthetic datasets that efficiently capture the useful information in the private data.  
The synthetic dataset enables any task to be done without privacy concern and modification to existing algorithms.  
In this paper, we present \method, the first automatic synthetic data generation method that can handle general tabular datasets (with 100 attributes and domain size $>2^{500}$).  
\method is composed of a new method to automatically and privately identify correlations in the data, and a novel method to generate sample data from a dense graphic model.  We extensively evaluate different methods on multiple datasets to demonstrate the performance of our method.
\end{abstract}

\section{Introduction}
\label{sec:intro}

Differential privacy (DP)~\cite{Dwo06} 
has been accepted as the \textit{de facto} notion for protecting privacy.  
Companies and government agencies use DP for privacy-preserving data analysis.  
Uber implements Flex~\cite{johnson2018towards} 
that answers data SQL queries with DP.
LinkedIn builds Pinot~\cite{rogers2020linkedin}, 
a DP platform that enables analysts to gain insights 
about its members' content engagements.  
Within the government, the US census bureau 
plans to publish the 2020 census statistics with DP~\cite{abowd2018us}.

Previous work on DP mostly focuses on designing tailored algorithms 
for specific data analysis tasks.
This paradigm is time consuming, requires a lot of expertise knowledge, and is error-prone. 
For example, many algorithms have been proposed for mining frequent itemset~\cite{lee2014top,li2012privbasis,WXYZGY17}.  
Some of them incorrectly use the Sparse Vector Technique (SVT) and results in non-private
algorithm being incorrectly proven to satisfy DP, see, e.g.,~\cite{Lyu2017vldb} for an analysis
of incorrect usage of SVT.  To answer SQL queries under the constraint of DP, 
the SQL engine needs to be patched~\cite{johnson2018towards}.
For another example, to train a differentially private deep neural network, 
the stochastic gradient descent step is
modified~\cite{abadi2016deep}. 
Moreover, this paradigm does not scale: 
more tasks lead to worse privacy guarantee 
as each task reveals more information about the private data.

One promising solution to address this problem is generating a synthetic dataset 
that is similar to the private dataset while satisfying differential privacy.
As additional data analysis tasks performed on the published dataset are post-processing, they 
can be performed without additional privacy cost.  
Furthermore, existing algorithms for performing data analysis do not need to be modified.

The most promising existing method for private generation of synthetic datasets uses probabilistic graphical models.  
\privbayes~\cite{zhang2017privbayes} uses a Bayesian network.  
It first privately determines the network structure, then obtains noisy marginals for the Conditional Probability Distribution of each node. 
More recently, \pgm, which uses Markov Random Fields, was proposed in \cite{mckenna2019graphical}. 
In 2018, NIST hosted a Differential Privacy Synthetic Data Challenge~\cite{nist-challenge}, 
\pgm achieves the best result.  
Approaches that do not use probabilistic graphical models, such as~\cite{BLR08,hardt2012simple,gaboardi2014dual,vietrinew,zhang2018differentially,beaulieu2019privacy,abay2018privacy,frigerio2019differentially,tantipongpipat2019differentially}, either are computationally inefficient or have poor empirical performance. 

\privbayes and \pgm have two limitations.  First, as a graphical model aims to provide a compact representation of joint probability distributions, it is sparse by design. 
Once a structure is fixed, it imposes conditional independence assumptions that may not exist in the dataset.  
Second, since each model is sparse, the structure is data dependent and finding the right structure is critically important for the utility.  
Bayesian Networks are typically constructed by iterative 
selection using mutual information metrics.  
However, mutual information has high sensitivity, and cannot be estimated accurately under DP.  
\privbayes introduces a low-sensitivity proxy for mutual information, but it is slow (quadratic to the number of users in the dataset) to compute.
In \cite{mckenna2019graphical}, no method for automatically determining the graph structure is provided.  
In the NIST challenge, manually constructed graph networks are used for \pgm.

\mypara{Our Contributions}
In this paper, we propose \method,
for differentially private synthetic data generation.
The first novel contribution is that, instead of using graphical models as the summarization/representation of a dataset,
we propose to \textbf{use a set of large number of low-degree marginals to represent a dataset}.  
For example, in the experiments, given around 100 attributes, our method uses all one-way marginals and 
around 500 two-way marginals.  
A two-way marginal (specified by two attributes) is a frequency distribution table, showing the number of records with 
each possible combination of values for the two attributes.  
At a high level, graphical models can be viewed as a parametric approach to data summarization, and our approach 
can be viewed as a non-parametric one.  The advantage of our approach is that it makes weak assumptions about the conditional independence among attributes, and simply tries to capture correlation relationships that are in the dataset.  

This method is especially attractive under DP for several reasons.  
First, since counting the number of records has a low sensitivity of 1, counting queries can be answered accurately.  
Second, since a marginal issues many counting queries (one for each cell) with the same privacy cost of one counting query, it is arguably the most efficient way to extract information from a dataset under DP.  
Third, using either advanced composition theorem~\cite{DRV10} or zero-Concentrated DP~\cite{bun2016concentrated}, the variance of noises added to each marginal grows only linearly with the number of marginals under the same privacy budget.  
Furthermore, when one attribute is included in multiple marginal, one can use averaging 
to reduce the variance.  
As a result, one can afford to get a large number of marginals with reasonable accuracy. 

There are two main challenges for using a set of marginals for private data synthesis.  
The first challenge is how to select which marginals to use.  Using too many marginals (such as all 2-way marginals) results in higher noises, and slow down data synthesis. 
The second challenge is how to synthesize the dataset given noisy marginals.  

The second contribution is that we propose \textbf{a new method to automatically and privately select the marginals.} 
We first propose a metric \margselect (stands for \margselectfull) that measures the correlation between pairwise attributes.  
\margselect is easy to compute and has low global sensitivity.  
Given \margselect scores, we then propose a greedy algorithm that selects the pairs to form marginals.

The third contribution is that we develop a method that {\bf iteratively update a synthetic dataset to make it match the target set of marginals}.  
When the number of attribute is small enough so that the full contingency table can be stored and manipulated directly, one can use methods such as multiplicative update~\cite{arora2012multiplicative} to do this. 
However, with tens or even over one hundred attributes, it is infeasible to represent the full contingency table.  

The key idea underlying our approach is to view the dataset being synthesized as a proxy of the joint distribution to be estimated, and directly manipulate this dataset.  
In particular, given a set of noisy marginals, we start from a randomly generated dataset where each attribute matches one-way marginal information in the set, and then gradually ``massage'' the synthetic dataset so that its distribution is closer and closer to each pairwise marginal.  
We model this problem as a network flow problem and propose \gumfull (short for \gum), a method to ``massage'' the dataset to be consistent with all the noisy marginals.
We believe that \gum can be of independent interest outside the privacy community. 
Essentially, it can be utilized more broadly as a standalone algorithm and it allows us to generate synthetic dataset from dense graphical models.

To summarize, the main contributions of this paper are:
\begin{itemize}[leftmargin=*]
    \setlength\itemsep{-0.5em}
    \item A simple yet efficient method to capture correlations within the dataset.
    \item A new method to automatically and privately select marginals that capture sufficient correlations.
    \item A data synthesis algorithm \gum that can also be used standalone to handle dense graphical models.
    \item An extensive evaluation which demonstrates the performance improvement of the proposed method on real-world dataset and helps us understand the intuition of different techniques.
\end{itemize}

\mypara{Roadmap}
In Section~\ref{sec:back}, we present background knowledge of DP and composition theorem, and formally define the data synthesis problem.  
We then introduce a general framework of private data synthesis in Section~\ref{sec:framework}.
We present our proposed marginal selection method and data synthesis method in Section~\ref{sec:marg_select} and Section~\ref{sec:synthesizing}, respectively.
Experimental results are presented in Section~\ref{sec:exp}.
We discuss related work in Section~\ref{sec:related} and limitations in Section~\ref{sec:limitation}.
Finally, we provide concluding remarks in Section~\ref{sec:conc}.

\section{Preliminaries}
\label{sec:back}

\subsection{Differential Privacy}
Differential privacy~\cite{DMNS06} is designed for the setting 
where there is a \textbf{trusted data curator}, which gathers data from individual users, processes the data in a way that satisfies DP, and then publishes the results.
Intuitively, the DP notion requires 
that any single element in a dataset has only a limited impact on the output.

\begin{definition}[$(\epsilon,\delta)$-Differential Privacy] \label{def:non-pure-dp}
	An algorithm $\AA$ satisfies $(\epsilon,\delta)$-differential privacy ($(\epsilon,\delta)$-DP), where $\epsilon>0, \delta \geq 0$,
	if and only if for any two neighboring datasets $D$ and $D'$, we have
	\begin{equation*}
	\forall{T\subseteq\! \mathit{Range}(\AA)}:\; \Pr{\AA(D)\in T} \leq e^{\epsilon}\, \Pr{\AA(D')\in T}+\delta,\label{eq:npdp}
	\end{equation*}
	where $\mathit{Range}(\AA)$ denotes the set of all possible outputs of the algorithm $\AA$.
\end{definition}
In this paper we consider two datasets $D$ and $D^{\prime}$ to be \textit{neighbors}, denoted as $D \simeq D'$, if and only if either  $D=D^{\prime} + r$ or $D^{\prime} = D + r$, where $D + r$ denotes the dataset resulted from adding the record $r$ to the dataset $D$. 

\subsection{Gaussian Mechanism}
\label{subsec:gauss}
There are several approaches for designing mechanisms that satisfy $(\epsilon, \delta)$-differential privacy.  
In this paper, we use the Gaussian mechanism.  
The approach computes a function $f$ on the dataset $D$ in a differentially privately way, by adding to $f(D)$ a random noise. 
The magnitude of the noise depends on $\Delta_f$, the \emph{global sensitivity} or the $\ell_2$ sensitivity of $f$.  
Such a mechanism $\AA$ is given below:
$$
\begin{array}{crl}
& \AA(D) & =f(D)+\calN\left(0, \Delta_f^2 \sigma^2 \mathbf{I} \right)
  \\
\mbox{where} &  \Delta_f & = \max\limits_{(D,D') : D \simeq D'} || f(D) - f(D')||_2.
\end{array}
$$
In the above, $\calN (0, \Delta_f^2 \sigma^2 \mathbf{I})$ denotes a multi-dimensional random variable sampled from the normal distribution with mean $0$ and standard deviation  $\Delta_f \sigma$, and $\sigma=\sqrt{2\ln\frac{1.25}{\delta}}/\epsilon$.

\subsection{Composition via Zero Concentrated DP}

For a sequential of $k$ mechanisms $\mathcal{A}_1, \ldots, \mathcal{A}_k$ satisfying $(\epsilon_i, \delta_i)$-DP for $i=1,\ldots, k$ respectively, the basic composition result~\cite{dpbook} shows that the privacy composes linearly, i.e., the sequential composition satisfies $(\sum_i^k\epsilon_i, \sum_i^k\delta_i)$-DP. When $\epsilon_i=\epsilon$ and $\delta_i=\delta$, the advanced composition bound from~\cite{DRV10} states that the composition satisfies ($\epsilon \sqrt{2k\log(1/\delta')}+k\epsilon(e^\epsilon-1)$, $k\delta+\delta'$)-DP. 

To enable more complex algorithms and data analysis task via the composition of multiple differentially private building blocks, zero Concentrated Differential Privacy (\zcdp for short) offers elegant composition properties.
\revision{}{The general idea is to connect $(\epsilon, \delta)$-DP to R\'enyi divergence, and use the useful property of R\'enyi divergence to achieve tighter composition property.
In another word, for fixed privacy budget $\epsilon$ and $\delta$, \zcdp can provide smaller standard deviation for each task compared to other composition techniques.}

Formally, \zcdp is defined as follows:
\begin{definition}[Zero-Concentrated Differential Privacy (\zcdp)~\cite{bun2016concentrated}]
A randomized mechanism $\mathcal{A}$ is $\rho$-zero concentrated differentially private (i.e., $\rho$-\zcdp) if for any two neighboring databases $D$ and $D'$ and all $\alpha \in (1, \infty)$,
\begin{align*}
\mathcal{D}_\alpha (\mathcal{A}(D)||\mathcal{A}(D')) \overset{\Delta}{=} \frac{1}{\alpha -1} \log \big(\mathbb{E}\left[ e^{(\alpha-1)L^{(o)}} \right] \big) \le \rho \alpha 
\end{align*}
Where $\mathcal{D}_\alpha (\mathcal{A}(D)||\mathcal{A}(D'))$ is called $\alpha$-R\'enyi divergence between the distributions of $\mathcal{A}(D)$ and $\mathcal{A}(D')$.
\revision{}{$L^{o}$ is the privacy loss random variable with probability density function $f(x)=\log\frac{\Pr{\mathcal{A}(D)=x}}{\Pr{\mathcal{A}(D')=x}}$.}
\end{definition}

\zcdp has a simple linear composition property~\cite{bun2016concentrated}.
Due to the space limitation, we defer the introduction of \zcdp's composition property to Appendix~\ref{app:zcdp_composition}.

\subsection{Problem Definition}
In this paper, we consider the following problem: \emph{Given a dataset \doo, we want to generate a synthetic dataset \ds that is statistically similar to \doo.}
Generating synthetic dataset \ds allows data analyst to handle arbitrary kinds of data analysis tasks on the same set of released data, which is more general than prior work focusing on optimizing the output for specific tasks (\eg,~\cite{qardaji2014priview,xiao2010differential,abadi2016deep,li2010optimizing}).

More formally, a dataset $D$ is composed of $n$ records each having $d$ attributes.  
The synthetic dataset \ds is said to be similar to \doo if $f(\ds)$ is close to $f(\doo)$ for any function $f$.  
In this paper, we consider three statistical measures: marginal queries, range queries, and classification models.
In particular, a marginal query captures the joint distribution of a subset of attributes.
A range query counts the number of records whose corresponding values are within the given ranges.
Finally, we can also use the synthetic dataset to train classification models and measure the classification accuracy.

\begin{table*}
    \footnotesize
	\centering
	\begin{tabular}{c|c|c|c|c}
		\toprule
		\backslashbox{Method}{Step}	& Marginal Selection & Noise Addition & Post Processing & Data Synthesis \\ 
		\midrule
		\priview~\cite{qardaji2014priview} & Covering design & Equal budget + Laplace & Max-entropy Estimation & -\\
		\privbayes~\cite{zhang2017privbayes} & Bayesian network + Info Gain (EM) & Equal budget + Laplace & - & Sampling \\
		\pgm~\cite{mckenna2019graphical} & - (not dense) & Equal budget + Gaussian & Markov Random Field & Sampling \\
		\method & Optimization + Greedy & Weighted budget + Gaussian & Consistency & \gum\\
		\bottomrule
	\end{tabular}
	\caption{Summary of existing methods on different steps.  The four steps are all new.  Our marginal selection method enables private auto selection of marginals.  \gum enables usage of dense graphical model.}
	\label{tbl:framework_summary}
\end{table*}

\section{A Framework of Private Data Synthesis}
\label{sec:framework}

\revision{}{In this section, we first propose a general framework for generating differentially private synthetic datasets, and then review some existing studies in this framework. 
\method follows this framework and proposes novel techniques for each of the component in the framework.}

To generate the synthetic dataset in a differentially private way, 
one needs to first transform the task 
to estimate a function $f$ with low sensitivity $\Delta_f$.
One straightforward approach is to obtain the noisy full distribution, \ie, the joint distribution of all attributes.
Given the detailed information about the distribution, one can then generate a synthetic dataset by sampling from the distribution.
However, when there are many attributes in the dataset, 
computing or even storing the full distribution requires exponentially large space. 
To overcome this issue, one promising approach is to estimate many low-degree joint distributions, also called marginals, which are distributions of only a subset of attributes.  
More specifically, to generate a synthetic dataset, there are four steps: (1) marginal selection, (2) noise addition, (3) post-processing, and (4) data synthesis. 

The current best-performing approaches on private data synthesis all follow this approach.  
Table~\ref{tbl:framework_summary} summarizes these four steps of existing work and our proposed method.  In what follows, we review these steps in the reverse order.

\subsection{Data Synthesis}

To synthesize a dataset, existing work uses graphical models to model the generation of the data.  
In particular, \privbayes~\cite{zhang2017privbayes} uses a differentially private Bayesian network.  It is a generative model that can be represented by a directed graph.  In the graph, each node $v$ represents an attribute, and each edge from $u$ to $v$ corresponds to $\Pr{v|u}$, the probability of $u$ causing $v$.  As each attribute can take multiple values, all possible $\Pr{v=y|u=x}$ are needed.  When a node $v$ has more than one nodes $U=\{u_1,\ldots,u_k\}$ connected to it, $\Pr{v|U}$ is needed to sample $v$.  Because the causality is a single-direction relationship, the graph cannot contain cycles.
To sample a record, we start from the node with in-degree $0$.  We then traverse the graph to obtain the remaining attributes following the generation order specified by the Bayesian network.

More recently, \cite{mckenna2019graphical} proposed to sample from 
differentially private Markov Random Field (MRF).
Different from Bayesian network, MRF is represented by undirected graphs, and each edge $u, v$ contains the joint distribution $\Pr{v,u}$.  
Moreover, cycles or even cliques are allowed in this model.  
The more complex structures enable capturing higher dimensional correlations, but will make the sampling more challenging.  
In particular, one first merge cliques into nodes and form a tree structure, which is called junction tree.  
The data records can then be sampled from it.  
The main shortcoming of \pgm is that, when the graph is dense, the domain of cliques in the junction tree could be too large to handle.

\subsection{Marginal Selection}
To build a graphical model, joint distributions in the form of $\Pr{v,u}$ are needed (note that conditional distributions $\Pr{v|u}$ can be calculated from joint distributions).
The goal is to capture all the joint distributions.  However, by the composition property of DP, having more marginals leads to more noise in each of them.  We do not want to select too many marginals which leads to excessive noise on each of them.  

\privbayes chooses the marginals by constructing the Bayesian network.  
In particular, it first randomly assigns an attribute as the first node, and then selects other attributes one by one using Exponential Mechanism (EM, refer to Appendix~\ref{app:em}).  
The original Bayesian network uses mutual information as the metric to select the most correlated marginals.  
In the setting of DP, the sensitivity for mutual information is high.  
To reduce sensitivity, the authors of~\cite{zhang2017privbayes} proposed a function that is close to mutual information.

Another method \priview~\cite{qardaji2014priview} uses a data independent method to select the marginals.  
In particular, a minimal set of marginals are selected so that all pairs or triples of attributes are contained in some marginal.  
When some attributes are independent, capturing the relationship among them actually increases the amount of noise.  
This approach cannot scale with the number of attributes $d$.

\mypara{Noise Addition}
Given the marginals, the next step is to add noise to satisfy DP.  The classic approach is to split the privacy budget equally into those marginals and add Laplace noise (refer to Appendix~\ref{app:lap}).

\mypara{Post Processing}
The DP noise introduces inconsistencies, including (1) some estimated probabilities being negative, (2) the estimated probabilities do not sum up to $1$, and (3) two marginals that contain common attributes exist inconsistency.

In \privbayes, negative probabilities are converted to zeros.  In \pgm, consistencies are implicitly handled by the estimation procedure of the Markov Random Field.

\section{Differentially Private Marginal Selection}
\label{sec:marg_select}

In the phase of obtaining marginals, there are two sources of errors.  
One is information loss when some marginals are missed; the other is noise error incurred by DP.  
\privbayes chooses few marginals; as a result, useful correlation information from other marginals is missed.  
On the other hand, \priview is data-independent and tries to cover all the potential correlations; and when there are more than a few dozen attributes, the DP noise becomes too high.

To balance between the two kinds of information loss, we propose an effective algorithm \marginal that is able to choose marginals that capture more useful correlations even under very low privacy budget.

\subsection{Dependency Measurement}
\label{subsec:depend_measure}
To select marginals that capture most of the correlation information, one needs a metric to measure the correlation level.
In Bayesian network, mutual information is used to capture pair-wise correlation.  As the sensitivity for mutual information is high, the authors of~\cite{zhang2017privbayes} proposed a function that can approximate the mutual information.  
However, the function is slow (quadratic to the number of users in the dataset) to compute.

To compute correlation in a simple and efficient way, in this subsection, we propose a metric which we call \margselectfull (\margselect for short).  For any two attributes $a, b$, \margselect calculates the $\ell_1$ distance between the 2-way marginal $\mathsf{M}_{a,b}$ and 2-way marginal generated assuming independence $\mathsf{M}_a \times \mathsf{M}_b$, where a marginal $\mathsf{M}_{A}$ specified by a set of attributes $A$ is a frequency distribution table, showing the frequency with 
each possible combination of values for the attributes, and $\times$ denote the outer product, \ie, 
$
\margselect_{a,b}=|\mathsf{M}_{a,b} - \mathsf{M}_a \times \mathsf{M}_b|_1
$.

\begin{figure}[t]
    \footnotesize
    \subfloat[$1$-way marginal for gender.]{
		\begin{minipage}[t]{0.23\textwidth}
			\centering
			\begin{tabular}[c]{c|c}
				\toprule \label{table:m1}
				$v$					& $\mathsf{M_{\mbox{gender}}}(v)$ \\ 
				\midrule
				$\langle$male,$*\rangle$    & 0.40    \\ 
				$\langle$female,$*\rangle$  & 0.60      \\ 
				\bottomrule
			\end{tabular}
		\end{minipage}
	}
	\subfloat[$1$-way marginal for age.]{
		\begin{minipage}[t]{0.23\textwidth}
			\centering
			\begin{tabular}[c]{c|c}
				\toprule \label{table:m2}
				$v$					& $\mathsf{M_{\mbox{age}}}(v)$ \\ 
				\midrule
				$\langle*$,teenager$\rangle$    & 0.20    \\ 
				$\langle*$,adult   $\rangle$    & 0.30      \\ 
				$\langle*$,elderly$\rangle$     & 0.50      \\ 
				\bottomrule
			\end{tabular}
		\end{minipage}
	} \\
	\subfloat[$2$-way marginal assume indepent]{
	    \label{subfig:marginal_dependence}
		\begin{minipage}[t]{0.23\textwidth}
			\begin{tabular}[c]{c|c}
				\toprule \label{table:f}
				$v$					&  \\ 
				\midrule
				$\langle$male, teenager$\rangle$    & 0.08    \\ 
				$\langle$male, adult$\rangle$       & 0.12      \\ 
				$\langle$male, elderly$\rangle$     & 0.20    \\ 
				$\langle$female, teenager$\rangle$  & 0.12    \\ 
				$\langle$female, adult$\rangle$     & 0.18  \\ 
				$\langle$female, elderly$\rangle$   & 0.30      \\ 
				\bottomrule
			\end{tabular}
		\end{minipage}
	}
	\subfloat[Actual $2$-way marginal]{
	    \label{subfig:marginal_actual}
		\begin{minipage}[t]{0.23\textwidth}
			\begin{tabular}[c]{c|c}
				\toprule \label{table:f}
				$v$					&  \\ 
				\midrule
				$\langle$male, teenager$\rangle$    & 0.10    \\ 
				$\langle$male, adult$\rangle$       & 0.10      \\ 
				$\langle$male, elderly$\rangle$     & 0.20    \\ 
				$\langle$female, teenager$\rangle$  & 0.10    \\ 
				$\langle$female, adult$\rangle$     & 0.20  \\ 
				$\langle$female, elderly$\rangle$   & 0.30      \\ 
				\bottomrule
			\end{tabular}
		\end{minipage}
	} 
    \caption{Example of the calculation of \margselect.}
    \label{fig:example_indif}
\end{figure}

Figure \ref{fig:example_indif} gives an example to illustrate the calculation of \margselect.
The $2$-way marginal in Figure \ref{subfig:marginal_dependence} is directly calculated by the $1$-way marginal of gender and age, without analyzing the dataset; and Figure \ref{subfig:marginal_actual} gives the actual $2$-way marginal.
In this example,
$\margselect = 0.08 \cdot n$, where $n$ is the number of records.
The advantage of using \margselect is that it is easy to compute, and it has low sensitivity in terms of its range, $[0,2n]$:
\begin{lemma}
\label{lemma:l1_sensitivity}
The sensitivity of \margselect metric is $4$: $\Delta_{\margselect}=4$.
\end{lemma}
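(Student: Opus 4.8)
The plan is to pass to the \emph{unnormalized} (count) version of the metric, so that ``adding one record'' becomes a clean perturbation. For a dataset $D$ on the pair $a,b$, write $c_{ij}$ for the number of records with $a=i,b=j$, write $p_i=\sum_j c_{ij}$ and $q_j=\sum_i c_{ij}$ for the two one-way count marginals, and $n=\sum_{ij}c_{ij}$. Since $\mathsf{M}_{a,b}$ has entries $c_{ij}/n$ while $\mathsf{M}_a\times\mathsf{M}_b$ has entries $(p_i/n)(q_j/n)$, and the $\ell_1$ distance is measured on the count scale (range $[0,2n]$),
\[
\margselect_{a,b}(D)=\sum_{i,j}\abs{\,c_{ij}-\frac{p_i q_j}{n}\,}.
\]
Now take neighbours $D$ and $D'=D+r$, where the added record $r$ has value $i_0$ on $a$ and $j_0$ on $b$, and set $\alpha_i=\ind[i=i_0]$, $\beta_j=\ind[j=j_0]$, so that $\alpha_i\beta_j$ is the indicator of the changed cell. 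In $D'$ the cell counts become $c_{ij}+\alpha_i\beta_j$, the one-way marginals become $p_i+\alpha_i$ and $q_j+\beta_j$, and the total becomes $n+1$. Subtracting the two sums term by term and using $\bigl|\,|x|-|y|\,\bigr|\le|x-y|$,
\[
\abs{\margselect_{a,b}(D')-\margselect_{a,b}(D)}\;\le\;\sum_{i,j}\abs{\,\alpha_i\beta_j-E_{ij}\,},\qquad E_{ij}:=\frac{(p_i+\alpha_i)(q_j+\beta_j)}{n+1}-\frac{p_i q_j}{n}.
\]
Since $\sum_{ij}\alpha_i\beta_j=1$, the whole statement reduces to proving $\sum_{ij}\abs{E_{ij}}\le 3$.

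The key step is to split $E_{ij}$ into a ``global rescaling'' part and an ``added-mass'' part:
\[
E_{ij}=-\frac{p_i q_j}{n(n+1)}+\frac{p_i\beta_j+\alpha_i q_j+\alpha_i\beta_j}{n+1}.
\]
Summing the absolute values of the four pieces separately and using $\sum_i p_i=\sum_j q_j=n$ together with $\sum_i\alpha_i=\sum_j\beta_j=1$ gives $\sum_{ij}p_iq_j/(n(n+1))=n/(n+1)$, $\sum_{ij}p_i\beta_j/(n+1)=n/(n+1)$, $\sum_{ij}\alpha_iq_j/(n+1)=n/(n+1)$, and $\sum_{ij}\alpha_i\beta_j/(n+1)=1/(n+1)$, hence $\sum_{ij}\abs{E_{ij}}\le (3n+1)/(n+1)<3$. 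Combining with the reduction above yields $\abs{\margselect_{a,b}(D')-\margselect_{a,b}(D)}<4$, so $\Delta_{\margselect}=4$ is a valid (and, in fact, the tightest constant) sensitivity bound. The bound is essentially tight: if $D$ consists of $n$ identical records with $(a,b)=(0,0)$ and $D'=D+r$ with $r=(1,1)$, then $\margselect_{a,b}(D)=0$ while $\margselect_{a,b}(D')=4n/(n+1)\to 4$, so no constant smaller than $4$ works.

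The step I expect to be the main obstacle — and the reason the lemma is not an immediate consequence of ``one record touches only one row and one column'' — is controlling the independence term $p_iq_j/n$: it depends on \emph{every} count through the normalizer $n$, so adding a record perturbs all (up to quadratically many) cells of $\mathsf{M}_a\times\mathsf{M}_b$ simultaneously. The triangle-inequality split into a rescaling part plus an added-mass part is exactly what makes this global perturbation contribute only a bounded total, since $\sum_{ij}p_iq_j=n^2$ collapses the rescaling double sum to a constant. A secondary point worth getting right is that the metric lives on the count scale rather than the probability scale; matching the worked example ($\margselect=0.08\,n$) pins down the normalization, and hence the value $4$ rather than something like $4/n$.
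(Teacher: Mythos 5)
Your proof is correct and follows essentially the same route as the paper's: both work on the count scale, apply the reverse triangle inequality cell by cell, and then bound the total perturbation of the independence term $p_iq_j/n$, the only difference being that you apply a second triangle inequality to the four-piece decomposition (giving $(4n+2)/(n+1)<4$) where the paper evaluates the resulting sum exactly by sign analysis (giving $4(n-a_x)(n-b_y)/(n(n+1))\le 4$). Your tightness example showing the constant $4$ cannot be improved is a worthwhile addition that the paper's proof omits.
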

The proof is deferred to Appendix~\ref{appendix:proofs}.
Given $d$ attributes, we use the Gaussian mechanism to privately obtain all \margselect scores.
\revision{}{To evaluate the impact of noise, one should consider both sensitivity and range of the metrics.
We theoretically and empirically analyze the noise-range ratio of entropy-based metrics and \margselect in Appendix~\ref{app:comparison_metrics}, and show that \margselect has smaller noise-range ratio than entropy-based metrics.}
More specifically, given the overall privacy parameters $(\epsilon, \delta)$, we first compute the parameter $\rho$ using Proposition~\ref{prop:CDPtoDP}.  We then use $\rho'<\rho$ for publishing all the \margselect scores for all $m={d\choose 2}$ pairs of attributes.  In particular, with the composition theory of \zcdp, we can show that publishing all \margselect scores with Gaussian noise $\mathcal{N}(0, 8m/\rho' \mathbf{I})$ satisfies $\rho'$-\zcdp (its proof is also deferred to Appendix~\ref{appendix:proofs}).
\begin{theorem}
\label{thm:dependency_privacy}
Given $d$ attributes, publishing all $m=d(d-1)/2$ \margselect scores with Gaussian noise $\mathcal{N}(0, 8m/\rho' \mathbf{I})$ satisfies $\rho'$-\zcdp.
\end{theorem}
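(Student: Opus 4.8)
The plan is to reduce the statement to the basic \zcdp guarantee of the Gaussian mechanism together with the per-pair sensitivity bound already established in Lemma~\ref{lemma:l1_sensitivity}. Collect all the dependency scores into one vector-valued query $f$, so that $f(D)$ has $m = d(d-1)/2$ coordinates, one per unordered pair $\{a,b\}$ of attributes, and the released object is $f(D) + \calN(0,\, (8m/\rho')\mathbf{I})$. The first step is to bound the $\ell_2$ (global) sensitivity $\Delta_f$. Take neighboring datasets $D \simeq D'$, where $D'$ is obtained from $D$ by adding or removing one record $r$. By Lemma~\ref{lemma:l1_sensitivity}, each coordinate changes by at most $4$, so summing the squared per-coordinate changes over the $m$ pairs gives $\|f(D) - f(D')\|_2^2 \le 16m$, hence $\Delta_f \le 4\sqrt{m}$.

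The second step is arithmetic bookkeeping. The Gaussian mechanism that releases $f(D) + \calN(0,\, \Delta_f^2 \sigma^2 \mathbf{I})$ is $\frac{1}{2\sigma^2}$-\zcdp~\cite{bun2016concentrated}. Here the per-coordinate noise variance is $8m/\rho'$, so we set $\Delta_f^2 \sigma^2 = 16m\,\sigma^2 = 8m/\rho'$, i.e.\ $\sigma^2 = \frac{1}{2\rho'}$, and therefore $\frac{1}{2\sigma^2} = \rho'$, which gives the claimed $\rho'$-\zcdp. Equivalently, one can view the release as the composition of $m$ independent scalar Gaussian mechanisms, the $\{a,b\}$-th having sensitivity $4$ and noise variance $8m/\rho'$, hence being $\frac{16}{2\cdot 8m/\rho'} = \frac{\rho'}{m}$-\zcdp; the linear composition property of \zcdp~\cite{bun2016concentrated} then yields $m\cdot \frac{\rho'}{m} = \rho'$, the same bound. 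Both routes are interchangeable and I would present whichever is shorter given the propositions already available in the appendix.

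I do not expect a genuine obstacle here. The one point that needs care is that Lemma~\ref{lemma:l1_sensitivity} controls each coordinate of $f$ in the $\ell_\infty$ sense, whereas the Gaussian mechanism is calibrated to $\ell_2$ sensitivity; the factor $\sqrt{m}$ (and hence the $8m$ rather than $8$ in the variance) is exactly the price of perturbing all $\binom{d}{2}$ marginals simultaneously. It is worth noting that the bound $\Delta_f \le 4\sqrt{m}$ is loose, since adding or removing a single record only alters the two one-way marginals containing $r$ and so cannot change every pairwise score by the full amount $4$; but the loose bound already suffices for the stated guarantee, so no sharper sensitivity analysis is needed for the theorem.
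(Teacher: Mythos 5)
Your proposal is correct and matches the paper's argument: the paper proves the theorem exactly via your second route, treating each score as a scalar Gaussian mechanism with sensitivity $4$ and noise variance $8m/\rho'$, hence $\rho'/m$-\zcdp by Proposition~\ref{prop:Gaussian-zcdp}, and then invoking linear composition (Proposition~\ref{prop:zcdp-compose}) over the $m$ scores to get $\rho'$-\zcdp. Your first route via the joint $\ell_2$ sensitivity $\Delta_f \le 4\sqrt{m}$ is an equivalent repackaging of the same computation, so nothing of substance differs.
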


\subsection{Marginal Selection}
\label{subsec:marginal_selection}
Given the dependency scores \margselect, the next step is to choose the pairs with high correlation, and use the Gaussian mechanism to publish marginals on those pairs.  
In this process, there are two error sources.
One is the noise error introduced by the Gaussian noise; the other is the dependency error when some of the marginals are not selected.
If we choose to publish all $2$-way marginals, the noise error will be high and there is no dependency error; when we skip some marginals, the error for those marginals will be dominated by the dependency error.

\mypara{Problem Formulation}
Given $m$ pairs of attributes, each pair $i$ is associated with an indicator variable $x_i$ that equals $1$ if pair $i$ is selected, and $0$ otherwise.  
Define $\psi_i$ as the noise error introduced by the Gaussian noise and $\phi_i$ as its dependency error.
The marginal selection problem is formulated as the following optimization problem:
\begin{align*}
\mathsf{minimize~} & \sum_{i=1}^m \left[\psi_i x_i + \phi_i(1 - x_i)\right] \\
\mathsf{subject~to~} & x_i \in \{0, 1\}
\end{align*}

\revision{}{Notice that the dependency error $\phi_i$ has positive correlation with $\margselect_i$, \ie, larger $\margselect_i$ incurs larger $\phi_i$.
Thus, we approximate $\phi_i$ as $\margselect_i + \mathcal{N}(0, m^2\rho'^2 \mathbf{I})$, and it is fixed in the optimization problem.
}

The noise error $\psi_i$ is dependent on the privacy budget $\rho_i$ allocated to the pair $i$.  In particular, we first show that given the true marginal $\mathsf{M}_i$, we add Gaussian noise with scale $1/\rho_i$ to achieve $\rho_i$-\zcdp.  
\begin{theorem}
\label{thm:marginal_gauss}
(1) The marginal $\mathsf{M}$ has sensitivity $\Delta_{\mathsf{M}}=1$; (2) Publishing marginal $\mathsf{M}$ with noise $\mathcal{N}(0, 1/2\rho \mathbf{I})$ satisfies $\rho$-\zcdp.
\end{theorem}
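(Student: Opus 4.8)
For part~(1), the plan is to read $\mathsf{M}$ as (a flattening of) the contingency table of \emph{counts} over the joint domain of the attributes that specify the marginal: coordinate $v$ of $\mathsf{M}(D)$ is the number of records of $D$ whose projection onto those attributes equals the value combination $v$. Under the neighboring relation $D \simeq D'$ used in this paper, $D'$ is obtained from $D$ by adding or deleting a single record $r$, and $r$ contributes to exactly one cell of the table, namely the one indexed by the projection of $r$. Hence $\mathsf{M}(D)$ and $\mathsf{M}(D')$ differ in exactly one coordinate, there by exactly $1$, so $\|\mathsf{M}(D)-\mathsf{M}(D')\|_2 = 1$ (and likewise the $\ell_1$ distance is $1$). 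Taking the maximum over all neighboring pairs gives $\Delta_{\mathsf{M}} = 1$, independently of how many attributes define the marginal or how large their domains are. The only point worth stating explicitly is the convention that $\mathsf{M}$ records unnormalized counts; the normalized (probability) version has $\ell_2$ sensitivity scaling like $1/n$ and is handled separately when probabilities are actually needed.

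For part~(2), the plan is to instantiate the Gaussian mechanism with this sensitivity. The released quantity is $\mathsf{M}(D) + \mathcal{N}(0, \frac{1}{2\rho}\mathbf{I})$, so on neighboring inputs the two output distributions are isotropic Gaussians $\mathcal{N}(\mathsf{M}(D), \sigma^2\mathbf{I})$ and $\mathcal{N}(\mathsf{M}(D'), \sigma^2\mathbf{I})$ with $\sigma^2 = \frac{1}{2\rho}$ and centers at $\ell_2$-distance at most $\Delta_{\mathsf{M}} = 1$ by part~(1). Using the closed form for the R\'enyi divergence between two Gaussians of equal covariance, $\mathcal{D}_\alpha\!\big(\mathcal{N}(\mu,\sigma^2\mathbf{I})\,\|\,\mathcal{N}(\nu,\sigma^2\mathbf{I})\big) = \frac{\alpha\,\|\mu-\nu\|_2^2}{2\sigma^2}$ (see~\cite{bun2016concentrated}), we obtain for every $\alpha \in (1,\infty)$ that $\mathcal{D}_\alpha(\mathcal{A}(D)\|\mathcal{A}(D')) \le \frac{\alpha \cdot 1}{2\sigma^2} = \rho\alpha$, which is exactly the defining inequality of $\rho$-\zcdp. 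Equivalently, one may invoke the standard statement that the Gaussian mechanism adding $\mathcal{N}(0,\Delta_f^2\sigma_0^2\mathbf{I})$ is $\frac{1}{2\sigma_0^2}$-\zcdp and specialize to $\Delta_f = 1$, $\sigma_0^2 = \frac{1}{2\rho}$.

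I do not anticipate a genuine obstacle: both parts are essentially one-line arguments. The only things that require care are (i) fixing the convention that $\mathsf{M}$ is a table of counts rather than frequencies, so that ``exactly one cell changes by one'' yields sensitivity exactly $1$ rather than an $n$-dependent quantity, and (ii) making sure the \zcdp bound is verified for all orders $\alpha$ simultaneously, which the Gaussian--R\'enyi identity delivers automatically since the bound it gives is already linear in $\alpha$. Everything else is a direct substitution, and the same boilerplate will be reused when allocating per-marginal budgets $\rho_i$ in the subsequent optimization.
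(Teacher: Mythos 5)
Your proposal is correct and follows essentially the same route as the paper: the marginal is treated as a table of counts so that one added or removed record changes a single cell by one, giving $\ell_2$ sensitivity $1$, and part~(2) then follows by instantiating the standard Gaussian-mechanism-to-\zcdp statement (Proposition~\ref{prop:Gaussian-zcdp}) with $\Delta_f=1$ and $\sigma^2=\frac{1}{2\rho}$. Your explicit remark about fixing the counts-versus-frequencies convention is a worthwhile clarification, since the paper's own example tables display normalized frequencies while the proof relies on unnormalized counts.
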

The proof of Theorem~\ref{thm:marginal_gauss} is deferred to Appendix~\ref{appendix:proofs}.  

To make $\psi_i$ and $\phi_i$ comparable, we use the expected $\ell_1$ error of the Gaussian noise on marginal $i$.  
That is, if the marginal size is $c_i$, after adding Gaussian noise with scale $\sigma_i$, we expect to see the $\ell_1$ error of $c_i\sqrt{\frac{2}{\pi}}\sigma_i$.  
Thus, with privacy budget $\rho_i$, $\psi_i=c_i\sqrt{\frac{1}{\pi\rho_i}}$.
The optimization problem is transformed to:
\setlength{\belowdisplayskip}{0pt} \setlength{\belowdisplayshortskip}{0pt}
\setlength{\abovedisplayskip}{0pt} \setlength{\abovedisplayshortskip}{0pt}
\begin{align*}
\mathsf{minimize~} & \sum_{i=1}^m \left[c_i\sqrt{\frac{1}{\pi\rho_i}} x_i + \phi_i(1 - x_i)\right] \\
\mathsf{subject~to~} & x_i \in \{0, 1\}\\
                    & \sum x_i\rho_i=\rho
\end{align*}

\mypara{Optimal Privacy Budget Allocation}
We first assume the pairs are selected (\ie, variables of $x_i$ are determined), and we want to allocate different privacy budget to different marginals to minimize the overall noise error.
In this case, the optimization problem can be rewritten as:
\setlength{\belowdisplayskip}{0pt} \setlength{\belowdisplayshortskip}{0pt}
\setlength{\abovedisplayskip}{0pt} \setlength{\abovedisplayshortskip}{0pt}
\begin{align*}
\mathsf{minimize~} & \sum_{i:x_i=1} c_i \sqrt{\frac{1}{\rho_i}} \\
\mathsf{subject~to~} & \sum_{i:x_i=1} \rho_i=\rho
\end{align*}
For this problem, we can construct the Lagrangian function $\mathcal{L}=\sum_{i} \frac{c_i}{\sqrt{\rho_i}} + \mu \cdot \left( \sum_{i}\rho_i - \rho \right)$.
By taking partial derivative of $\mathcal{L}$ for each of $\rho_i$, we have $\rho_i = \left( \frac{2\mu}{c_i} \right)^{-2/3}$.
The value of $\mu$ can be solved by equation $\sum_{i} \rho_i=\rho$.
As a result, $\mu = \frac{1}{2} \cdot \left( \frac{\rho}{\sum_{i} c_i^{2/3}} \right)^{-3/2}$, and we have
\setlength{\belowdisplayskip}{0pt} \setlength{\belowdisplayshortskip}{0pt}
\setlength{\abovedisplayskip}{0pt} \setlength{\abovedisplayshortskip}{0pt}
\begin{align}
\label{eq:optimal_allocation}
\rho_i = \frac{c_i^{2/3}}{\sum_{j} c_j^{2/3}} \cdot \rho
\end{align}
That is, allocating privacy budget proportional to the $\frac{2}{3}$ power of the number of cells achieves the minimum overall noise error.

\mypara{A Greedy Algorithm to Select Pairs}
We propose a greedy algorithm to select pairs of attributes, as shown in Algorithm \ref{alg:marginal_selection}. 
Given the \margselect scores of all pairs of attributes $\langle\phi_i\rangle$, size of all marginals $\langle c_i\rangle$, and the total privacy budget $\rho$, the goal is to determine $x_i$ for each $i\in\{1,\ldots,m\}$, or equivalently, output a set of pairs $X=\{i: x_i=1\}$ that minimize the overall error.
We handle this problem by iteratively including marginals that give the maximal utility improvement.  
In particular, in each iteration $t$, we select one marginal that brings the maximum improvement to the overall error.
More specifically, we consider each marginal $i$ that is not yet included in $X$ (\ie, $i\in\bar{X}$, where $\bar{X}=\{1,\ldots,m\}\setminus X$): In Line~\ref{ln:allocate}, we allocate the optimal privacy budget $\rho_i$ according to Equation~\ref{eq:optimal_allocation}. 
We then calculate the error in Line~\ref{ln:calculate}, and select one with maximum utility improvement (in Line~\ref{ln:select}).  
After the marginal is selected, we then include it in $X$.
The algorithm terminates when the overall error no longer improves.
The algorithm is guaranteed to terminate since the noise error would gradually increase when more marginals are selected.
When the noise error is larger than any of the remaining dependency error, the algorithm terminates.

\begin{algorithm}[!h]
\footnotesize
\SetCommentSty{small}
\LinesNumbered
\caption{Marginal Selection Algorithm}
\label{alg:marginal_selection}

\KwIn{Number of pairs $m$, privacy budget $\rho$, dependency error $\langle\phi_i\rangle$, marginal size $\langle c_i\rangle$;}
\KwOut{Selected marginal set $X$;}

$X \gets \varnothing$; 
$t \gets 0$;
$E_0 \gets \sum_{i \in \bar{X}} \phi_i$;

\While{True}
{
    \ForEach{marginal $i \in \bar{X}$}
    {
        Allocate $\rho$ to marginals $j \in X \cup \{i\}$;\label{ln:allocate}\\
        $E_t(i) = \sum_{j \in X \cup \{i\}} c_j\sqrt{\frac{1}{\pi\rho_j}} + \sum_{j \in \bar{X} \setminus \{i\}} \phi_j$\label{ln:calculate};
    }

    $\ell \gets \argmin_{i \in \bar{X}} E_t(i)$\label{ln:select};
    
    $E_t \gets E_t(\ell)$;
    
    \If{$E_t \geq E_{t-1}$}
    {
        \textbf{Break}
    }
    $X \gets X \cup \{l\}$; \\
    $t\gets t+1$;
}

\end{algorithm}
\vspace{-1em}

\subsection{Post Processing}
The purpose of post processing is to ensure the noisy marginals are consistent.
By handling such inconsistencies, we avoid impossible cases and ensure there exists a solution (\ie, a synthetic dataset) that satisfies all the noisy marginals.  
For the case when multiple marginals contain the same set of attributes, and their estimations on the shared attributes do not agree, we use the weighted average method~\cite{DWH+11,qardaji2014priview}.  
Note that~\cite{DWH+11,qardaji2014priview} both assume the privacy budget is evenly distributed.  
We extend it to the uneven case.  

\mypara{Consistency under Uneven Privacy Budget Allocation}
When different marginals have some attributes in common, those attributes are actually estimated multiple times.  Utility will increase if these estimates are utilized together.  
For example, when some marginals are estimated twice, the mean of the estimates is actually more accurate than each of them.
More formally, assume a set of attributes $A$ is shared by $s$ marginals $\mathsf{M}_1, \mathsf{M}_2, \ldots, \mathsf{M}_s$, where $A=\mathsf{M}_1 \cap \ldots \cap \mathsf{M}_s$.  
We can obtain $s$ estimates of ${A}$ by summing from cells in each of the marginals.

In \cite{qardaji2014priview}, the authors proposed an optimal method to determine the distribution of the weights when privacy budget is evenly distributed among marginals.
The main idea is to take the weighted average of estimates from all marginals in order to minimize the variance of marginals on $A$.
\revision{}{We adopt the weighted average technique, and extend it to hand the case where privacy budget is unevenly allocated.}
In particular, we allocate a weight $w_i$ for each marginal $i$.  
The variance of the weighted average can be represented by $\sum_{i} w_i^2\cdot \frac{g_i}{\rho_i}$, where $\rho_i$ is the privacy budget and $g_i$ is the number of cells that contribute to one cell of the marginal on $A$.  
Here the Gaussian variance is $1/\rho_i$.  
By summing up $g_i$ cells, and multiplying the result by $w_i$, we have the overall variance $w_i^2\frac{g_i}{\rho_i}$.  
The weights should add up to $1$.  
More formally, we have the following optimization problem:

\setlength{\belowdisplayskip}{0pt} \setlength{\belowdisplayshortskip}{0pt}
\setlength{\abovedisplayskip}{0pt} \setlength{\abovedisplayshortskip}{0pt}
\begin{align*}
\mathsf{minimize } & \sum_{i} w_i^2\cdot \frac{g_i}{\rho_i} \\
\mathsf{subject~to } & \sum_{i}w_i=1
\end{align*}

By constructing the Lagrangian function and following the same derivative procedure as we did for obtaining optimal $\rho_i$ (Equation~\eqref{eq:optimal_allocation}), we have $w_i=\frac{\rho_i/g_i}{\sum_i \rho_i/g_i}$ is the optimal strategy.

\mypara{Overall Consistency}
In addition to the inconsistency among marginals, some noisy marginals may contain invalid distributions (\ie, some probability estimations are negative, and the sum does not equal to $1$).  
Given the invalid distribution, it is known that projecting it to a valid one with minimal $\ell_2$ distance achieves the maximal likelihood estimation.  
This is discovered in different settings (\eg, \cite{lee2015maximum,wang2019consistent,aistats:Bassily19}); and there exists efficient algorithm for this projection.  

The challenge emerges when we need to handle the two inconsistencies simultaneously, one operation invalidate the consistency established in another one.  
We iterate the two operations multiple times to ensure both consistency constraints are satisfied.

\section{Synthetic Data Generation}
\label{sec:synthesizing}
Given a set of noisy marginals, the data synthesis step generates a new dataset \ds so that its distribution is consistent with the noisy marginals.  Existing methods~\cite{zhang2017privbayes,mckenna2019graphical} put these marginals into a graphical model, and use the sampling algorithm to generate the synthetic dataset.  As each record is sampled using the marginals, the synthetic dataset distribution is naturally consistent with the distribution.

The drawback of this approach is that when the graph is dense, existing algorithms do not work.  
To overcome this issue, we use an alternative approach.  
Instead of sampling the dataset using the marginals, we initialize a random dataset and update its records to make it consistent with the marginals.

\subsection{Strawman Method: Min-Cost Flow (\synflow)}
Given the randomly initiated dataset \ds, for each noisy marginal, we update \ds to make it consistent with the marginal.
A marginal specified by a set of attributes is a frequency distribution table for each possible combination of values for the attributes.
The update procedure can be modeled as a graph flow problem.
In particular, given a marginal, a bipartite graph is constructed.  
Its left side represents the current distribution on \ds; and the right side is for the target distribution specified by the marginal.  
Each node corresponds to one cell in the marginal and is associated with a number.  Figure~\ref{fig:mcfp} demonstrates an example of this flow graph.
Now in order to change \ds to make it consistent with the marginal, we change records in \ds.

\begin{figure}[ht]
    \centering

    \includegraphics[width=0.35\textwidth]{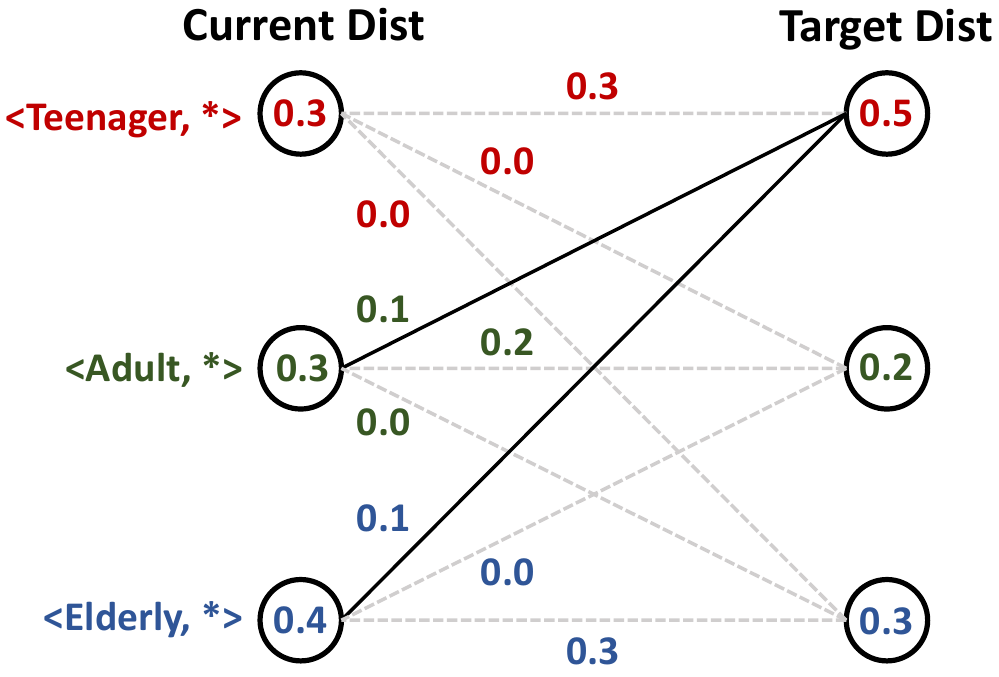}

    \caption{Running example of \synflow.  
    The left nodes represent current distribution from \ds; and the right nodes give the target distribution specified by the noisy marginal.  
    The min-cost flow is to move $0.1$ from adult to teenager, and $0.1$ from elderly to teenager.  
    To change the distribution, we find matching records from \ds and change their corresponding attributes.} 
    \label{fig:mcfp}
\end{figure}

The \synflow method enforces a min-cost flow in the graph and updates \ds by changing the values of the records on the flow.
For example, in Figure~\ref{fig:mcfp}, there are two changes to \ds.  
First, one third of the adults needs to be changed to teenagers.  
Note that we change only the related attribute and keep the other attributes the same.  Second, one fourth of the elderly are changed to teenager.  
We iterate over all the noisy marginals and repeat the process multiple times until the amount of changes is small.  
The intuition of using min-cost flow is that, the update operations make the minimal changes to \ds, and by changing the dataset in this minimal way, the consistency already established in \ds (with previous marginals) can be maintained.
The min-cost flow can be solved by the off-the-shelf linear programming solver, e.g.,~\cite{ahuja1988network}.

When all marginals are examined, we randomly shuffle the whole dataset \ds.  
Since the modifying procedure would invalidate the consistency established from previous marginals, \synflow needs to iterate multiple times to ensure that \ds is almost consistent with all marginals.

\begin{figure*}[!htpb]
    \footnotesize
    \centering
	\subfloat[Dataset before updating.]{
		\begin{tabular}[c]{c|c c c}
			\toprule \label{table:dataset}
			& Income  & Gender    & Age  \\ 
			\midrule
			$v_1$     & high          & male          & teenager     \\ 
			$v_2$     & high          & male          & adult     \\ 
			$v_3$     & \brown{high}  & \brown{male}  & \brown{adult} \\ 
			$v_4$     & \brown{high}  & \brown{male}  & \brown{teenager} \\ 
			$v_5$     & \blue{high}   & \blue{female} & \blue{elderly}  \\ 
			\bottomrule
		\end{tabular}
	}
	\subfloat[Marginal table for $\{$Income, Gender$\}$,
	where red and blue stands for over-counted and under-counted cells, respectively.
	]{
		\begin{tabular}[c]{c|c c}
			\toprule \label{table:m1}
			$v$		&  $\mathsf{S_{\{\mbox{I,G}\}}}(v)$ & $\mathsf{T_{\{\mbox{I,G}\}}}(v)$ \\ 
			\midrule
			$\langle$low, male,$*\rangle$    & 0.0   & 0.0    \\ 
			$\langle$low, female,$*\rangle$  & 0.0   & 0.0   \\
			$\langle$high, male,$*\rangle$   & \red{0.8}   & \red{0.2}    \\ 
			$\langle$high, female,$*\rangle$ & \blue{0.2}  & \blue{0.8}   \\ 
			\bottomrule
		\end{tabular}
	} 
	\subfloat[Dataset after updating.]{
		\begin{tabular}[c]{c|c c c}
			\toprule \label{table:dataset}
			& Income  & Gender    & Age   \\ 
			\midrule
			$v_1$   & high           & male      & teenager     \\ 
			$v_2$   & high           & male      & adult     \\ 
			$v_3$   & \blue{high}   & \blue{female}  & \blue{elderly} \\ 
			$v_4$   & \blue{high}   & \blue{female}  & \brown{teenager} \\ 
			$v_5$   & \blue{high}   & \blue{female}  & \blue{elderly}  \\ 
			\bottomrule
		\end{tabular}
	}
    \caption{Example of the synthesized dataset before and after updating procedure.  
    In (a), blue stands for the records to be added, and brown stands for the records to be changed.
    In (c), $v_4$ only changes income and gender attributes, while $v_3$ changes the whole record which is duplicated from $v_5$.
    Notice that in this example, we have $\alpha=2.0, \beta=0.5$ and the marginal distribution in (c) do not completely match $\mathsf{T_{\{\mbox{I,G}\}}}(v)$ of $[0.0, 0.0, 0.2, 0.8]$; instead, it becomes $[0.0, 0.0, 0.4, 0.6]$.
    }
    \label{fig:update_example}
\end{figure*}

\subsection{Gradually Update Method (\gum)}
\label{subsec:gum}
Empirically, we find that the convergence performance of \mcf is not good (we will demonstrate it via experiment in Section~\ref{sec:exp}).  
We believe that this is because \mcf always changes \ds to make it completely consistent with the current marginal in each step.
Doing this reduces the error of the target marginal close to zero, but increases the errors for other marginals to a large value.

To handle this issue, we borrow the idea of multiplicative  update~\cite{arora2012multiplicative} and propose a new approach that \underline{G}radually \underline{U}pdate \ds based on the \underline{M}arginals; and we call it \gum.  
\gum also adopts the flow graph introduced by \mcf, but differs from \mcf in two ways:  First, \gum does not make \ds fully consistent with the given marginal in each step.  
Instead, it changes \ds in a multiplicative way, so that if the original frequency in a cell is large, then the change to it will be more.  
In particular, we set a parameter $\alpha$, so that for cells that have values are lower than expected (according to the target marginal), we add at most $\alpha$ times of records, i.e., $\min\left\{ n^t - n^s, \alpha n^s \right\}$~\footnote{Notice that $\alpha$ could be greater than $1$ since $n^s < n^t$. 
In the experiments, we always set $\alpha$ to be less than $1$ to achieve better convergence performance.}, where $n^t$ is the number in the marginal and $n^s$ is the number from \ds.
On the other hand, for cells with values higher than expected, we will reduce $\min\left\{ n^s - n^t, \beta n^s \right\}$ records that satisfy it. 
As the total number of record is fixed, given $\alpha$, $\beta$ can be calculated.

Figure~\ref{fig:update_example} gives a running example.  
Before updating, we have 4 out of 5 records have the combination $\langle high, male \rangle$, and 1 record has $\langle high, female \rangle$.  
To get closer to the target marginal of 0.2 and 0.8 for these two cells,
we want to change 2 of the $\langle high, male \rangle$ records to be $\langle high, female \rangle$.
In this example, we have $\alpha=2.0, \beta=0.5$~\footnote{\revision{}{We have $\alpha=\frac{n^t-n^s}{n^s}$ for under-counted cells and $\beta=\frac{n^s-n^t}{n^s}$ for over-counted cells.
The number of records for under-counted cell $\langle$high, female,$*\rangle$ increase from $1$ to $3$; thus $\alpha=\frac{3-1}{1}=2$.
The number of records for over-counted cell $\langle$high, male,$*\rangle$ decrease from $4$ to $2$; thus $\beta=\frac{4-2}{4}=0.5$.}}
and do not completely match the target marginal of 0.2 and 0.8.
To this end, one approach is to simply change the Gender attribute value from male to female in these two records as in \mcf.  
We call this a {\bf Replace} operation.  Replacing will affect the joint distribution of other marginals, such as $\{Gender, Age\}$.
An alternative is to discard an existing $\langle high, male \rangle$ record, and {\bf Duplicate} an existing $\langle high, female \rangle$ record (such as $v_5$ in the example).  
Duplicating an existing record help preserve joint distributions between the changed attributes and attributes not in the marginal.  
However, Duplication will not introduce new records that can better reflect the overall joint distribution.  
In particular, if there is no record that currently has the combination $\langle high, female, elderly \rangle$, duplication cannot be used.  

Therefore, we need to use a combination of Replacement and Duplication (which is the case in Figure~\ref{fig:update_example}).
Furthermore, once the synthesized dataset is getting close to the distribution, we would prefer Duplication to Replacement, since at that time there should be enough records to reflect the distribution and Replacement disrupts the joint distribution between attributes in a marginal and those not in it.  
In Appendix~\ref{app:comparison_update}, we empirically compare different record updating strategies and validate that introducing the Duplication operation can effectively improve the convergence performance.

\subsection{Improving the Convergence}
\label{subsec:improve_convergence}
Given the general data synthesize method, we have several optimizations to improve its utility and performance.
First, to bootstrap the synthesizing procedure, we require each attribute of \ds follows the 1-way noisy marginals when we initialize a random dataset \ds.  

\mypara{Gradually Decreasing $\alpha$}
The update rate $\alpha$ should be smaller with the iterations to make the result converge.
From the machine learning perspective, gradually decreasing $\alpha$ can effectively improve the convergence performance.  
There are some common practices~\cite{decay} of setting $\alpha$.

\begin{itemize}[leftmargin=*]
    \setlength\itemsep{-0.5em}
    \item Step decay: $\alpha = \alpha_0 \cdot k^{\lfloor \frac{t}{s} \rfloor}$, where $\alpha_0$ is the initial value, $t$ is the iteration number, $k$ is the decay rate, and $s$ is the step size (decrease $\alpha$ every $s$ iterations).
    The main idea is to reduce $\alpha$ by some factor every few iterations.  
    
    \item Exponential decay: $\alpha = \alpha_0 \cdot e^{-kt}$, where $k$ is a hyperparameter.
    This exponentially decrease $\alpha$ in each iteration. 
    
    \item Linear decay: $\alpha = \frac{\alpha_0}{1+kt}$.
    
    \item Square root decay: $\alpha = \frac{\alpha_0}{\sqrt{1+kt}}$.
    
\end{itemize}

We empirically evaluate the performance of different decay algorithms in Appendix~\ref{app:comparison_decay}, and find that step decay is preferable in all settings.
The step decay algorithm is also widely used to update the step size in the training of deep neural networks~\cite{krizhevsky2012imagenet}.

\mypara{Attribute Appending}
The selected marginals $\mathcal{X}$ output by Algorithm~\ref{alg:marginal_combine} can be represented by a graph $\mathcal{G}$.
We notice that some nodes have degree $1$, which means the corresponding attributes are included in exactly one marginal.
For these attributes, it is not necessary to involve them in the updating procedure.  Instead, we could append them to the synthetic dataset \ds after other attributes are synthesized.
In particular, we identify nodes from $\mathcal{G}$ with degree $1$.  We then remove marginals associated with these nodes from $\mathcal{X}$.  The rest of the noisy marginals are feed into \gum to generate the synthetic data but with some attributes missing.  For each of these missed attributes, we sample a smaller dataset \ds' with only one attribute, and we concatenate \ds' to \ds using the marginal associated with this attribute if there is such a marginal; otherwise, we can just shuffle \ds' and concatenate it to \ds.
Note that this is a one time operation after \gum is done. No synthesizing operation is needed after this step.

\mypara{Separate and Join}
We observe that, when the privacy budget is low, the number of selected marginals is relatively small, and the dependency graph is in the form of several disjoint subgraphs.
In this case, we can apply \gum to each subgraph and then join the corresponding attributes.
The benefit of Separate and Join technique is that, the convergence performance of marginals in one subgraph would not be affected by marginals in other subgraph, which would improve the overall convergence performance.

\mypara{Filter and Combine Low-count Values}
If some attributes have many possible values while most of them have low counts or do not appear in the dataset.
Directly using these attributes to obtain pairwise marginals may introduce too much noise.
To address this issue, we propose to filter and combine the low-count values.
The idea is to spend a portion of privacy budget to obtain the noisy one-way marginals.
After that, we keep the values that have count above a threshold $\theta$.  
For the values that are below $\theta$, we add them up, if the total is below $\theta$, we assign 0 to all these values.  
If their total is above $\theta$, then we create a new value to represent all values that have low counts. 
After synthesizing the dataset, this new value is replaced by the values it represents using the noisy one-way marginal. 
The threshold is set as $\theta = 3 \sigma$, where $\sigma$ is the standard deviation for Gaussian noises added to the one-way marginals.

\vspace{-1em}
\subsection{Putting Things Together: \method}
Algorithm~\ref{alg:dpsyn} illustrates the overall workflow of \method.
We split the total privacy budget into three parts.
The first part is used for publishing all $1$-way marginals, intending to filter and combine the values with low count or do not exist.
The second part is used to differentially privately select marginals.
The marginal selection method \marginal consists of two components, \ie, $2$-way marginal selection (Algorithm \ref{alg:marginal_selection}) and marginal combine (Algorithm \ref{alg:marginal_combine}).
The third part is used to obtain the noisy combined marginals.
After obtaining the noisy combined marginals, we can use them to construct synthetic dataset \ds without consuming privacy budget, since this is a post processing procedure.

\begin{algorithm}[th]
\footnotesize
\SetCommentSty{small}
\LinesNumbered
\caption{\method}
\label{alg:dpsyn}

\KwIn{Private dataset \doo, privacy budget $\rho$;}
\KwOut{Synthetic dataset \ds;}

Publish 1-way marginals using GM with $\rho_1=0.1\rho$;

Filter values with estimates smaller than $3\sigma$;

Select 2-way marginals with Algorithm~\ref{alg:marginal_selection} and $\rho_2=0.1\rho$;

Combine marginals using Algorithm~\ref{alg:marginal_combine};

Publish combined marginals using GM with $\rho_3=0.8\rho$;

Make noisy marginals consistent;

Construct \ds using \gum;

\end{algorithm}

\vspace{-2em}
\section{Evaluation}
\label{sec:exp}
In this section, we first conduct a high-level end-to-end experiment to illustrate the effectiveness of \method.
Then, we evaluate the effectiveness of each step of \method by fixing other steps.
As a highlight, our method consistently achieves better performance than the state-of-the-art in all steps.

\subsection{Experimental Setup}

\mypara{Datasets}
We run experiments on the following four datasets. 
\vspace{-2em}
\begin{itemize}[leftmargin=*]
    \setlength\itemsep{-0.5em}
    
    \item \revision{}{\mypara{UCI Adult~\cite{AN10}}
    This is a widely used dataset for classification from the UCI machine learning repository.}
    
	\item \mypara{US Accident~\cite{moosavi2019accident}}
	This is a countrywide traffic accident dataset, which covers $49$ states of the United States.
	
	\item \mypara{Loan~\cite{kaggle-loan}}
	This dataset contains loan data in lending club issued from 2007 to 2015.
	
	\item \mypara{Colorado~\cite{nist-challenge}}
	This is the census dataset of Colorado State in 1940. 
	This dataset is used in the final round of the NIST challenge~\cite{nist-challenge}.
\end{itemize}

The detailed information about the datasets are listed in Table~\ref{tab:datasets},
where the label column stands for the label used in the classification task.

\begin{table}[!h]
    \footnotesize
	\centering
	\begin{tabular}{c|c|c|c|c}
		\toprule
		Dataset	& Records & Attributes & Domain & Label \\ 
		
		\midrule
		Adult & $48,000$ & $15$ & $6\cdot 10^{17}$ & salary \\
		US Accident & $600,000$ & $30$ & $3\cdot 10^{39}$ & Severity \\
		Loan & $600,000$ & $81$ & $4\cdot 10^{136}$ & home\_ownership \\
		Colorado & $662,000$ & $97$ & $5\cdot 10^{162}$ & INCNONWG \\
		
		\bottomrule
	\end{tabular}
	\caption{Summary of datasets used in our experiments.}
	\label{tab:datasets}
\end{table}
\vspace{-1em}

\mypara{Tasks and Metrics}
We evaluate the statistical performance of the synthesized datasets on three data analysis tasks.
For each data analysis task, we adopt its commonly used metric to measure the performance.

\vspace{-1em}
\begin{itemize}[leftmargin=*]
    \setlength\itemsep{-0.5em}
    \item \mypara{Marginal Release}
    We compute all the $2$-way marginals and use the average $\ell_1$ error to measure the performance.
    
    \item \mypara{Range Query}
    We randomly sample $1000$ range queries, each contains $3$ attributes.
    We use the average $\ell_1$ error to measure the performance.
    In particular, we calculate $\frac{1}{|Q|}\sum_{q_i \in Q}|c_i - \hat{c_i}|$, where $Q$ is the set of randomly sampled queries, $c_i$ and $\hat{c_i}$ are the ratio of records that fall in the range of query $q_i$ in the original dataset and synthesized dataset, respectively.

    \item \mypara{Classification}
    We use the synthesized dataset to train an SVM classification model, and use misclassification rate to measure the performance.
\end{itemize}

\vspace{-1em}
\mypara{Competitors}
We compare each component of \method with a series of other methods, respectively.
\vspace{-1em}
\begin{itemize}[leftmargin=*]
    \setlength\itemsep{-0.5em}
    \item \mypara{Marginal Selection Methods}
    We compare our proposed \marginal method (Algorithm~\ref{alg:marginal_selection}) with \privbayes.  
    The computational complexity of dependency in original \privbayes method is too high.  
    Thus, we replace the dependency calculation part of \privbayes by our proposed \margselect metric, which we call \privbayesid.
    For Colorado dataset, the \pgm team open sourced a set of manually selected marginals in the NIST challenge~\cite{nist-challenge}, which serves as an alternative competitor. 
    
    \item \mypara{Noise Addition Methods} 
    We compare our proposed \weightgauss method with \equallap and \equalgauss methods.
    Both Gaussian methods use \zcdp to compose, and the Laplace mechanism use the naive composition, \ie, evenly allocate $\epsilon$ for each marginal. 
    
    \item \mypara{Data Synthesis Methods}
    We compare our proposed \method method with \privbayes, \pgm.
    Both \privbayes and \pgm use the selected marginals to estimate a graphical model, and sample synthetic records from it.
    Notice that we have two versions of synthesis methods for \method, \ie, \mcf and \gum.
\end{itemize}

\begin{figure*}[!htpb]
    \centering

    \subfloat[pair-wise marginal]{\includegraphics[width=0.3\textwidth]{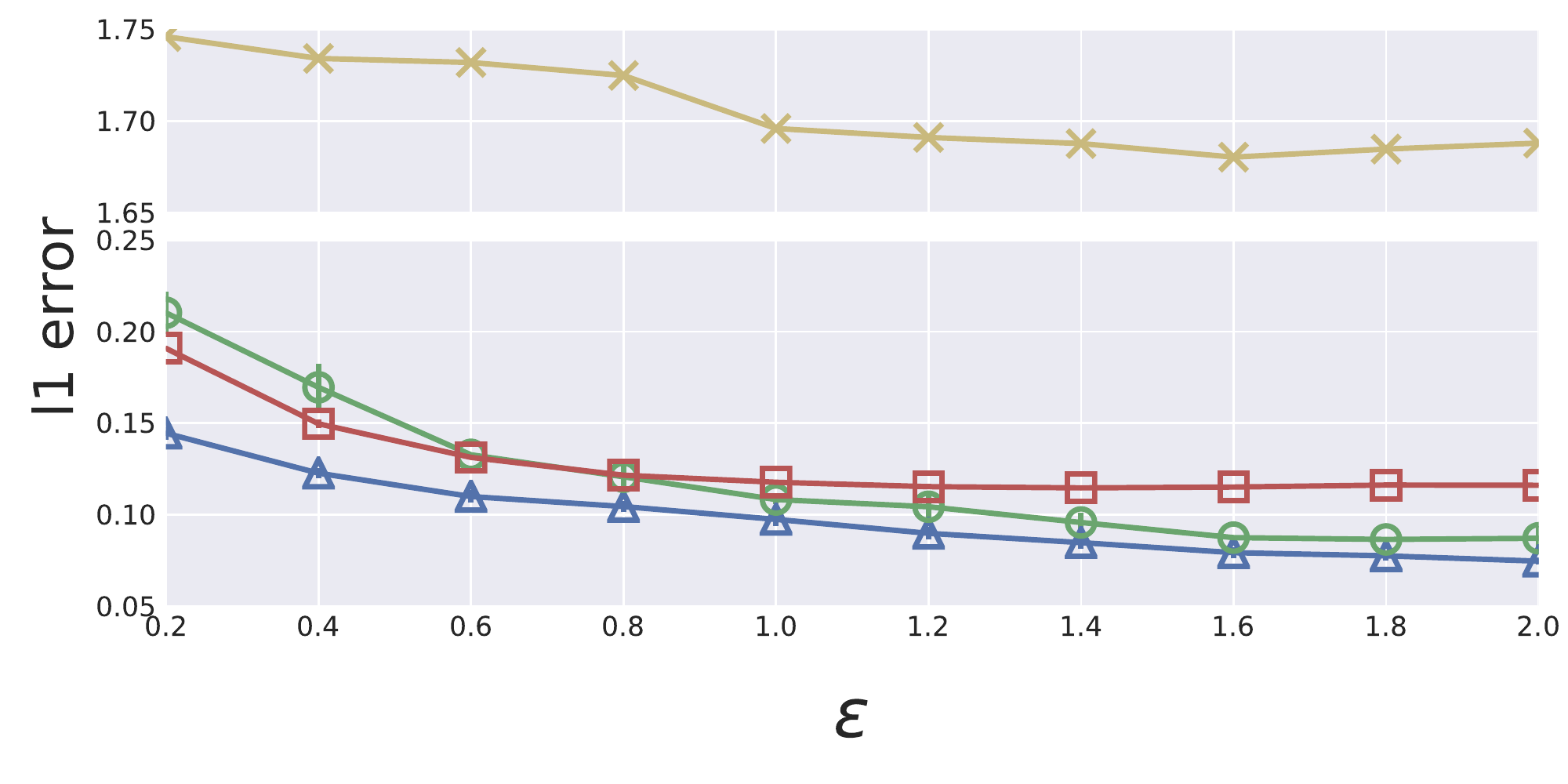}}
    \subfloat[range query]{\includegraphics[width=0.3\textwidth]{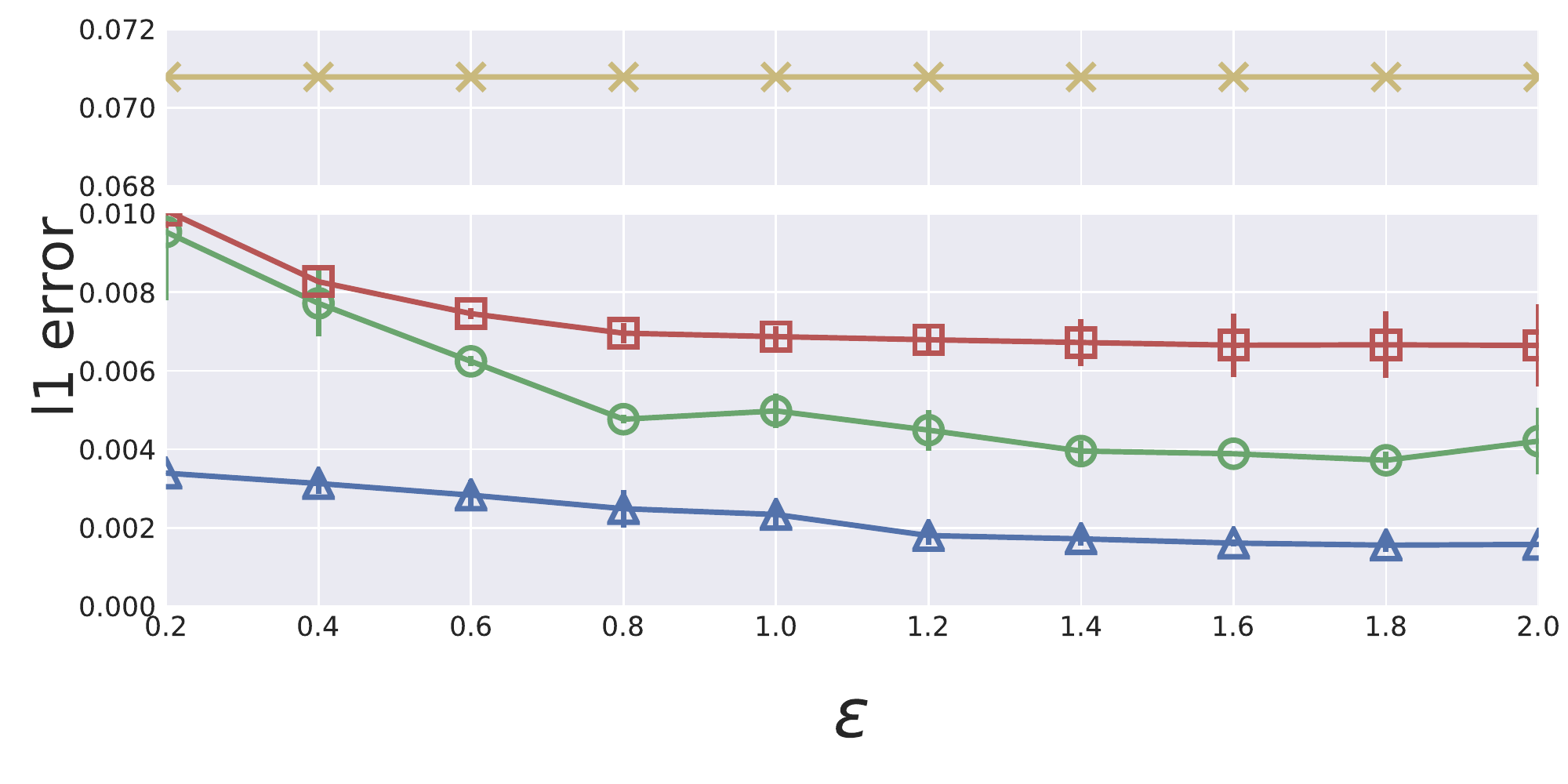}}
    \subfloat[classification]{\includegraphics[width=0.3\textwidth]{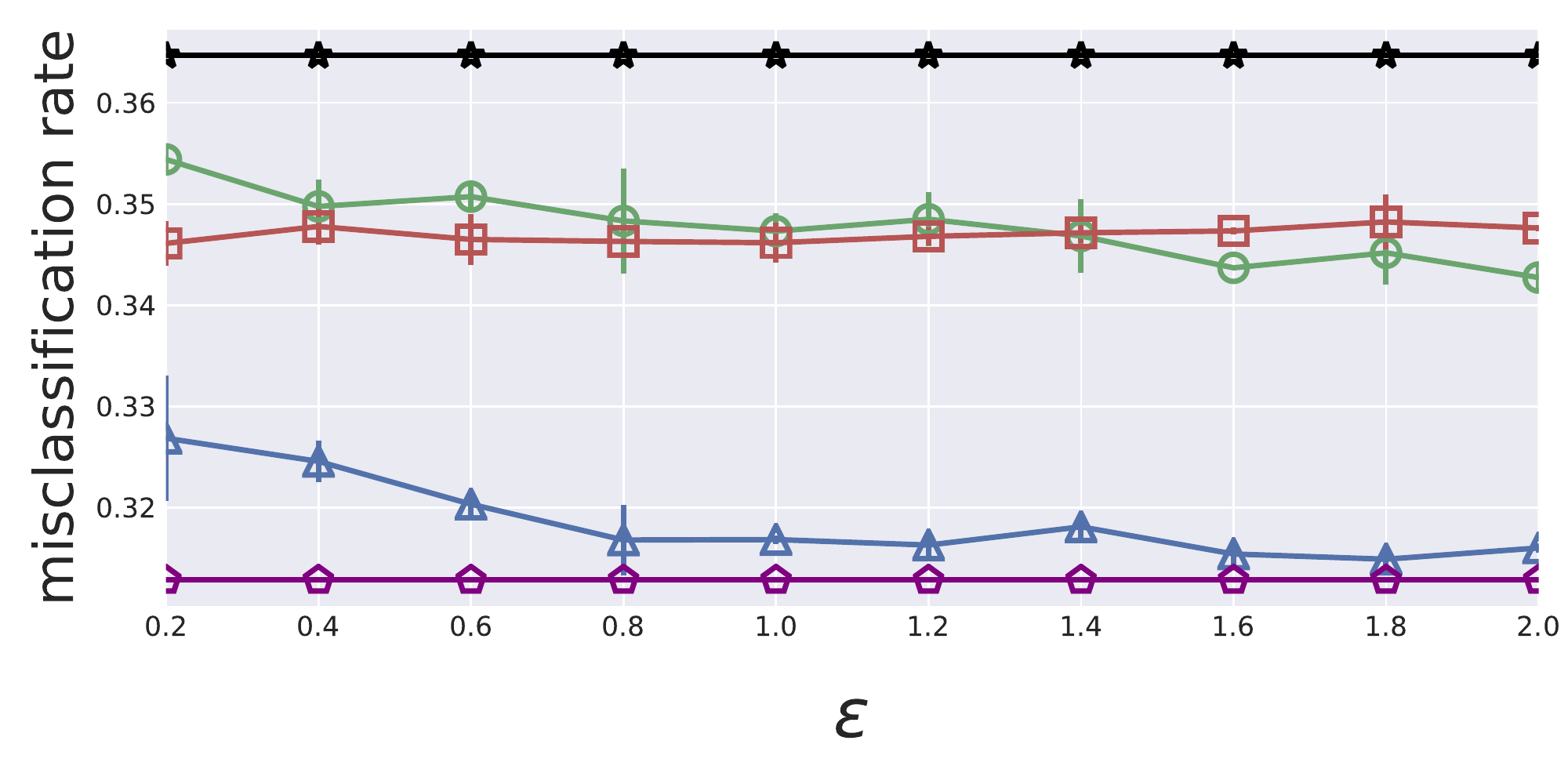}} \\ 
    
    \centering{US Accident} \\ [-2ex]

    \subfloat[pair-wise marginal]{\includegraphics[width=0.3\textwidth]{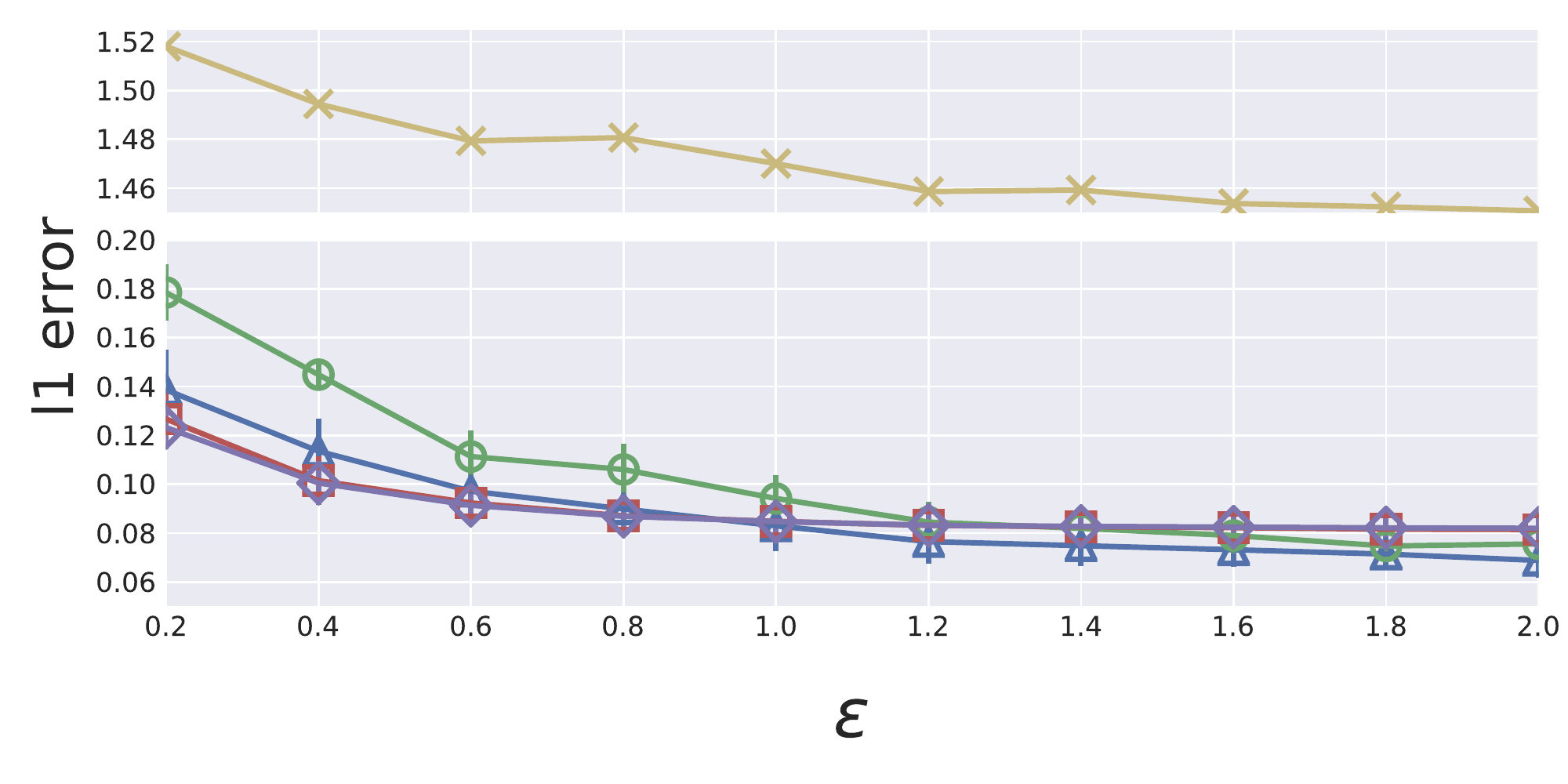}}
    \subfloat[range query]{\includegraphics[width=0.3\textwidth]{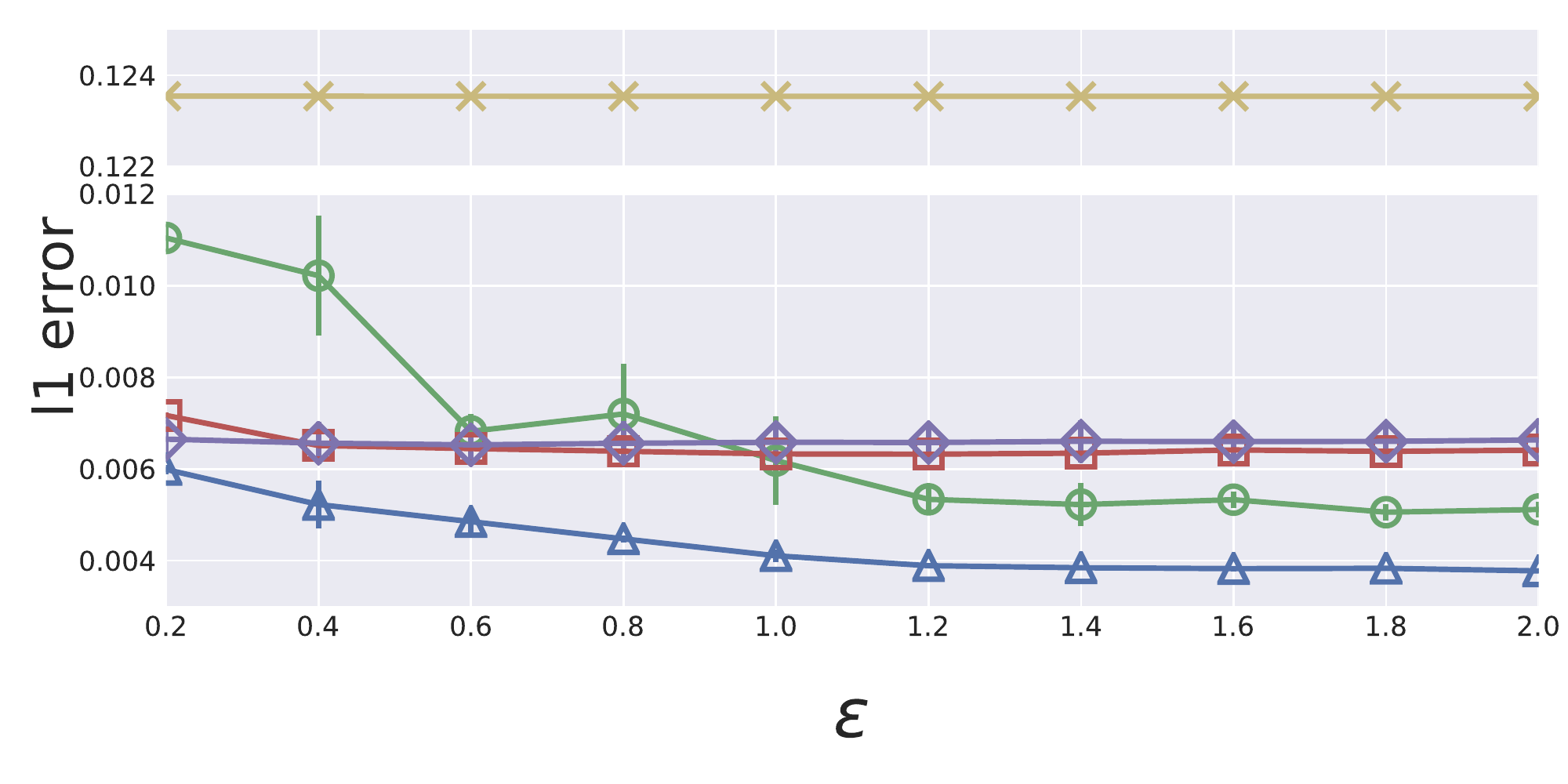}}
    \subfloat[classification]{\includegraphics[width=0.3\textwidth]{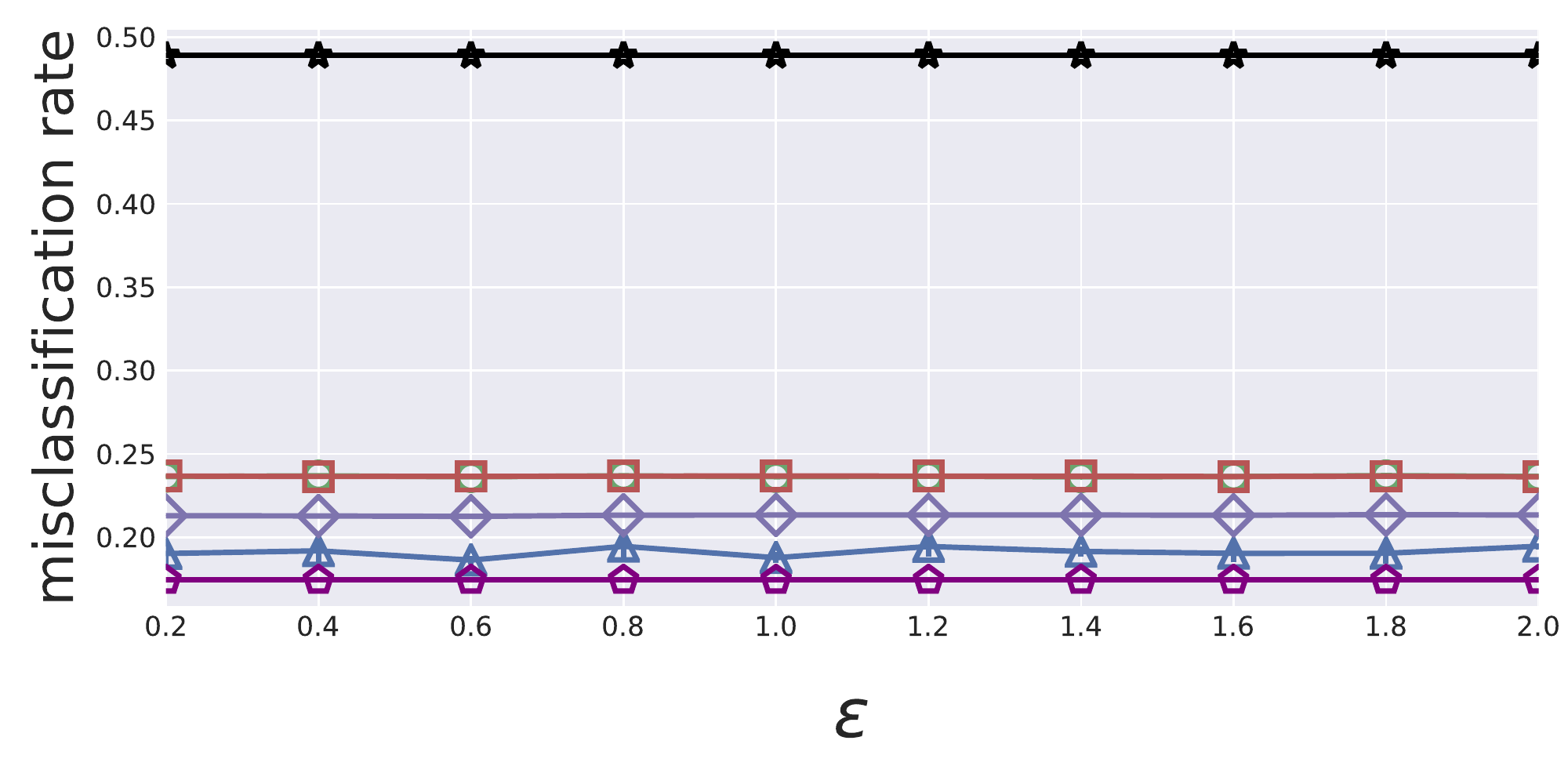}} \\ 
    
    \centering{Colorado} \\ 
    
    \subfloat{\includegraphics[width=0.8\textwidth]{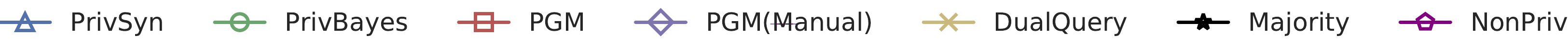}}  \\ [-2ex]

    \caption{End-to-end Comparison of different dataset generation methods.  
    \method is our proposed method.
    }
    \label{fig:comparison_e2e}
\end{figure*}

We also compare with a few other algorithms that do not follow the framework in Section~\ref{sec:framework}.
\begin{itemize}[leftmargin=*]
    \setlength\itemsep{-0.5em}
    \item \mypara{DualQuery} It generates records in a game theoretical manner.
    The main idea is to maintain a distribution over a workload of queries.
    One first samples a set of queries from the workload each time, and then generates a record that minimize the error of these queries.
    We refer the readers to Section~\ref{sec:related} for detailed discussion.
    
    \item For the classification task, we have another two competitors, \ie, \maj and \nonpriv.
    \maj represents the naive method that blindly predicts the label by the majority label.  Methods that perform worse than \maj means that the published dataset doesn't help the classification task, since the majority label can be outputted correctly even under very low privacy budget. 
    \nonpriv represents the method without enforcing differential privacy, it is the best case to aim for.  For \nonpriv, we split the original dataset into two disjoint parts, one for training and another for testing.
\end{itemize}

\mypara{Experimental Setting}
For \privbayes, \pgm and \method methods, we set the number of synthesized records the same as that of the original dataset.
Notice that we adopt unbounded differential privacy~\cite{KM11} in this paper, we cannot directly access the actual number of records in the original dataset.
Thus, we instead use the total count of marginals to approximate it.
For \dq method, the number of synthesized records is inherently determined by the privacy budget, the step size and the sample size~\cite{gaboardi2014dual}.
We use the same hyper-parameter settings as~\cite{gaboardi2014dual}, \ie, the step size is $2.0$ and sample size is $1000$.
We will illustrate the impact of the number of synthesized records on \method in Appendix~\ref{app:impact_number_records}.
\revision{}{By default, we set $\delta=\frac{1}{n^2}$ for all methods, where $n$ is the number of records in original dataset.}

All algorithms are implemented in Python 3.7 and all the experiments are conducted on a server with Intel Xeon E7-8867 v3 @ 2.50GHz and 1.5TB memory.
We repeat each experiment $5$ times and report the mean and standard deviation.
Due to space limitation, we put the experimental results of US Accident and Colorado in the main body, and defer the results of Adult and Loan datasets to Appendix~\ref{app:other_datasets}.
We also defer the comparison of synthesis methods to Appendix~\ref{app:comparison_synthesis}.

\begin{figure*}[!htpb]
    \centering

    \subfloat[pair-wise marginal]{\includegraphics[width=0.3\textwidth]{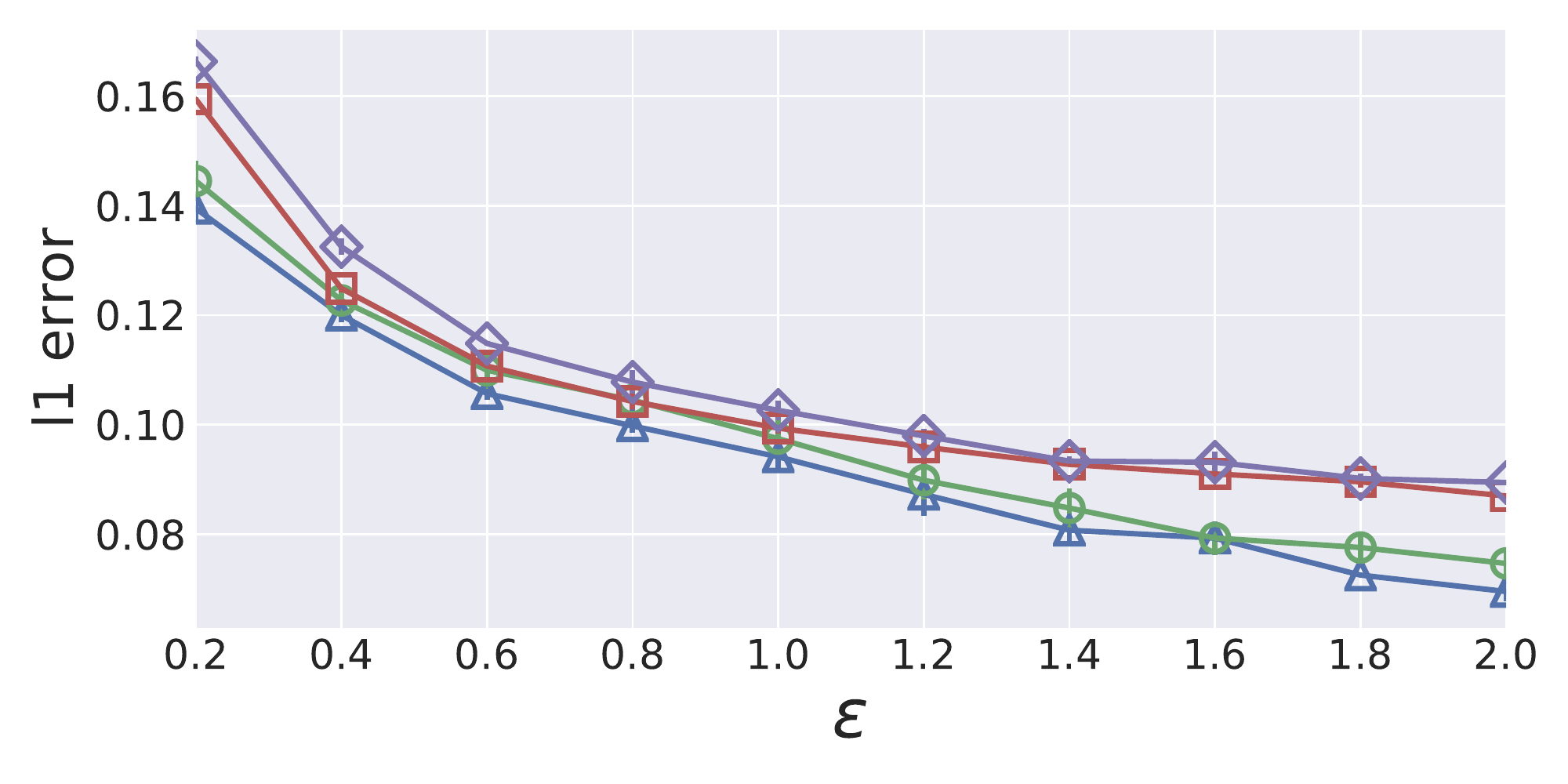}}
    \subfloat[range query]{\includegraphics[width=0.3\textwidth]{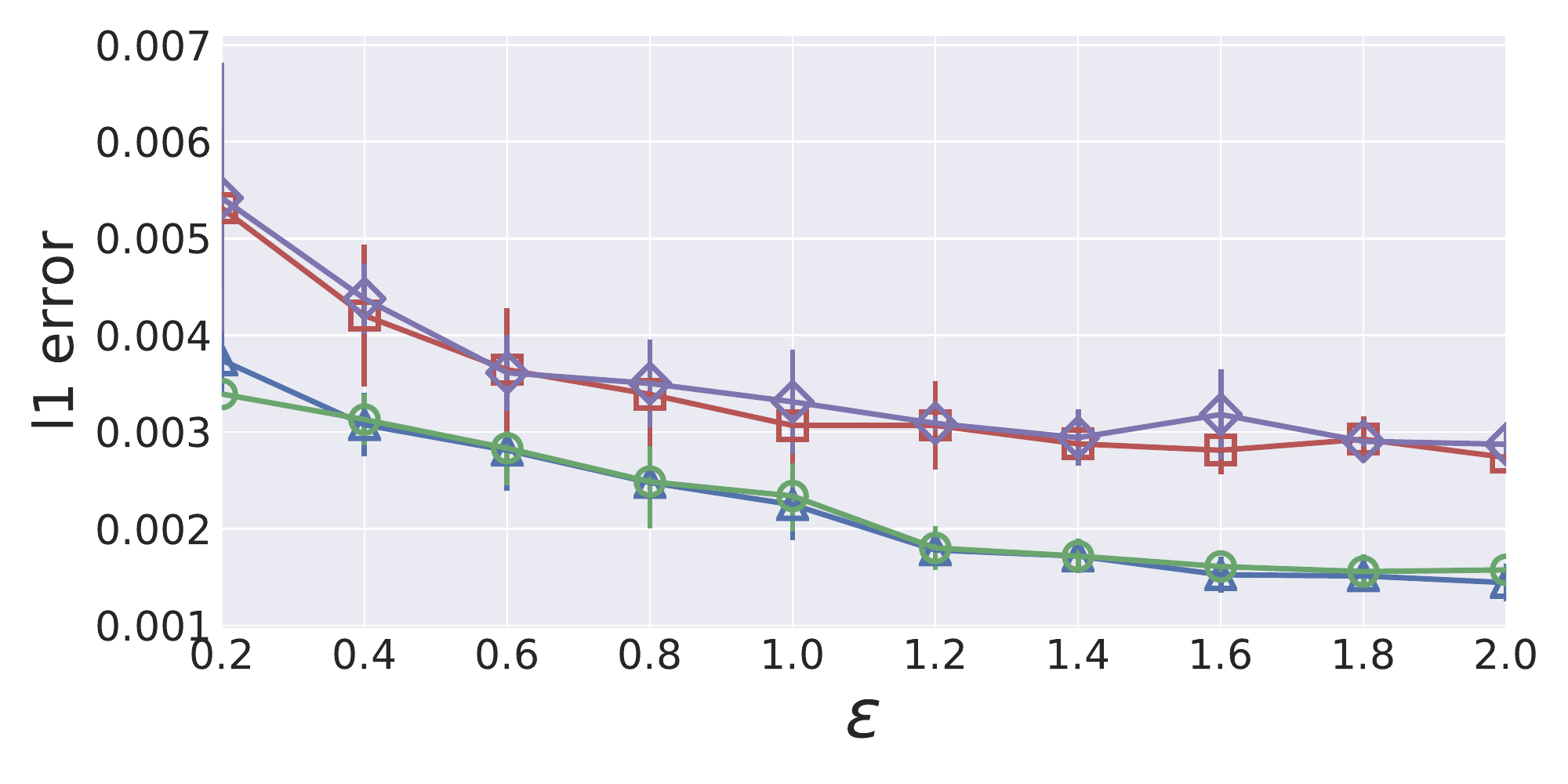}}
    \subfloat[classification]{\includegraphics[width=0.3\textwidth]{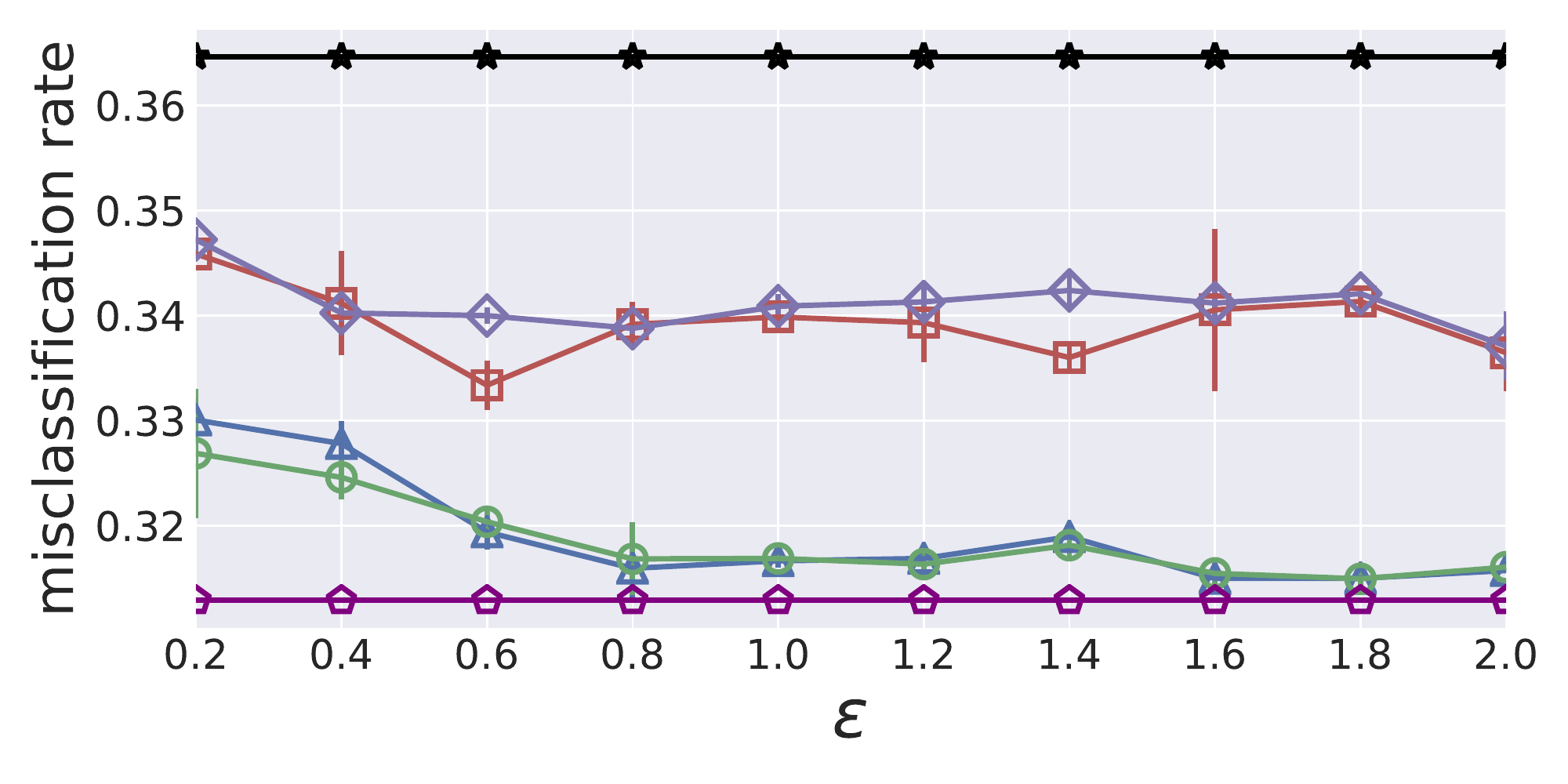}} \\ 
    
    \centering{US Accident} \\ [-2ex]

    \subfloat[pair-wise marginal]{\includegraphics[width=0.3\textwidth]{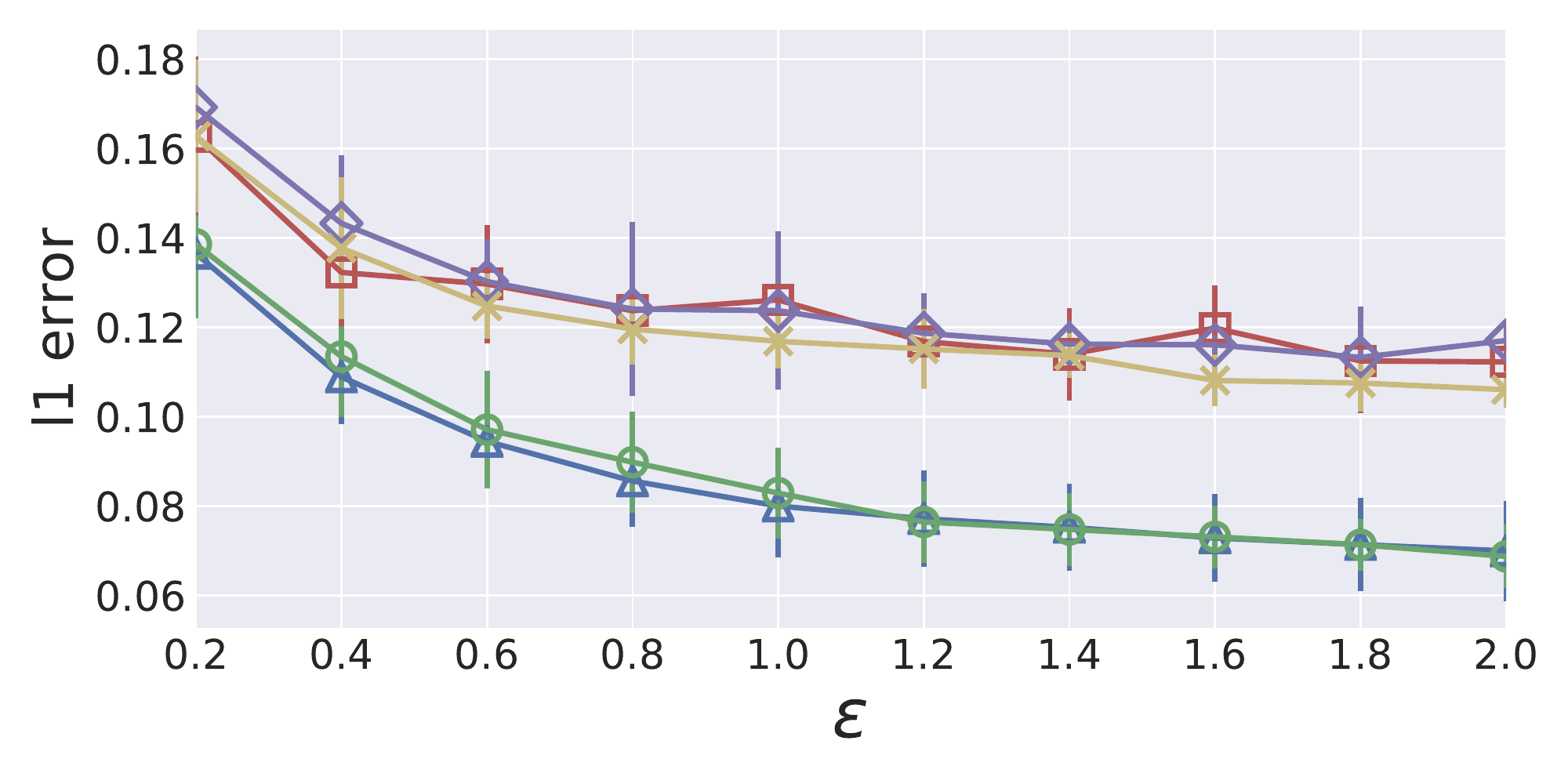}}
    \subfloat[range query]{\includegraphics[width=0.3\textwidth]{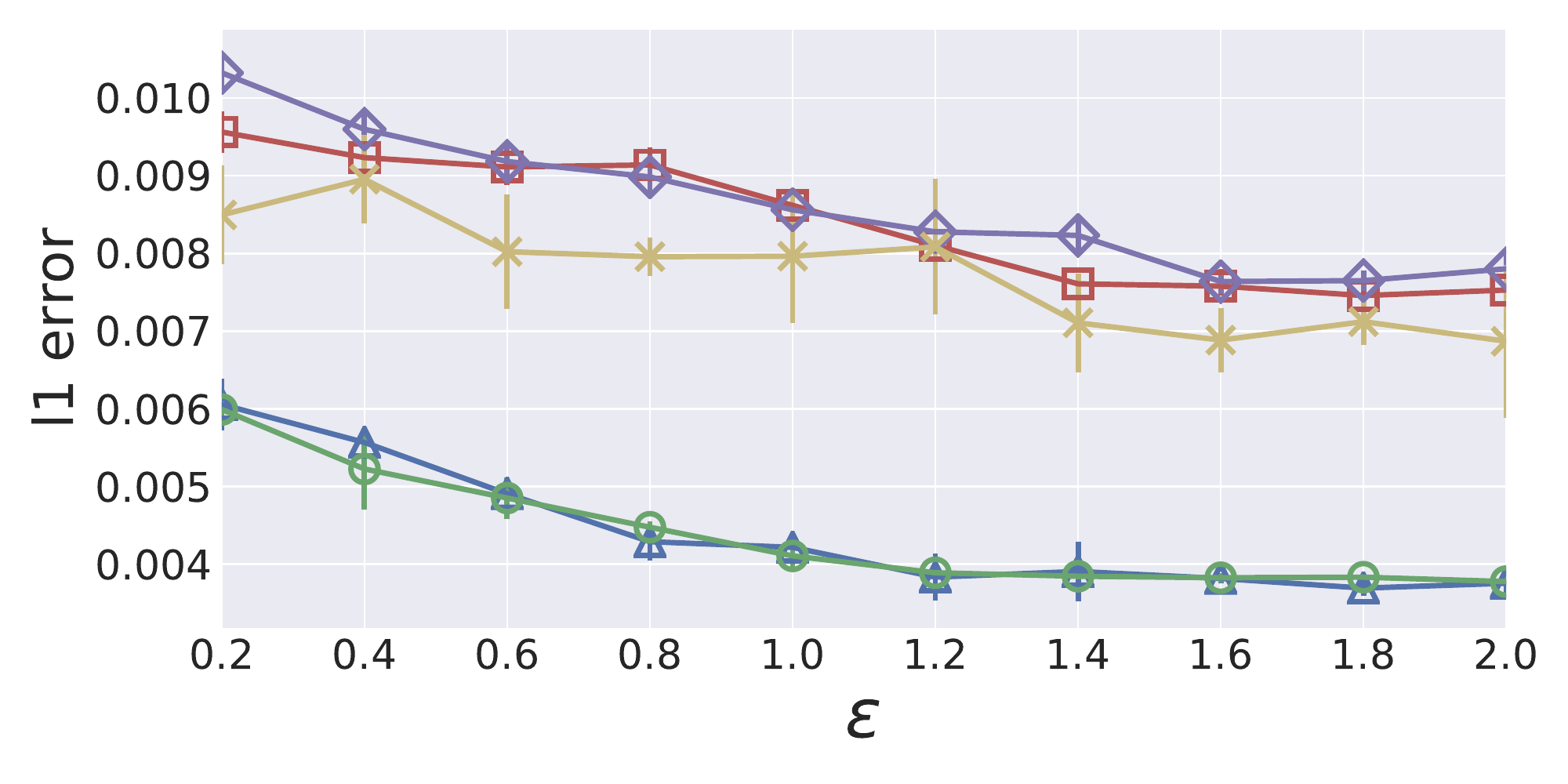}}
    \subfloat[classification]{\includegraphics[width=0.3\textwidth]{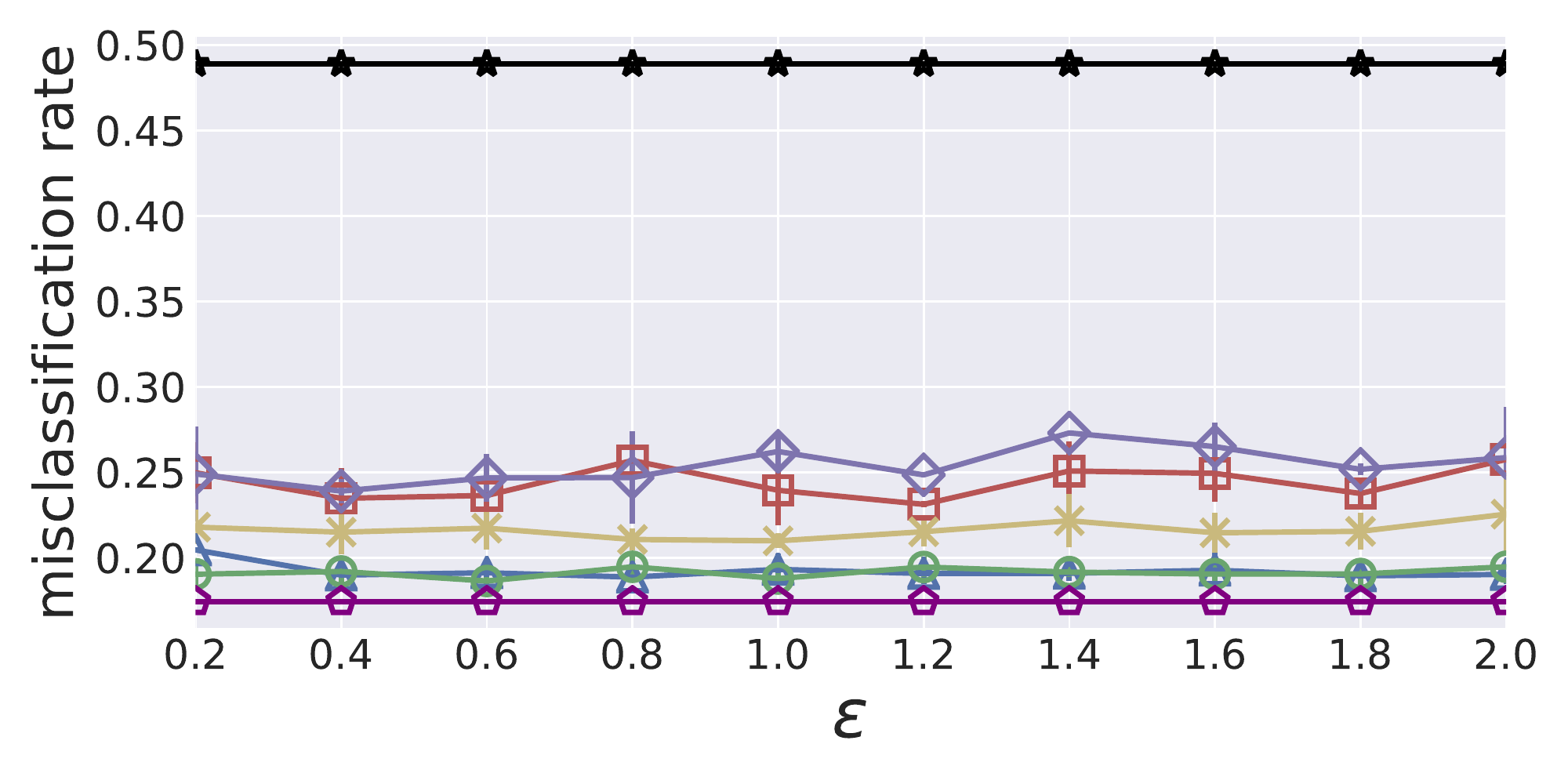}} \\ 

    \centering{Colorado} \\ 
    
    \subfloat{\includegraphics[width=0.8\textwidth]{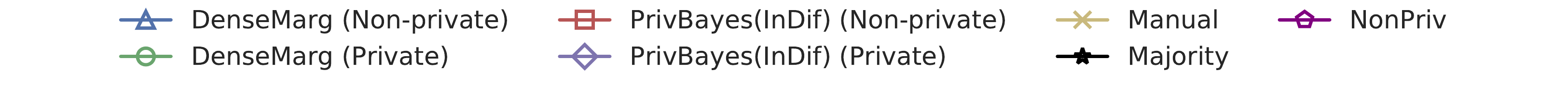}}  \\ [-2ex]

    \caption{Comparison of different marginal selection methods.
    \marginal is our proposed method.
    Non-private in the parenthese indicates that the marginal selection step do not consume privacy budget.}
    \label{fig:comparison_marginal}
\end{figure*}

\subsection{End-to-end Comparison}
\label{subsec:comparison_e2e}

\mypara{Setup}
For fair comparison, we use the optimal components and hyper-parameters for all methods.
Concretely, we use \privbayesid to select marginals for \privbayes and \pgm, since they can only handle sparse marginals.
Both \method and \dq can handle dense marginals; thus we use \marginal to select marginals for them.
For noise addition, we use \weightgauss for \privbayes, \pgm and \method.
\dq uses a game theoretical manner to generate synthetic datasets; thus it does not need the noise addition step.
For \privbayes, \pgm and \dq, we use the open-sourced code~\cite{pgm-code} by the author of \pgm to run the experiments.

\mypara{Results}
Figure~\ref{fig:comparison_e2e} illustrates the performance of different methods.
We do not show the classification performance of \dq since the misclassification rate is larger than \maj and the variance is large.
The experimental results show that our proposed \method
method consistently outperforms other methods for all datasets and all data analysis tasks.

For the pair-wise marginal task, the performance of \pgm and \privbayes is quite close to \method, meaning these two methods can effectively capture low-dimensional correlation.
However, the performance of range query task and classification task are much worse than \method, since range query and classification tasks require higher dimensional correlation.
\method can effectively preserve both low-dimensional and high-dimensional correlation.

The performance of \dq is significantly worse than other methods.
The reason is that generating each record consumes a portion of privacy budget, which limits the number of records generated by \dq.
In our experiments, the number of generated records by \dq is less than $300$ in all settings.
When the privacy budget is low, \eg, $\epsilon=0.2$, the number of generated records is less than $50$.
Insufficient number of records would lead to bad performance for all three data analysis tasks.

\subsection{Comparison of Marginal Selection \\ Methods}
\label{subsec:comparison_marginal}

\mypara{Setup}
We use \weightgauss method for noise addition, and use \gum for data synthesis.  
For each marginal selection method, we compare their performance in both private and non-private settings.
In the non-private setting, the marginal selection step do not consume privacy budget.
This can serve as a baseline to illustrate the robustness of different marginal selection methods.

\mypara{Results}
Figure~\ref{fig:comparison_marginal} illustrates the performance of different marginal selection methods.
For all datasets and all data analysis tasks, our proposed \marginal method consistently outperforms \privbayesid.
In the range query task, \marginal reduces the $\ell_1$ error by about $50\%$, which is much significant than that in pair-wise marginal release task.
This is because our range queries contain $3$ attributes, which requires higher dimensional correlation information than pair-wise marginal (contain $2$ attributes).
\marginal preserves more higher dimensional correlation by selecting more marginals than \privbayesid.

In all settings, the performance of \marginal in private setting and non-private setting are very close.
The reason is that \marginal tends to select the set of marginals with high \margselect, and adding moderate level of noise is unlikely to significantly change this set of marginals.
In our experiments, the overlapping ratio of the selected marginals between private setting and non-private setting is larger than $85\%$ in most cases.
This indicates that \marginal is very robust to noise.

\begin{figure*}[!htpb]
    \centering

    \subfloat[pair-wise marginal]{\includegraphics[width=0.3\textwidth]{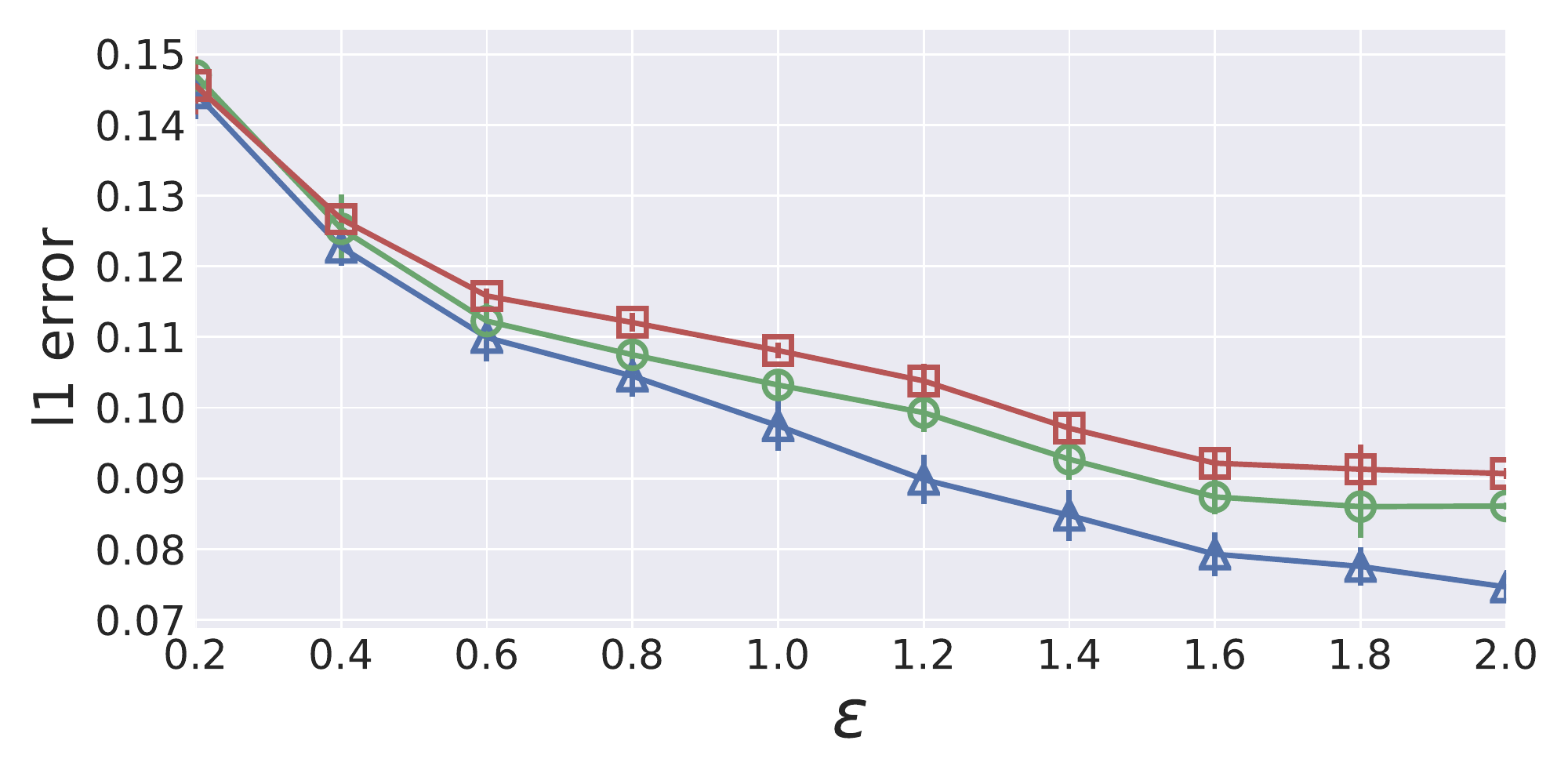}}
    \subfloat[range query]{\includegraphics[width=0.3\textwidth]{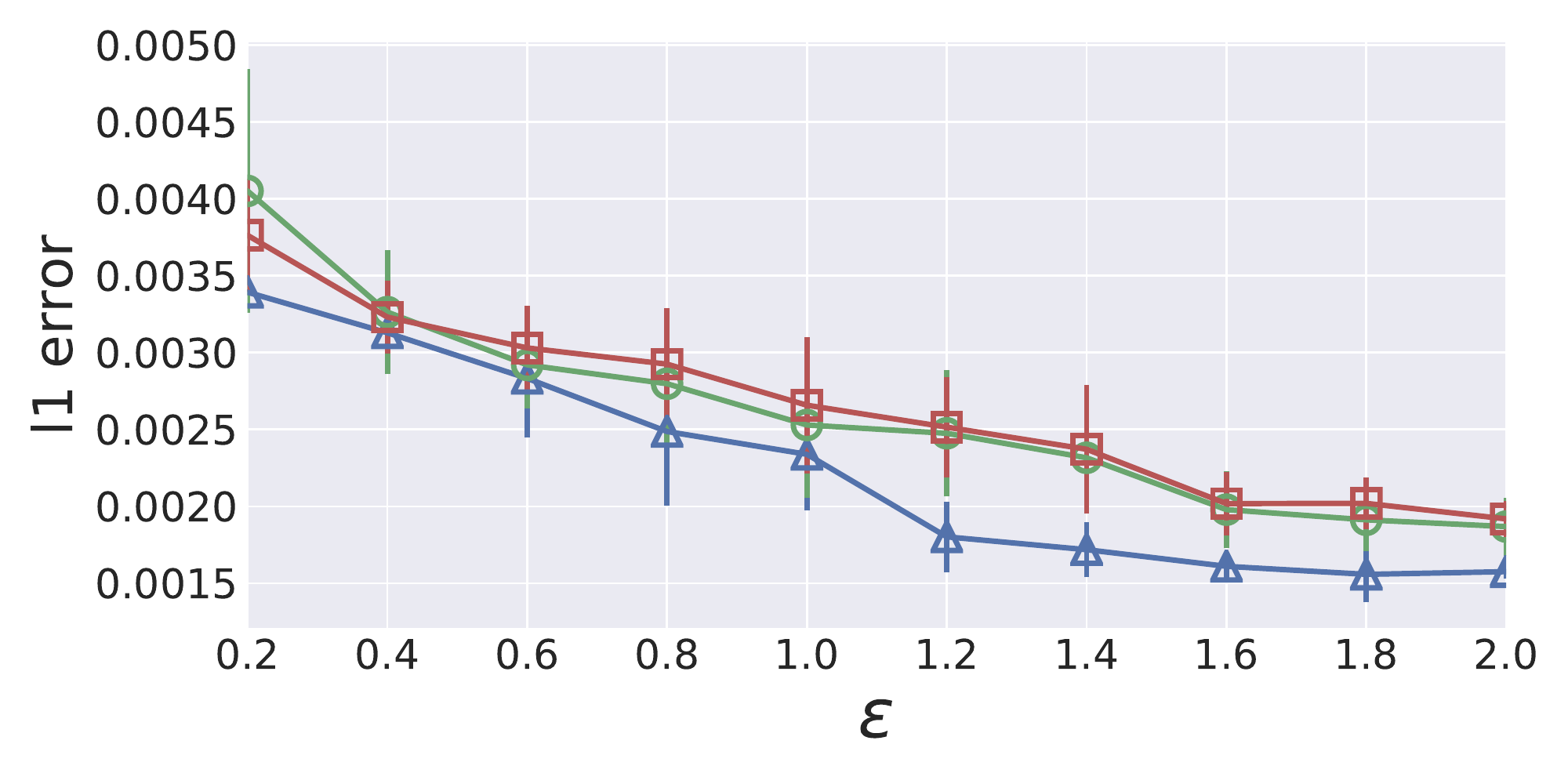}}
    \subfloat[classification]{\includegraphics[width=0.3\textwidth]{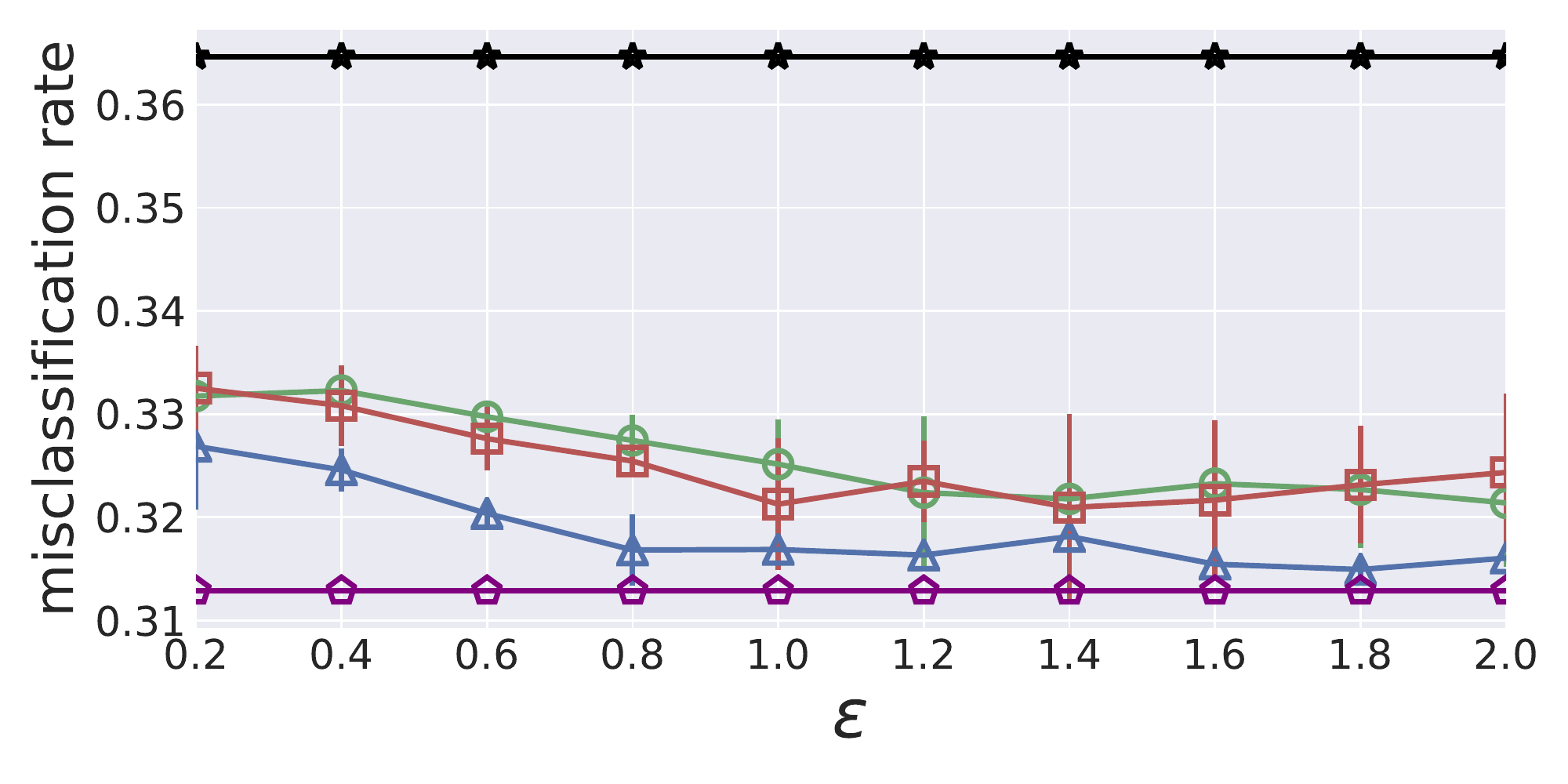}} \\ 
    
    \centering{US Accident} \\ [-2ex]

    \subfloat[pair-wise marginal]{\includegraphics[width=0.3\textwidth]{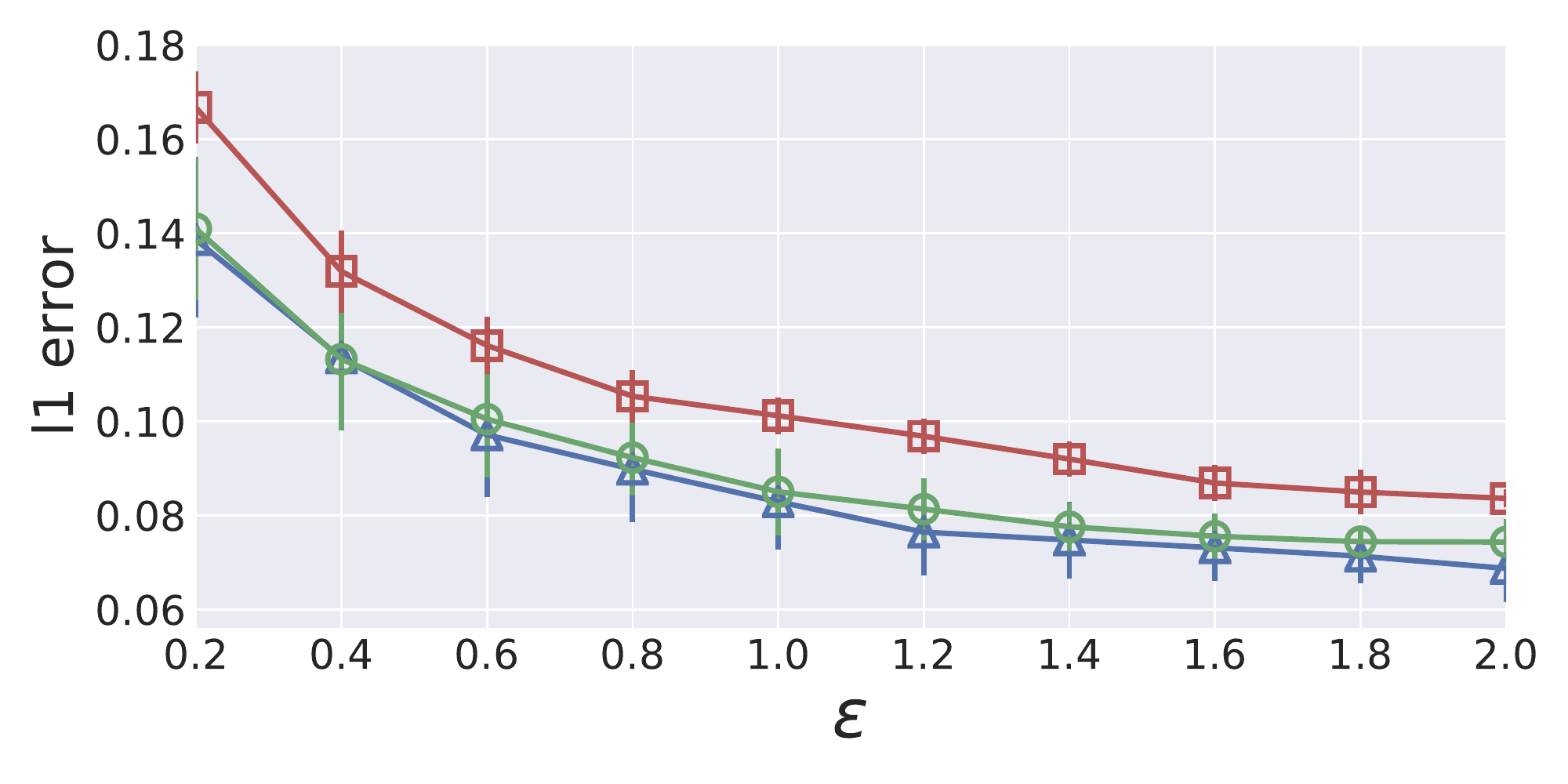}}
    \subfloat[range query]{\includegraphics[width=0.3\textwidth]{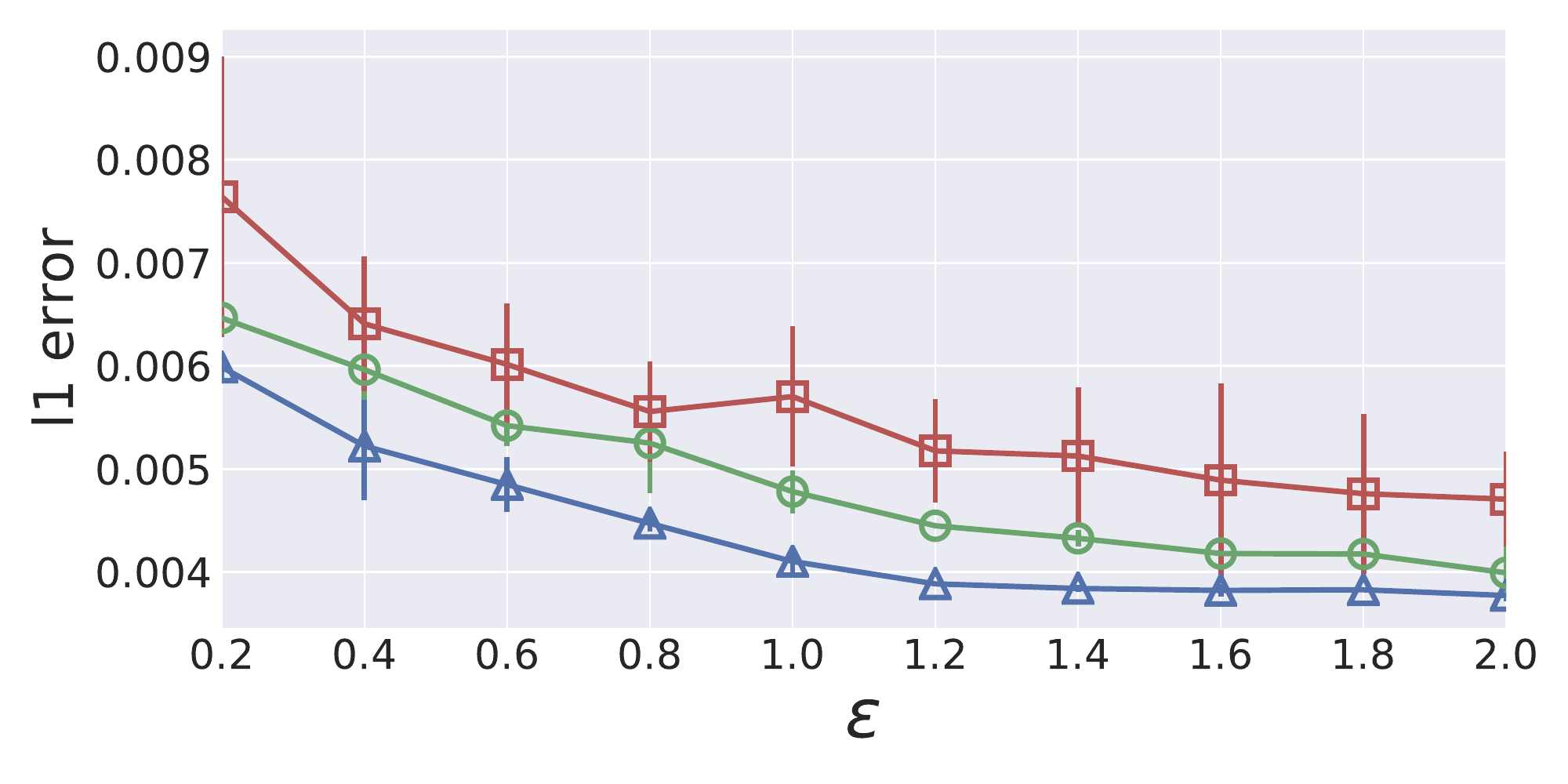}}
    \subfloat[classification]{\includegraphics[width=0.3\textwidth]{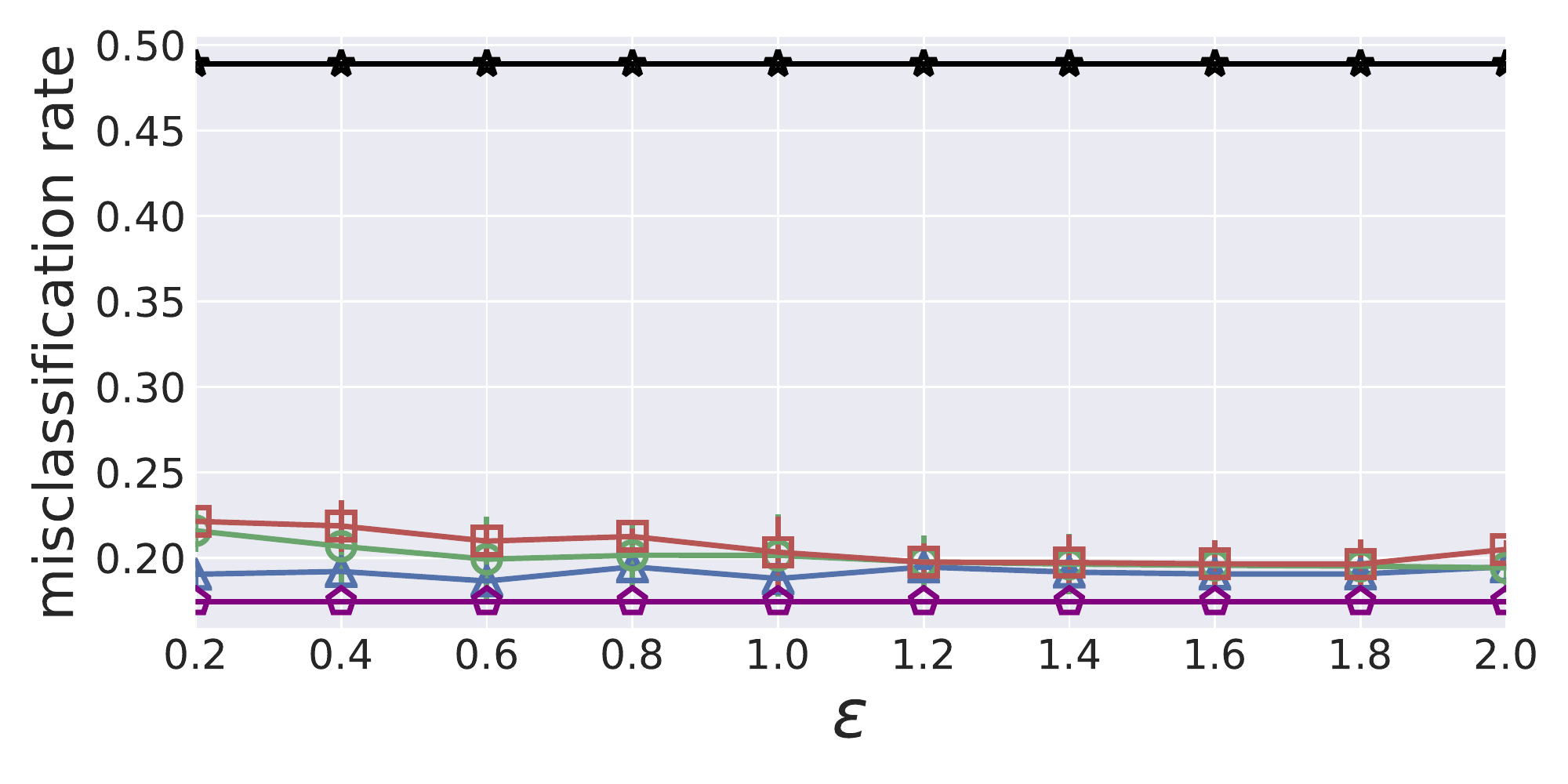}} \\ 
    
    \centering{Colorado} \\ 
    
    \subfloat{\includegraphics[width=0.8\textwidth]{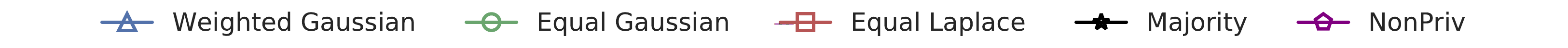}}  \\ [-2ex]

    \caption{Comparison of different noise addition methods.  
    \weightgauss is our proposed method.}
    \label{fig:comparison_noise_add}
\end{figure*}

\subsection{Comparison of Noise Addition Methods}
\label{subsec:comparison_noise_add}

\mypara{Setup}
We compare our proposed \weightgauss method with \equallap and \equalgauss methods.
Both Gaussian methods use \zcdp for composition.
The Laplace mechanism uses the naive composition, \ie, evenly allocate $\epsilon$ for all marginals. 
All methods use \marginal for marginal selection and \gum for data synthesis.

\mypara{Results}
Figure \ref{fig:comparison_noise_add} demonstrates the performance of different noise addition methods.
For all datasets and all data analysis tasks, our proposed \weightgauss method consistently outperforms the other two methods.
The advantage of \weightgauss increases when the privacy budget $\epsilon$ is larger.

In our experiment, both \weightgauss and \equalgauss methods use \zcdp to calculate the noise variance to each marginal, the main difference is that \weightgauss allocates privacy budget according to the number of cells, while \equalgauss evenly allocate privacy budget to all marginals.
The experimental results validate our analysis in Section \ref{subsec:marginal_selection} that \weightgauss is the optimal privacy budget allocation strategy.

\vspace{-1em}
\section{Related Work}
\label{sec:related}
Differential privacy (DP) has been the \textit{de facto} notion for protecting privacy.
Many DP algorithms have been proposed (see~\cite{dpbook,vadhan2017complexity} for theoretical treatments and~\cite{Li2016book} in a more practical perspective).
Most of the algorithms are proposed for specific tasks.
In this paper, we study the general task of generating a synthetic dataset with DP.  Compared to the ad-hoc methods, this approach may not achieve the optimal utility for the specific task.  But this approach is general in that given the synthetic dataset, any task can be performed, and there is no need to modify existing non-private algorithms.
There are a number of previous studies focus on generating synthetic dataset in a differentially private manner.
We classify them into three categoreis: graphical model based methods, game based methods and deep generative model based methods.
There are also some theoretical studies that discuss the hardness of differentially private data synthesis.

\mypara{Graphical Model Based Methods}
The main idea is to estimate a graphical model that approximates the distribution of the original dataset in a differentially private way.
\privbayes~\cite{zhang2017privbayes} and \bsg~\revision{}{(the initials of the authors' last names)}~\cite{bindschaedler2017plausible} approximate the data distribution using Bayesian Network.
These methods, however, need to call Exponential Mechanism~\cite{zhang2017privbayes} or Laplace Mechanism~\cite{bindschaedler2017plausible} many times, making the network structure inaccurate when the privacy budget is limited; and the overall utility is sensitive to the quality of the initial selected node.

\pgm~\cite{mckenna2019graphical} and \jtree~\cite{chen2015differentially} utilize Markov Random Field to approximate the data distribution.
\pgm takes as input a set of predefined low-dimensional marginals, and estimates a Markov Random Field that best matches these marginals.
\jtree first estimates a dependency graph by setting a threshold to the mutual information of pairwise attributes, and then obtains the Markov Random Field by transforming the dependency graph into a junction tree.
\pgm do not provide marginal selection method in the paper~\cite{mckenna2019graphical}.
\jtree proposes to use SVT to select marginals; however, Lyu \etal~\cite{lyu2016understanding} point out that JTree utilizes SVT in a problematic way.
The main limitation of graphical model based methods is that they cannot handle dense marginals that capture more correlation information.

\mypara{Game Based Methods}
There are works that formulate the dataset synthesis problem as a zero-sum game~\cite{hardt2012simple,gaboardi2014dual,vietrinew}.
Assume there are two players, data player and query player.
MWEM~\cite{hardt2012simple} method solves the game by having the data player use a no-regret learning algorithm, and the query player repeatedly best responds.
Dual Query~\cite{gaboardi2014dual} switches the role of the two players.
Concretely, the data player in MWEM maintains a distribution over the whole data domain, and query player repeatedly use exponential mechanism to select a query that have the worse performance from a workload of queries to update data player's distribution.
The main limitation of MWEM is that when the dataset domain is large (from $3 \cdot 10^{39}$ to $5 \cdot 10^{162}$ in our experiments), maintaining the full distribution is infeasible.
Thus, we do not compare with MWEM in our experiments.

In contrast, the query player in Dual Query maintains a distribution over all queries.
The query player each time samples a set of queries from the workload, and the data player generates a record that minimizes the error of these queries.
The shortcoming is that generating each record would consume a portion of privacy budget; thus one cannot generate sufficient records as discussed in Section~\ref{sec:exp}.
Moreover, both methods require a workload of queries in advance, and the generated dataset is guaranteed to be similar to the original dataset with respect to the query class.
This makes MWEM and Dual Query incapable of handling arbitrary kinds of tasks with satisfied accuracy.
The authors of~\cite{vietrinew} improve both MWEM and DualQuery by replacing their core components; however, this work follows the same framework with MWEM and QualQuery and do not address the main limitation of them.

\mypara{Deep Generative Model Based Methods}
Another approach is to train a deep generative model satisfying differential privacy, and use the deep generative model to generate a synthetic dataset.
The most commonly used deep generative model is the Generative Adversarial Network (GAN), and there are multiple studies focus on training GAN in a differentially private way~\cite{zhang2018differentially,beaulieu2019privacy,abay2018privacy,frigerio2019differentially,tantipongpipat2019differentially}.
The main idea is to utilize the DP-SGD framework~\cite{abadi2016deep} to add noise in the optimization procedure (i.e., stochastic gradient descent).
However, the preliminary application of GAN is to generate images.
Thus, the objective of GAN is to generate data records that look authentic, instead of approximating the original distribution, applying the GAN model to the current problem cannot generate a synthetic dataset with enough variations.
\revision{}{
In the NIST challenge~\cite{nist-challenge}, there are two teams adapting the GAN-based method to synthesize high-dimensional data, while their scores are much lower than \pgm and \privbayes.}
Thus, we do not compare this line of methods in our experiments.

\mypara{Theoretical Results}
There are a series of negative theoretical results concerning DP in the non-interactive setting~\cite{dwork2009complexity,ullman2011pcps,DN03,DMT07,DY08,DMNS06,DNR+09}.  
These results have been interpreted ``to mean that one cannot answer a linear, in the database size, number of queries with small noise while preserving privacy'' and to motivate ``an interactive approach to private data analysis where the number of queries is limited to be small -- sub-linear in the size $n$ of the dataset''~\cite{DNR+09}.  

We point out that, theoretical negative results notwithstanding, non-interactive publishing can serve an important role in private data publishing.  
The negative results essentially say that when the set of queries is sufficiently broad, one cannot guarantee that all of them are answered accurately.  
These results are all based on query sets that are broader than the natural set of queries in which one is interested.  
For example, suppose the dataset is one-dimensional where each value is an integer number in domain $[m]=\{1,2\ldots,m\}$.  
These results say that one cannot answer counting queries for arbitrary subsets of $[m]$ with error less than $\Theta(\sqrt{n})$, where $n$ is the size of the dataset.  
However, range queries, which are likely to be what one is interested in, can be answered with less error.
Moreover, these results are all asymptotic and do not rule out useful algorithms in practice.  When one plugs in actual parameters, the numbers that come out often have no bearing on practice.

\revisionstart
\section{Discussion and Limitations}
\label{sec:limitation}

In this section, we discuss the application scope and limitations of \method.

\mypara{Only Applicable to Tabular Data}
\method focuses on the tabular data and cannot handle other types of data such as image or streaming data.
Note that other existing methods (\privbayes, \pgm and \dq) also have this limitation.
We defer the application of \method to image dataset and sequential dataset to future work.

\mypara{Miss Some Higher Dimensional Correlation}
\method only considers low-degree marginals that may not capture some high-dimensional correlation information.
Notice that other marginal selection methods such as \privbayes and \bsg also use low-degree marginals to approximate the high-dimensional datasets and also have this limitation.
To capture higher dimensional correlation, one possibility is to consider all triple-wise marginals or higher-dimensional marginals; however, doing this may introduce too much noise for each of the marginal, resulting in inaccurate selection.
In practice, low-dimensional marginals are sufficient to capture enough correlation information, as illustrated on the four real-world datasets used in our experiments.

\revisionend

\section{Conclusion}
\label{sec:conc}

In this paper, we present \method for publishing a synthetic dataset under differential privacy.  
We identify the core steps in the process and our proposed method improves on these steps.  
We extensively evaluate different methods on multiple datasets to demonstrate the superiority of our method.

\newpage
{
    \bibliographystyle{plain}
	\bibliography{refs/privacy}
}

\appendix
\section{Exponential Mechanism}
\label{app:em}
The Exponential Mechanism (EM) computes a function $f$ on $D$ by sampling from the set of all possible answers in the range of $f$ according to an exponential distribution, with answers that are ``more accurate'' will be sampled with higher probability.  This is generally referred to as the {exponential mechanism}~\cite{mcsherry2007mechanism}.
This approach requires the specification of a set of quality functions $q_i: \mathbb{D} \rightarrow \mathbb{R}$.  Given the quality functions $q_1$ to $q_m$, the global sensitivity $\Delta_q$ is defined as:
$$ 
\Delta_q = \max_{i} \max_{(D,D'):D\simeq D'} |q_i(D) - q_i(D')| $$
The following method $\AA$ satisfies $\epsilon$-differential privacy:
\begin{equation}
 \Pr{\AA_q(D)=i} \propto \exp{\left(\frac{\epsilon }{2\,\Delta_q}q_i(D)\right)} \label{eq:exp}
\end{equation}

\section{Laplace Mechanism}
\label{app:lap}
The Laplace mechanism computes a function $f$ on the input $D$ in a differentially privately way, by adding to $f(D)$ a random noise.  The magnitude of the noise depends on $\mathsf{\Delta}_f$, the \emph{global sensitivity} or the $L_1$ sensitivity of $f$.  When $f$ outputs a single element, such a mechanism $\AA$ is given below:
$$
\begin{array}{crl}
& \AA_f(D) & =f(D) + \Lapp{\frac{\mathsf{\Delta}_f}{\epsilon}}
  \\
\mbox{where} &  \mathsf{\Delta}_f & = \max\limits_{(D,D') : D\simeq D'} ||f(D) - f(D')||_1,
\\
\mbox{and}& \Pr{\Lapp{\beta}=x} & = \frac{1}{2\beta} e^{-|x|/\beta}
\end{array}
$$
In the above, $\Lapp{\beta}$ denotes a random variable sampled from the Laplace distribution with scale parameter $\beta$.  When $f$ outputs a vector, $\AA$ adds a fresh sample of $\Lapp{\frac{\mathsf{\Delta}_f}{\epsilon}}$ to each element of the vector.

\section{Composition Property of \zcdp}
\label{app:zcdp_composition}

\zcdp has a simple linear composition property~\cite{bun2016concentrated}:
\begin{prop}
\label{prop:zcdp-compose}
Two randomized mechanisms $\mathcal{A}_1$ and $\mathcal{A}_2$ satisfy $\rho_1$-\zcdp and $\rho_2$-\zcdp respectively, their sequential composition  $\mathcal{A} = (\mathcal{A}_1,\mathcal{A}_2)$ satisfies ($\rho_1+\rho_2$)-\zcdp.
\end{prop}

The following two propositions restates the results from~\cite{bun2016concentrated}, which are useful for composing Gaussian mechanisms in differential privacy.

\begin{prop} 
\label{prop:CDPtoDP}
If $\AA$ provides $\rho$-\zcdp, then $\AA$ is $(\rho+2\sqrt{\rho\log(1/\delta)},\delta)$-differentially private for any $\delta>0$.
\end{prop}

\begin{prop}
\label{prop:Gaussian-zcdp}
The Gaussian mechanism which answers $f(D)$ with noise $\mathcal{N}(0, \Delta_f ^2 \sigma^2 \mathbf{I})$ satisfies ($\frac{1 }{2\sigma^2}$)-\zcdp.
\end{prop}

Given the privacy constraint $\epsilon$ and $\delta$, we can calculate the amount of noise for each task using Propositions~\ref{prop:zcdp-compose} to~\ref{prop:Gaussian-zcdp}.
In particular, we first use Proposition~\ref{prop:CDPtoDP} to compute the total $\rho$ allowed.  
Then we use Proposition~\ref{prop:zcdp-compose} to allocate $\rho_i$ for each task $i$.  
Finally, we use Proposition~\ref{prop:Gaussian-zcdp} to calculate $\sigma$ for each task.

\revisionstart
\begin{theorem}
\label{thm:sigma_calculation}
Given privacy budget $\epsilon$, $\delta$ and the number of tasks $m$, the standard deviation for each task is $\sigma=\left( \sqrt{2m\Delta_f^2\log{\frac{1}{\delta}}} + \sqrt{2m\Delta_f^2\log{\frac{1}{\delta}} + 2m\epsilon\Delta_f^2} \right) / 2\epsilon$.
\end{theorem}

\begin{proof}
Proposition~\ref{prop:CDPtoDP} states that $\rho$-\zcdp is equivalent to $(\rho+2\sqrt{\rho\log(1/\delta)},\delta)$-DP; thus we have $\epsilon = \rho+2\sqrt{\rho\log(1/\delta)}$.
Rearranging the above equation, we have
\begin{align}
\label{eq:cdp2dp}
    \sqrt{\rho}^2 + 2\sqrt{\log{\frac{1}{\delta}}} \cdot \sqrt{\rho} - \epsilon = 0
\end{align}

By solving Equation~\ref{eq:cdp2dp}, we get the relationship between $\rho$ and $\epsilon, \delta$:
$$
\sqrt{\rho} = \sqrt{\log{\frac{1}{\delta}} + \epsilon} - \sqrt{\log{\frac{1}{\delta}}}
$$

Assume the total privacy budget for \zcdp is $\rho$, it is obvious that the privacy budget for each task is $\rho_0=\frac{\rho}{m}$ based on Proposition~\ref{prop:zcdp-compose}.
From Proposition~\ref{prop:Gaussian-zcdp}, we have
\begin{align*}
    \sigma_0 
    &= \sqrt{\frac{\Delta_f^2}{2\rho_0^2}} = \sqrt{\frac{m\Delta_f^2}{2\rho}} \\
    &= \frac{\sqrt{k\Delta_f^2}}{\sqrt{2}\left( \sqrt{\log{\frac{1}{\delta}} + \epsilon} - \sqrt{\log{\frac{1}{\delta}}} \right)} \\
    &= \frac{\sqrt{2m\Delta_f^2\log{\frac{1}{\delta}}} + \sqrt{2m\Delta_f^2\log{\frac{1}{\delta}} + 2m\epsilon\Delta_f^2}}{2\epsilon}
\end{align*}
\end{proof}
\revisionend

Compared with $(\epsilon, \delta)$-DP, \zcdp provides a tighter bound on the cumulative privacy loss under composition, making it more suitable for algorithms consist of a large number of tasks.

\section{Missing Proofs}
\label{appendix:proofs}

\mypara{Proof of Lemma \ref{lemma:l1_sensitivity}: $\Delta_{\margselect}=4$}
\begin{proof}
Assume $D$ contains $n$ records and consider the two attributes $a$ and $b$.  
Denote the number of users for histogram on attribute $a$ as $a_1, a_2, \ldots$, and $b_1, b_2, \ldots$ for $b$.  
For the two-way marginal on $a, b$, denote the number of users for it as $\alpha_{11}, \alpha_{12}, \ldots$.

\revisionstart
The metric $\margselect_{ab}$ is 
\begin{align*}
    \margselect_{ab} = \sum_{ij} \abs{\frac{a_ib_j}{n} - \alpha_{ij}}
\end{align*}

If we add one user (wlog, whose values for $a$ and $b$ are $x$ and $y$), 
\begin{align*}
    \margselect_{ab}'
    = & \sum_{i\neq x, j\neq y} \abs{\frac{a_ib_j}{n+1} - \alpha_{ij}} \\
    & + \sum_{i\neq x} \abs{\frac{a_i(b_y+1)}{n+1} - \alpha_{iy}} \\
    & + \sum_{j\neq y} \abs{\frac{(a_x+1)b_j}{n+1} - \alpha_{xj}} \\
    & + \abs{\frac{(a_x+1)(b_y+1)}{n+1} - (\alpha_{xy}+1)}
\end{align*}

Since $\abs{s} - \abs{t} \leq \abs{s - t}$, the sensitivity is given by

\begin{align}
    &\Delta_{\margselect}
    = |\margselect_{ab} - \margselect_{ab}'| \nonumber \\
    \leq & \sum_{i\neq x, j\neq y} \abs{\frac{a_ib_j}{n(n+1)}} \nonumber \\
    & + \sum_{i\neq x} \abs{\frac{a_ib_y}{n(n+1)} - \frac{a_i}{n+1}} \nonumber \\
    & + \sum_{j\neq y} \abs{\frac{a_xb_j}{n(n+1)} - \frac{b_j}{n+1}} \nonumber \\
    & + \abs{\frac{(n+1)a_xb_y - n(a_x+1)(b_y+1) + n(n+1)}{n(n+1)}} \nonumber \\
    = & \frac{\sum_{i\neq x, j\neq y} a_ib_j - \sum_{i\neq x}(a_ib_y - na_i) - \sum_{j\neq y}(a_xb_j - nb_j)}{n(n+1)} \nonumber \\
    & + \frac{(n+1)a_xb_y - n(a_x+1)(b_y+1) + n(n+1)}{n(n+1)} \label{eq:absolute} \\
    = & \frac{(n-a_x)(n-b_y) - (n-a_x)(b_y-n) - (a_x-n)(n-b_y)}{n(n+1)} \nonumber \\
    & + \frac{(n+1)a_xb_y - n(a_x+1)(b_y+1) + n(n+1)}{n(n+1)} \label{eq:distribution} \\
    = & \frac{4\left( n^2 - (a_x+b_y)n + a_xb_y\right)}{n(n+1)} \nonumber \\
    = & \frac{4(n-a_x)(n-b_y)}{n(n+1)} \leq 4 \nonumber
\end{align}

In the above derivation, Equation~\eqref{eq:absolute} is due to $\frac{a_ib_j}{n(n+1)} \geq 0$, $\frac{a_ib_y}{n(n+1)} - \frac{a_i}{n+1} \leq 0$, $\frac{a_xb_j}{n(n+1)} - \frac{b_j}{n+1} \leq 0$ and $\frac{(n+1)a_xb_y - n(a_x+1)(b_y+1) + n(n+1)}{n(n+1)} = \frac{(n-a_x)(n-b_y)}{n(n+1)} \geq 0$.
Equation~\eqref{eq:distribution} is due to $\sum_{i\neq x}a_i=n-a_x$, $\sum_{j\neq y}b_j=n-b_y$ and $\sum_{i\neq x, j\neq y}a_ib_j=(n-a_x)(n-b_y)$.

\revisionend

\end{proof}

\mypara{Proof of Theorem \ref{thm:dependency_privacy}: 
Publishing $m$ \margselect scores with $\mathcal{N}(0, 8m/\rho' \mathbf{I})$ satisfies $\rho'$-\zcdp}
\begin{proof}
The proof is trivial given Lemma~\ref{lemma:l1_sensitivity}, Propositions~\ref{prop:zcdp-compose} and~\ref{prop:Gaussian-zcdp}: Because the sensitivity of \margselect is $4$, publishing it with $\mathcal{N}(0, 8m/\rho' \mathbf{I})$ satisfies $\rho'/m$-\zcdp.
For $m$ \margselect scores, by composition, publishing all of them satisfies $\rho'$-\zcdp.
\end{proof}

\mypara{Proof of Theorem \ref{thm:marginal_gauss}: (1) The marginal $\mathsf{M}$ has sensitivity $\Delta_{\mathsf{M}}=1$; 
(2) Publishing $\mathsf{M}$ with noise $\mathcal{N}(0, 1/2\rho \mathbf{I})$ satisfies $\rho$-\zcdp}
\begin{proof}
We first prove the marginal function has sensitivity $1$.  A marginal $\mathsf{M}_{A}$ specified by a set of attributes $A$ is a frequency distribution table, showing the number of record with each possible combination of values for the attributes.  For two marginals $\mathsf{M}_{A}$ and $\mathsf{M}_{A}'$, where $\mathsf{M}_{A}'$ is obtained by adding or removing one user to $\mathsf{M}_{A}$.  In general, for any $A$, it is obviously 
\[
\Delta_{\mathsf{M}} = |\mathsf{M} - \mathsf{M}'|_2 = 1
\]

Given this fact, by Propositions~\ref{prop:Gaussian-zcdp}, it is trivial that adding $\mathcal{N}(0, 1/2\rho \mathbf{I})$ to a marginal satisfies $\rho$-\zcdp.
\end{proof}

\begin{figure*}[!h]
    \centering

    \subfloat[pair-wise marginal]{\includegraphics[width=0.3\textwidth]{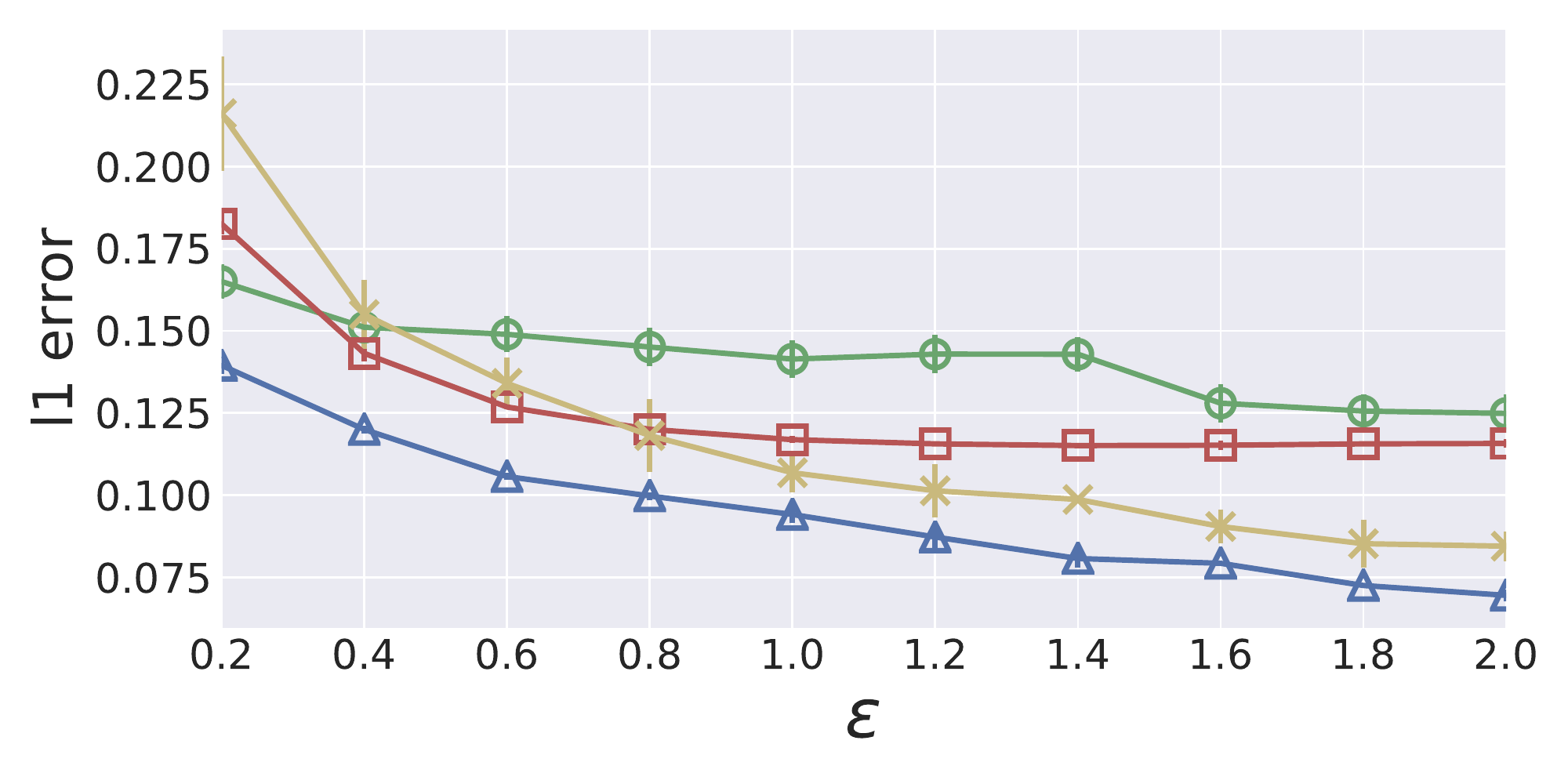}}
    \subfloat[range query]{\includegraphics[width=0.3\textwidth]{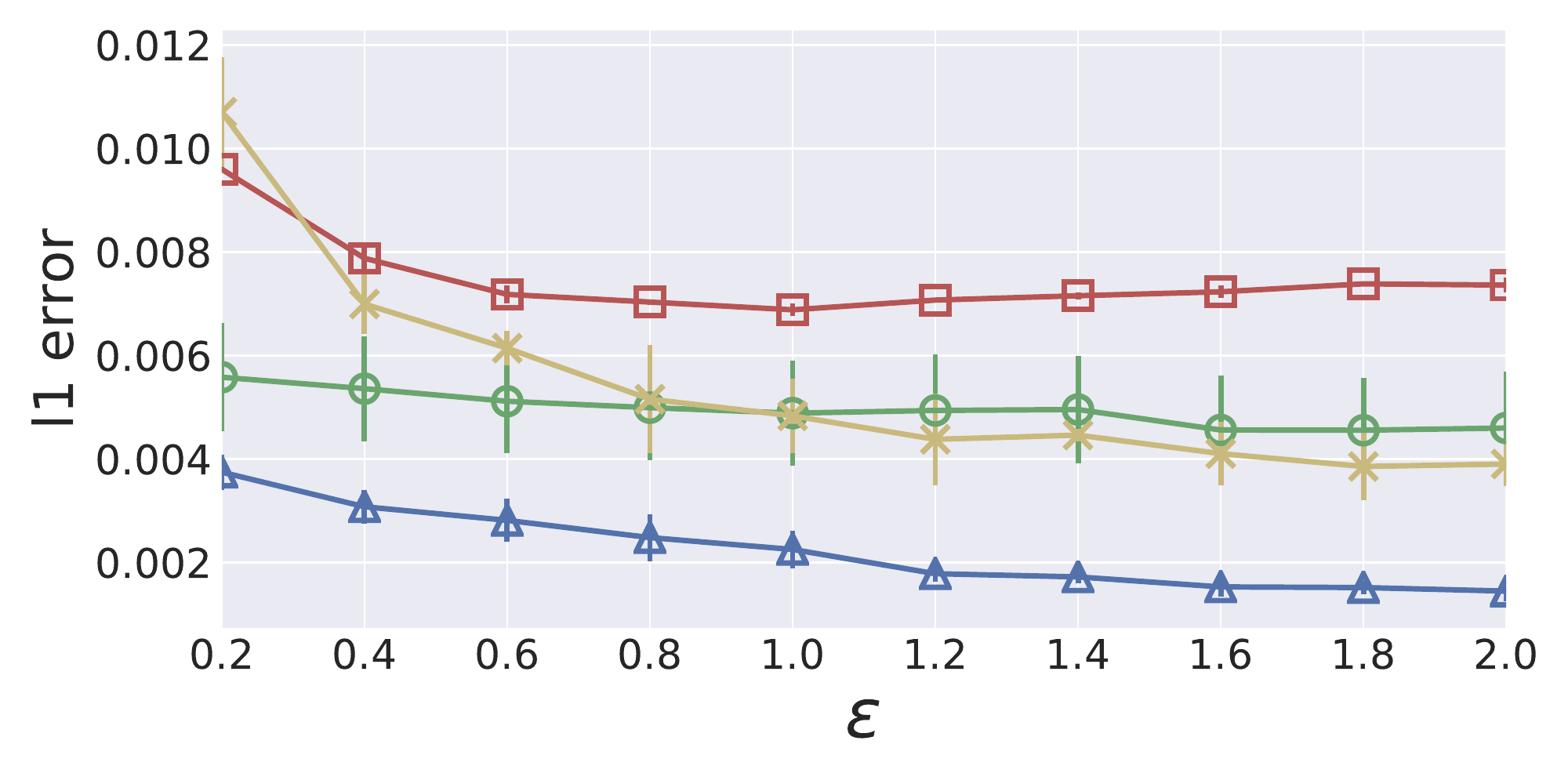}}
    \subfloat[classification]{\includegraphics[width=0.3\textwidth]{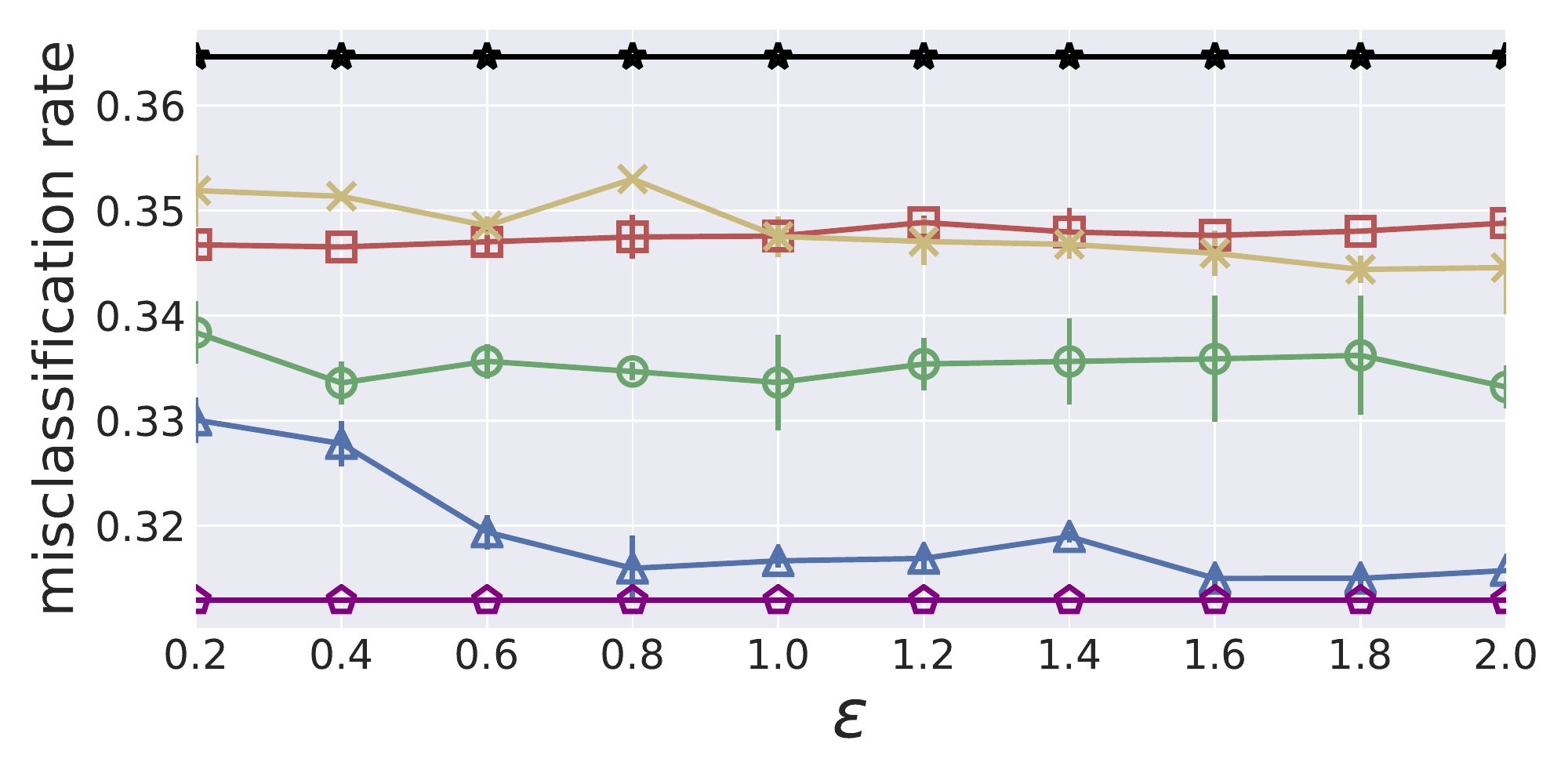}} \\ 
    
    \centering{US Accident} \\ [-2ex]

    \subfloat[pair-wise marginal]{\includegraphics[width=0.3\textwidth]{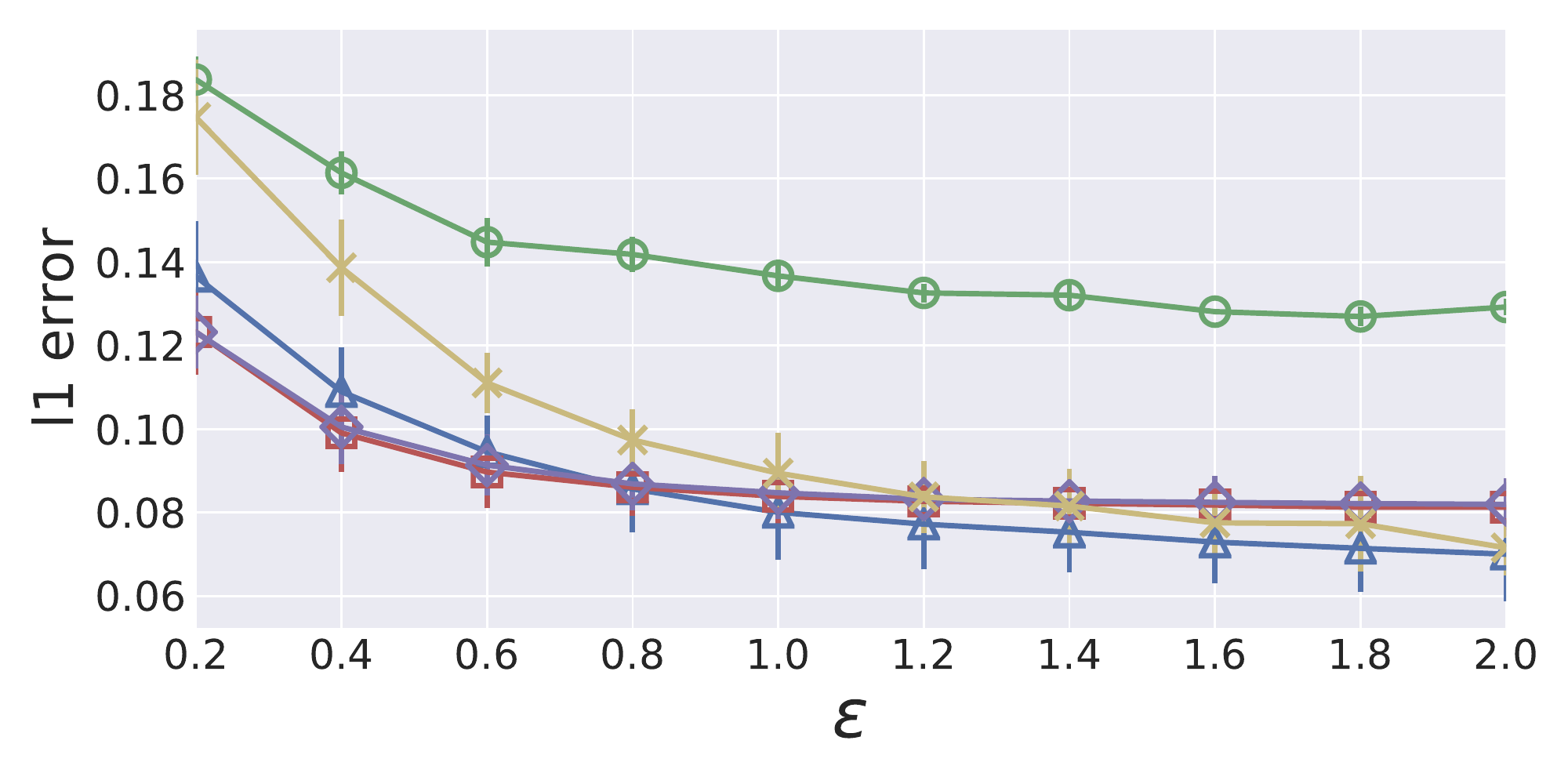}}
    \subfloat[range query]{\includegraphics[width=0.3\textwidth]{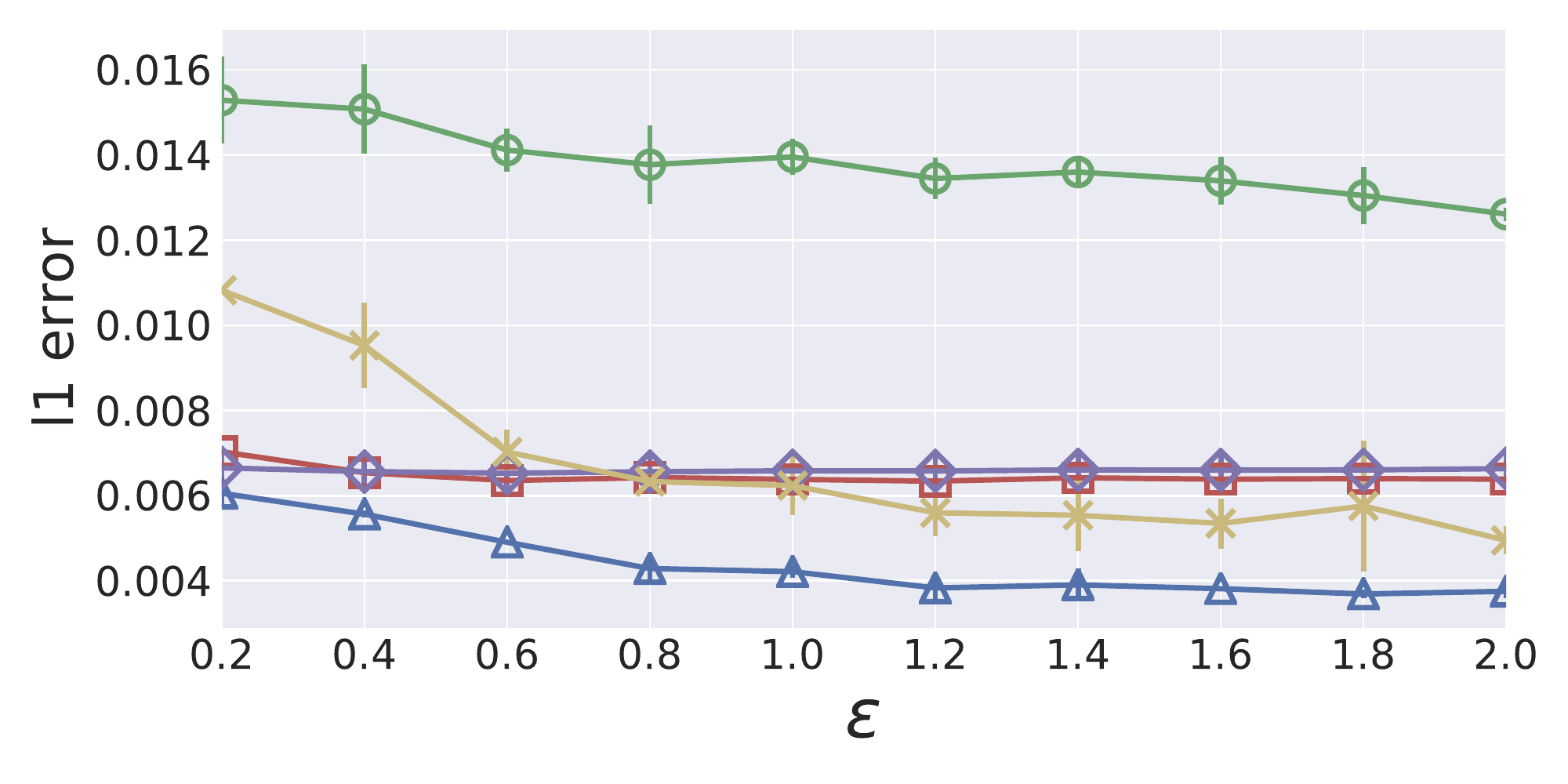}}
    \subfloat[classification]{\includegraphics[width=0.3\textwidth]{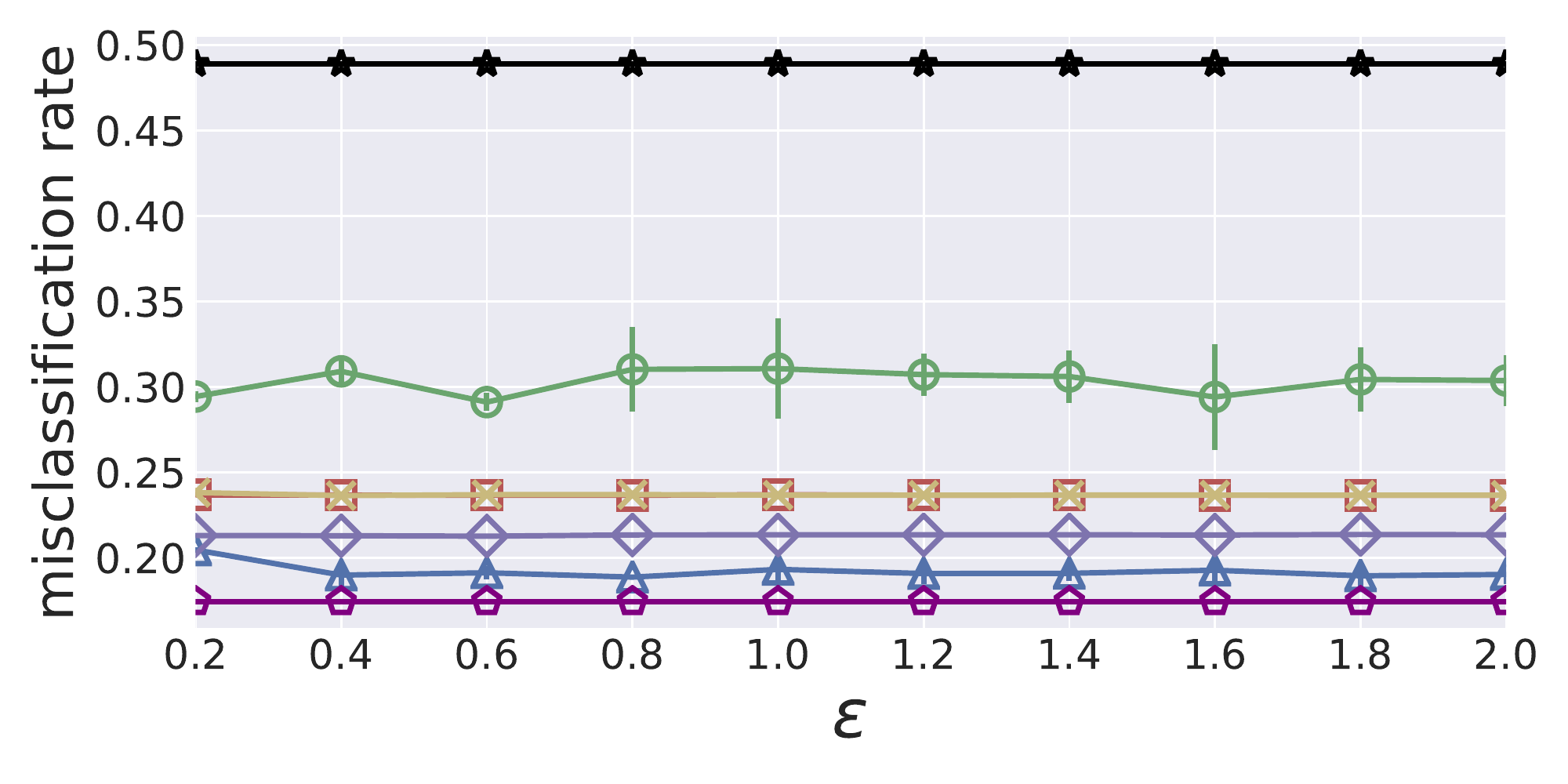}} \\ 
    
    \centering{Colorado} \\
    
    \subfloat{\includegraphics[width=0.8\textwidth]{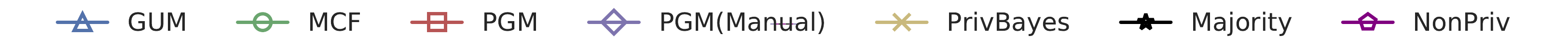}}  \\ [-2ex]

    \caption{Comparison of different synthesis methods.  
    \gum is our proposed method.}
    \label{fig:comparison_synthesis}
\end{figure*}

\revisionstart

\section{Comparison of InDif and Entropy-based Metrics}
\label{app:comparison_metrics}

To evaluate the impact of noise on the dependency metrics, one should consider both sensitivity and the range of the metrics.
In this section, we compare \margselect with two dependency metrics in the literature with respect to sensitivity and range.

\mypara{Mutual Information (MI)~\cite{zhang2017privbayes}}
\privbayes adopts mutual information to measure the dependency between attributes.
For attribute $A$ and $B$, their mutual information $I(A; B)$ is defined as~\footnote{All logarithms used in this section are to the base 2.}
\begin{align*}
    \sum_{a \in dom(A)}\sum_{b \in dom(B)}\Pr{A=a, B=b}\log\frac{\Pr{A=a, B=b}}{\Pr{A=a}\Pr{B=b}}
\end{align*}

From~\cite{zhang2017privbayes}, we know that the sensitivity of MI is $\frac{2}{n}\log\frac{n+1}{2} + \frac{n-1}{n}\log\frac{n+1}{n-1}$.
Besides, the range of MI is $[0, \log d]$, where $d = \max\{d_A, d_B\}$, $d_A$ and $d_B$ are the number of possible values for attribute $A$ and $B$, respectively.
Thus, the noise-range ratio of MI is defined as
\begin{align*}
    R_{MI} = \frac{1}{n} \cdot \frac{2\log\frac{n+1}{2} + (n+1)\log\frac{n+1}{n-1}}{\log d}
\end{align*}

\mypara{Symmetrical Uncertainty Coefficient (SUC)~\cite{bindschaedler2017plausible}}
\bsg adopts symmetrical uncertainty coefficient to measure the dependency between attributes, which is defined as
\begin{align*}
    corr(A, B) = 2 - 2\frac{H(A, B)}{H(A) + H(B)}
\end{align*}
where $H(\cdot)$ is the entropy function.

To achieve differential privacy, the authors in~\cite{bindschaedler2017plausible} propose to add noise to three entropy values in corr(A, B), respectively.
The authors prove that the sensitivity of entropy is $\frac{1}{n}\left[2 + \frac{1}{\ln2} + 2\log n\right]$.
Besides, the range of entropy is $[0, \log d]$.
Thus, the noise-range ratio of entropy is given by
\begin{align*}
    R_{Ent} = \frac{1}{n} \cdot \frac{2 + \frac{1}{\ln 2} + 2\log n}{\log d}
\end{align*}

\mypara{Comparison with InDif}
Recall that the sensitivity and range of \margselect is $4$ and $[0, 2n]$, respectively; thus, its noise-range ratio is given by
\begin{align*}
    R_{InDif} = \frac{1}{n} \cdot 2
\end{align*}

We list the noise-range ratio of three methods in Table~\ref{tab:noise_range_ratio} when $d$ varies.
We set $n=600000$ which is the case of three datasets in our experiments.
We observe that the noise-range ratio of \margselect is consistently smaller than the other two methods when $d \leq 100000$.
In the three datasets in our experiments, most of the attributes contains less than $100$ possible values, and the noise-range ratio of \margselect is $3$ times smaller than the other two methods.

\mypara{Comparison of Relative Errors}
To further evaluate the impact of noise on real-world datasets, we compare the relative errors between true values and noisy values of different metrics in Table~\ref{tab:relative_error} when $\epsilon=2.0$.
The relative errors are calculated as $\frac{1}{m}\sum_{i=1}^m\left|\frac{s_i - \Tilde{s_i}}{s_i}\right|$, where $m$ is the total number of pairwise marginals, $s_i$ and $\Tilde{s_i}$ are the true value and noisy value of marginal $i$, respectively.
We run each experiment $1000$ times and report the average relative error.

The experimental results show that the relative errors of \margselect are significantly smaller than MI and SUC.
The reason is that most of the MI values and SUC values are much smaller than their maximal value $\log{C}$, while most of the \margselect values are close to their maximal value $2n$.
For example, in the Colorado dataset, $78\%$ of the MI values and $87\%$ of the SUC values are smaller than $0.1$ (much smaller than $\log{C}$). In another hand, $37\%$ of the InDif values are larger than $0.5n$ (close to $2n$).

\begin{table}[!h]
    \footnotesize
	\centering
	\begin{tabular}{c|c|c|c|c|c|c}
		\toprule
		$d$	& $2$ & $50$ & $100$ & $1000$ & $10000$ & $100000$ \\ 
		\midrule
		$n \cdot R_{InDif}$ &$2.0$ & $2.0$ & $2.0$ & $2.0$ & $2.0$ & $2.0$ \\
		$n \cdot R_{MI}$	& $39.3$ & $7.0$ & $5.9$ & $3.9$ & $3.0$ & $2.4$ \\
		$n \cdot R_{Ent}$ & $41.8$ & $7.4$ & $6.3$ & $4.2$ & $3.1$ & $2.5$ \\
		\bottomrule
	\end{tabular}
	\caption{\revision{}{Noise-range ratio of different metrics when $n=600000$ and $d$ is varying.}}
	\label{tab:noise_range_ratio}
\end{table}

\begin{table}[!h]
    \footnotesize
	\centering
	\begin{tabular}{c|c|c|c|c}
		\toprule
		& Adult & Accident & Loan & Colorado \\ 
		\midrule
		InDif   & $0.017$ & $0.028$ & $0.161$ & $0.137$ \\
		MI	    & $34$ & $314$ & $543$ & $735$ \\
		SUC     & $396$ & $2858$ & $4205$ & $6933$ \\
		\bottomrule
	\end{tabular}
	\caption{\revision{}{Relative error of different metrics when $\epsilon=2.0$.}}
	\label{tab:relative_error}
\end{table}

\section{Computational Complexity Analysis}

In this section, we first theoretically analyze the computational complexity of different methods, and then empirically evaluate the running time and memory consumption.

\mypara{Time Complexity}
The computational time for all methods consist of two parts, marginal selection and dataset generation.

For \privbayes, the marginals are selected by constructing a Bayesian network.
The general idea is to start with a randomly selected node, then gradually add node to the Bayesian network that maximally increase MI of the selected nodes.
To reduce time complexity, \privbayes only consider at most $\gamma$ parents nodes in the selected nodes for each newly added node.
The number of pairs considered in iteration $i$ is $(d-i){i \choose \gamma}$, where $d$ is the number of attributes; thus summing over all iterations the computational complexity is bounded by $d\sum_{i=1}^d{i \choose \gamma}=d{d+1 \choose \gamma+1}$.
In the dataset generation step, \privbayes simply sample records one-by-one using the Bayesian network; thus the time complexity is $O\left( nd \right)$, where $n$ is the number of synthetic records.

For \pgm, except for marginal selection and dataset generation, it includes another component that learn the parameters of Markov random field.
The general idea is to use all the marginals and gradient decent technique to update the parameters.
The gradient decent process would repeat $t_{pg}$ times until convergence.
In practice, $t_{pg}$ is always set to be larger than $10000$.
Thus, the time complexity for learning Markov random field is $O\left( t_{pg}k_{pg} \right)$, where $k_{pg}$ is the number of marginals.
The time complexity for generating synthetic dataset is the same with \privbayes, \ie, $O\left( nd \right)$.
Notice that \pgm does not provide method to select marginals, we only report the time complexity for parameter learning and dataset generation in Table~\ref{tab:computational_complexity}.

For \method, there are $m = {d \choose 2} = \frac{d(d-1)}{2}$ possible pairwise marginals in the marginal selection step.
In iteration $i$ of Algorithm~\ref{alg:marginal_selection}, we need to check $m-i$ pairwise marginals;
thus, the time complexity is $\sum_{i=1}^{k_{ps}}(m-i)=k_{ps}m - \frac{k_{ps}(k_{ps}+1)}{2}=O\left( k_{ps}d^2 \right)$.
In the dataset generation step, we should go through all marginals $t_{ps}$ times to ensure consistency.
Thus, the time complexity is $t_{ps}k_{ps}$ and we typically set $t_{ps}=100$ in practice.

\mypara{Space Complexity}
The memory consumption of all methods consist of two parts, marginal tables and synthetic dataset.
The memory consumption of synthetic dataset for all methods are the same, \ie, $O\left( nd \right)$.
The memory consumption for marginal tables differs in the number of marginals $k_{\star}$ and the average number of cells for each marginal $C_{\star}$.
Specifically, \privbayes contains $d-1$ marginals where each marginal contains at most $\gamma + 1$ attributes.
The number of marginals for \pgm is unlimited; however, when the number of marginals is large, the Markov random field can be dense, resulting in large clique in the induced junction tree, which can be prohibitively large.
\method uses the $2$-way marginal; thus the average number of cells in each marginal is relatively small.
The number of marginals is typically in the range of $[100, 700]$ in our experiment.

\mypara{Empirical Evaluation}
Table~\ref{tab:running_time} and Table~\ref{tab:memory_consumption} illustrate the running time and memory consumption for all methods on four datasets in our experiment.

The empirical running time in Table~\ref{tab:running_time} shows that \privbayes performs best in terms of running time, since it requires only $d-1$ marginals and the sampling process is very fast.
\pgm uses the same set of marginals with \privbayes, while it needs additional time to learn the parameters of Markov random field, and the gradient decent process should repeat more than $10000$ times.
\method is slower than \privbayes and \pgm since it uses much more marginals.
For example, when $\epsilon=2.0$, the Colorado dataset has about $700$ marginals, while \privbayes and \pgm only have $96$ marginals.
Although \method costs more time than \privbayes and \pgm, it only takes less than 4 hours to generate large dataset such as Colorado ($97$ attributes with total domain of $5\cdot 10^{162}$), which is acceptable in practice considering its superior performance.

The empirical memory consumption in Table~\ref{tab:memory_consumption} shows that the memory consumption for all methods are similar for the same dataset.
The reason is that the memory consumption for all methods are dominated by the storage of synthetic datasets, and the storage of marginal tables are less than $10$ Megabytes for all datasets.

\begin{table}[!h]
    \footnotesize
	\centering
	\begin{tabular}{c|c|c}
		\toprule
		        	& Time Complexity & Space Complexity \\ 
		\midrule
		\privbayes  & $O\left( d{d+1 \choose \gamma+1} + nd \right)$ & $O\left( C_{pb}d + nd\right)$   \\
		\pgm	    & $O\left( t_{pg}k_{pg} + nd \right)$ & $O\left( C_{pg}k_{pg} + nd\right)$  \\
		\method     & $O\left( k_{ps}d^2 + t_{ps}k_{ps} \right)$ & $O\left( C_{ps}k_{ps} + nd\right)$   \\
		\bottomrule
	\end{tabular}
	\caption{\revision{}{Comparison of computational complexity for different methods.
	$n, d, k_{\star}$ stand for the number of records in synthetic dataset, the number of attributes and the number of marginals, respectively;
	$C_{\star}$ stands for the average number of cells in each marginal;
	$t_{\star}$ stands for the number of required iterations in each method.}}
	\label{tab:computational_complexity}
\end{table}

\begin{table}[!h]
    \footnotesize
	\centering
	\begin{tabular}{c|c|c|c|c}
		\toprule
		Datasets	& Adult & Accident & Loan & Colorado \\ 
		\midrule
		\privbayes  & 1 min  & 2 min & 7 min & 10 min  \\
		\pgm	    & 4 min  & 18 min & 40 min & 1 h 10 min  \\ 
		\method     & 4 min  & 40 min & 2 h 10 min & 3 h 30 min  \\
		\bottomrule
	\end{tabular}
	\caption{\revision{}{Comparison of running time for different methods.}}
	\label{tab:running_time}
\end{table}

\begin{table}[!h]
    \footnotesize
	\centering
	\begin{tabular}{c|c|c|c|c}
		\toprule
		Datasets	& Adult & Accident & Loan & Colorado \\ 
		\midrule
		\privbayes  & $0.06$ & $0.13$ & $0.36$ & $0.43$  \\
		\pgm	    & $0.06$ & $0.13$ & $0.36$ & $0.43$  \\
		\method     & $0.06$ & $0.13$ & $0.36$ & $0.43$  \\
		\bottomrule
	\end{tabular}
	\caption{\revision{}{Comparison of memory consumption of different methods.
	The unit is Gigabytes.}}
	\label{tab:memory_consumption}
\end{table}

\revisionend

\section{Combine Marginals}
\label{app:combine_marginals}
Till now, we assume two-way marginals are used.  When some marginals contain only a small number of possibilities (\eg, when some attributes are binary), extending to multi-way marginals can help capture more information.
In particular, given $X$, which contains indices of the marginals selected from Algorithm~\ref{alg:marginal_selection}, we first convert each index to its corresponding pair of attributes; we then build a graph $\mathcal{G}$ where each node represents an attribute and each edge corresponds to a pair.  We then find all the cliques of size greater than $2$ in the graph.  If a clique is not very big (smaller than a threshold $\gamma=5000$), and does not overlap much with existing selected attributes (with more than $2$ attributes in common), we merge the 2-way marginals contained in the clique into a multi-way marginal. 

Algorithm \ref{alg:marginal_combine} gives the pseudocode of our proposed marginal combining technique.
We first identify all possible cliques in graph $\mathcal{G}$ and sort them in decending order by their attribute size.
Then, we examine each clique $c$ to determine whether to include as a combined marginal.  
If the clique has a small domain size (smaller than a threshold $\gamma$) and does not contain more than $2$ attributes that is already in the selected attributes set $S$, we include this clique and remove all $2$-way marginals within it.

\begin{algorithm}[!htbp]
    \footnotesize
    \SetCommentSty{small}
    \LinesNumbered
    \caption{Marginal Combine Algorithm}
    \label{alg:marginal_combine}
    
    \KwIn{Selected pairwise marginals $X$,
    threshold $\gamma$}
    \KwOut{Combined marginals $\mathcal{X}$}
    
    Convert $X$ to a set of pairs of attributes;\\
    Construct graph $\mathcal{G}$ from the pairs; \\
    $S\gets\varnothing$; $\mathcal{X} \leftarrow \varnothing$\\
    
    \ForEach{clique size $s$ from $m$ to $3$}
    {
        $C_s \gets$ cliques of size $s$ in $\mathcal{G}$\\
        \ForEach{clique $c \in C_s$}
        {
            \If{$|c\cap S| \le 2$ and domain size of $c$ $\le \gamma$}
            {
                Append $c$ to $\mathcal{X}$ \\
                Append the attributes of $c$ to $S$
            }
        }
    }
    
\end{algorithm}

\section{Comparison of Synthesis Methods}
\label{app:comparison_synthesis}

To better understand the performance of different synthesis methods, we select marginals in a non-private setting and purely compare the performance of different synthesis methods.
This is different from the end-to-end evaluation in Section~\ref{subsec:comparison_e2e} that makes all steps private.
Other settings are the same as Section~\ref{subsec:comparison_e2e}.
We do not compare with \dq in this experiment since Section~\ref{subsec:comparison_e2e} has illustrated that its performance is much worse than other methods.

\mypara{Results}
Figure~\ref{fig:comparison_synthesis} shows the performance of different data synthesis methods.
Both \mcf and \gum exploit dense marginals selected by \marginal, while the performance of \mcf is even worse than the \pgm method and the \privbayes method that using spare marginals.
The reason is that, in each iteration, \mcf enforces the synthetic dataset \ds to fully match the marginal.  
This would severely destroy the correlation established by other marginals.
While \gum preserves the correlation of other marginals by gradually updating marginals in each iteration and using duplication technique.

Comparing Figures~\ref{fig:comparison_e2e} and \ref{fig:comparison_synthesis}, we observe that the experimental results in the private and non-private settings are similar, showing the robustness of \method.
This is consistent with the result in Section \ref{subsec:comparison_marginal}.

\begin{figure*}[!h]
    \centering

    \subfloat[$\epsilon=0.2$]{\includegraphics[width=0.3\textwidth]{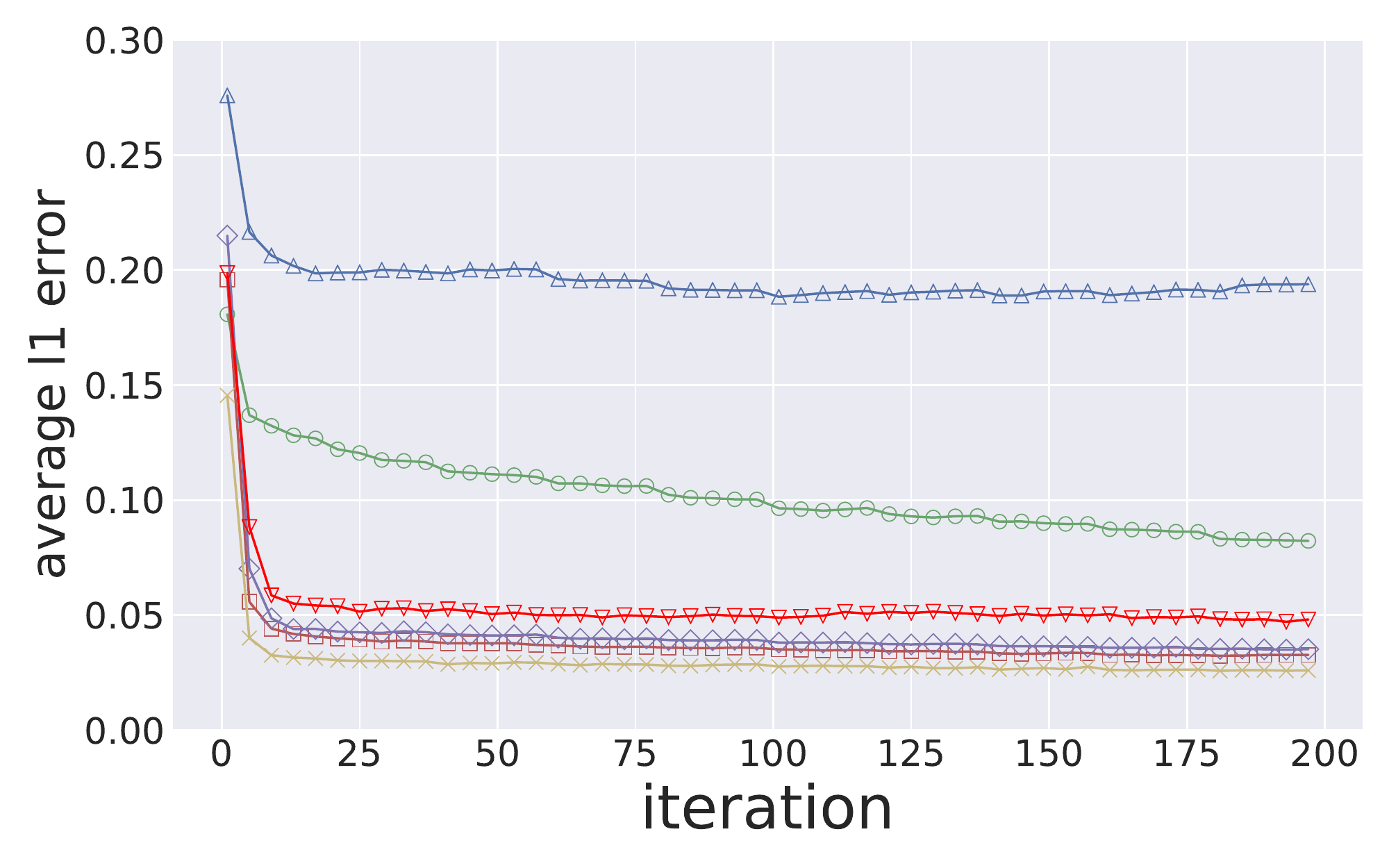}}
    \subfloat[$\epsilon=1.0$]{\includegraphics[width=0.3\textwidth]{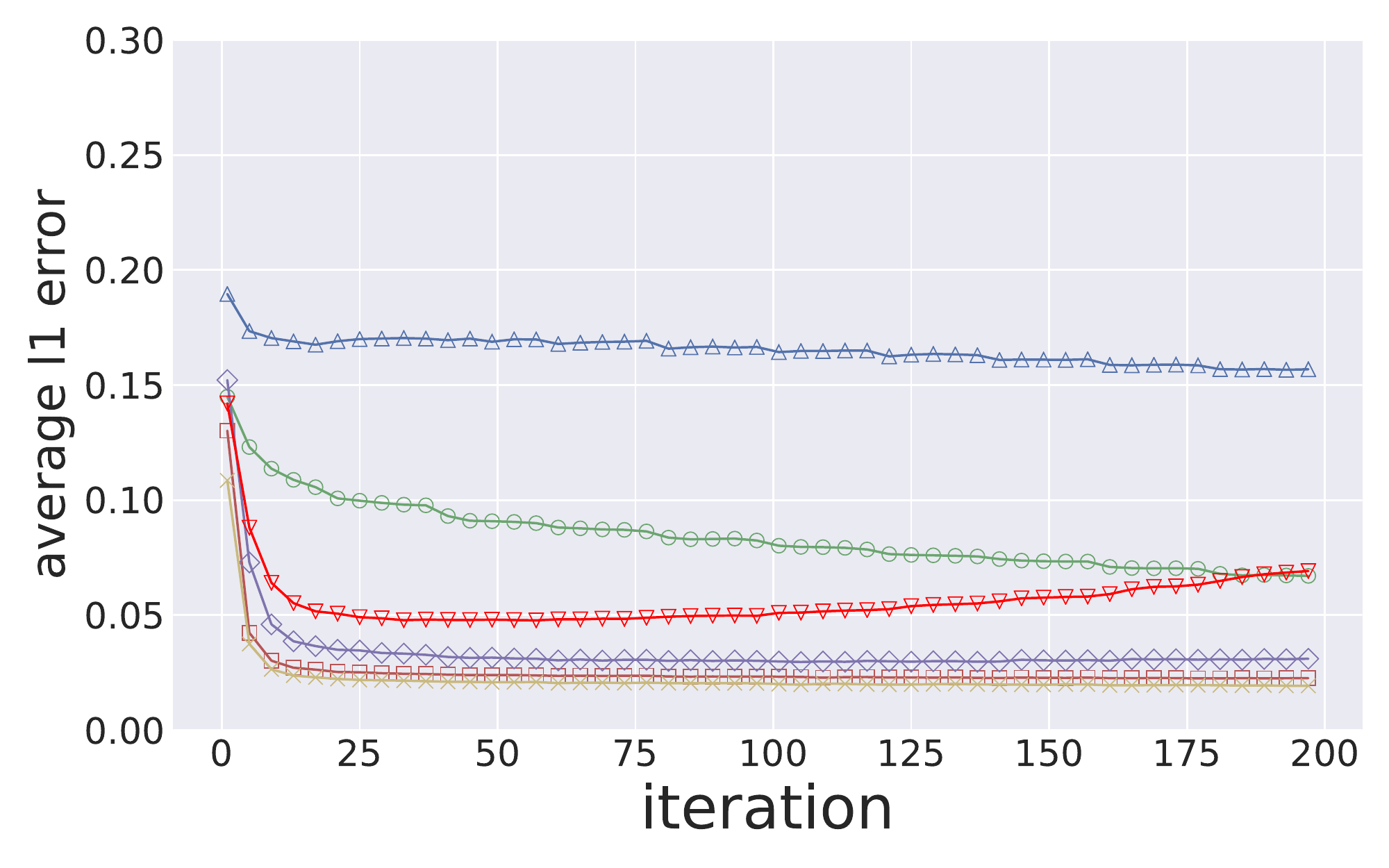}}
    \subfloat[$\epsilon=2.0$]{\includegraphics[width=0.3\textwidth]{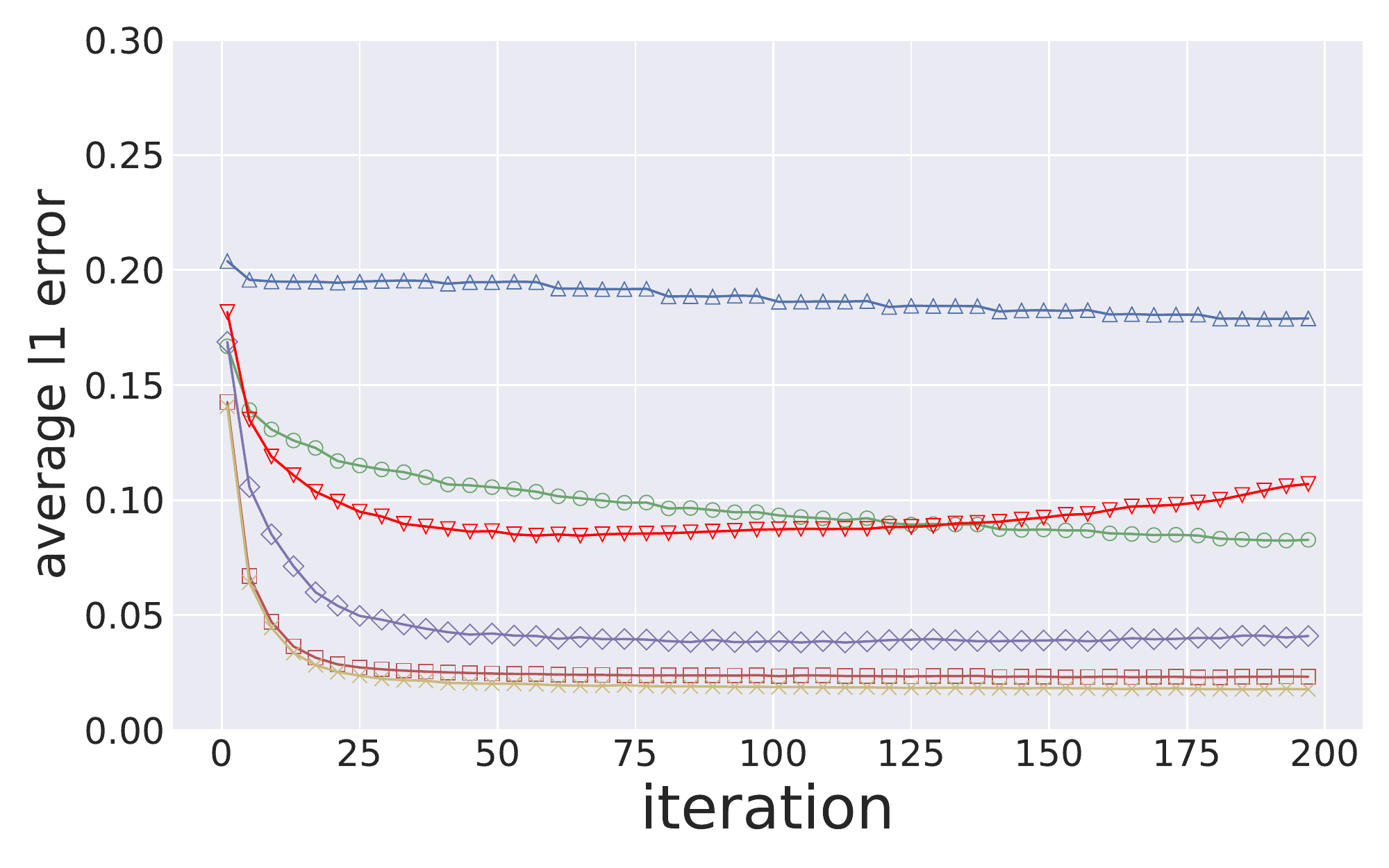}} \\ [1ex]
    
    \centering{US Accident} \\ [-2ex]
    
    \subfloat[$\epsilon=0.2$]{\includegraphics[width=0.3\textwidth]{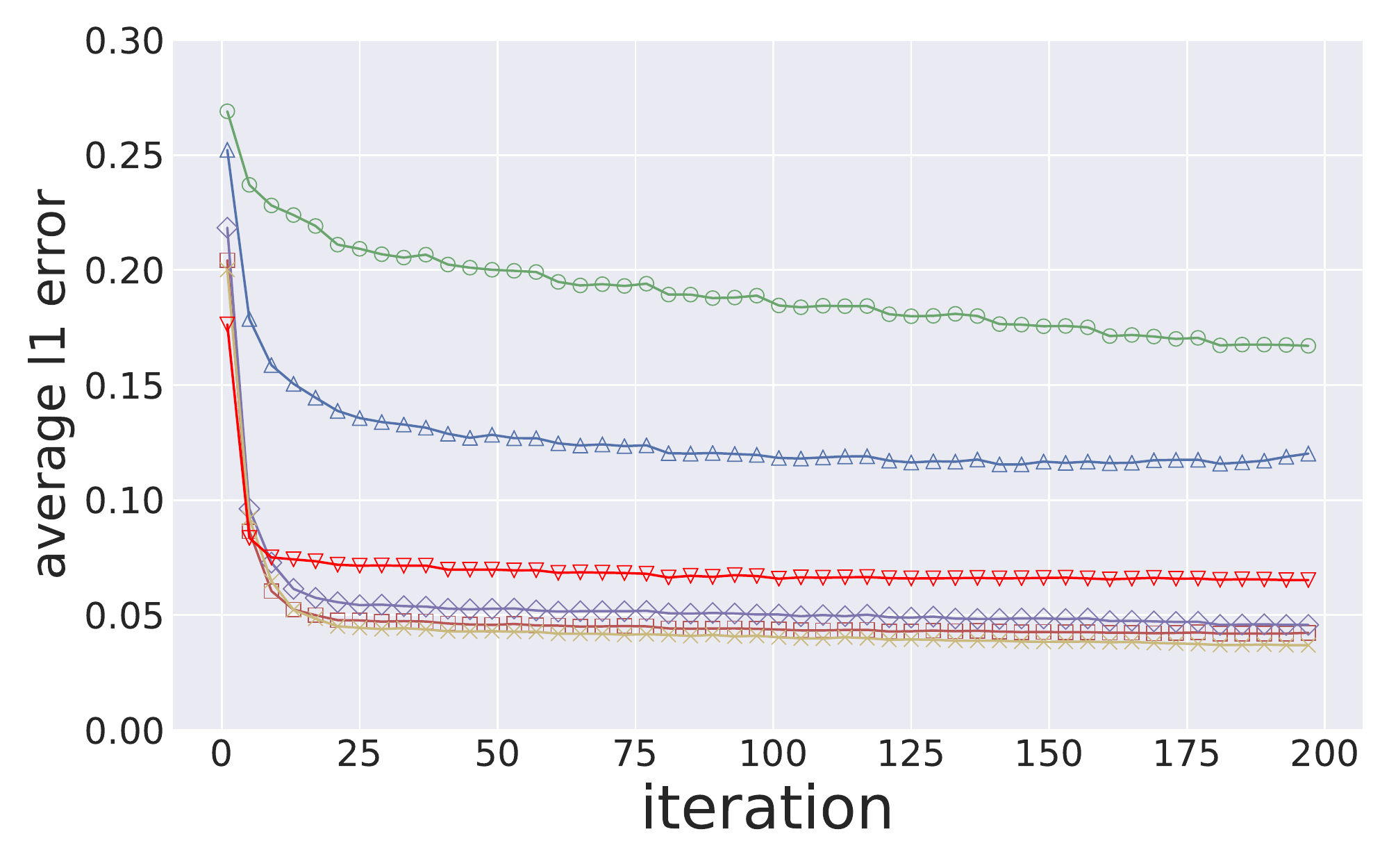}}
    \subfloat[$\epsilon=1.0$]{\includegraphics[width=0.3\textwidth]{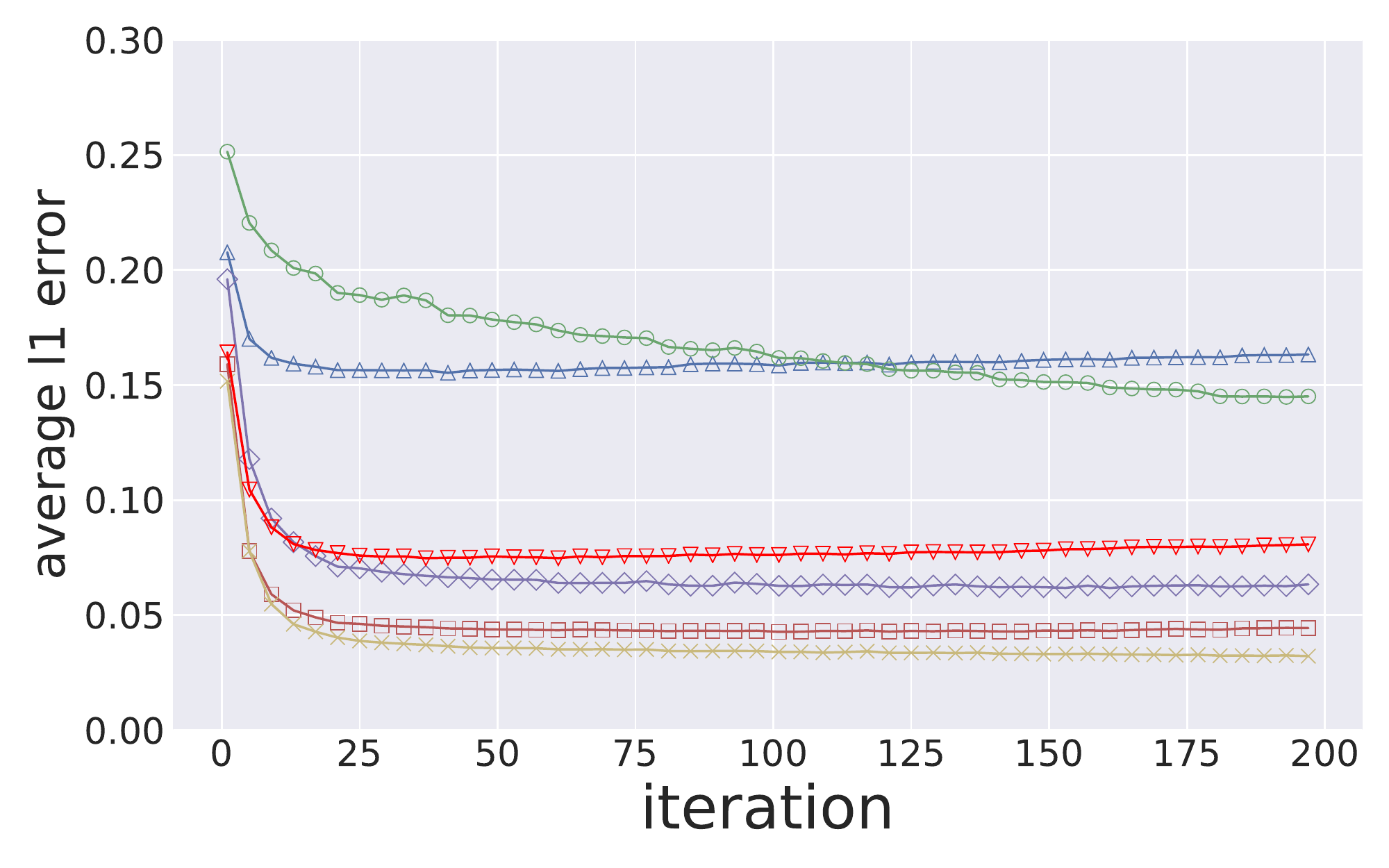}}
    \subfloat[$\epsilon=2.0$]{\includegraphics[width=0.3\textwidth]{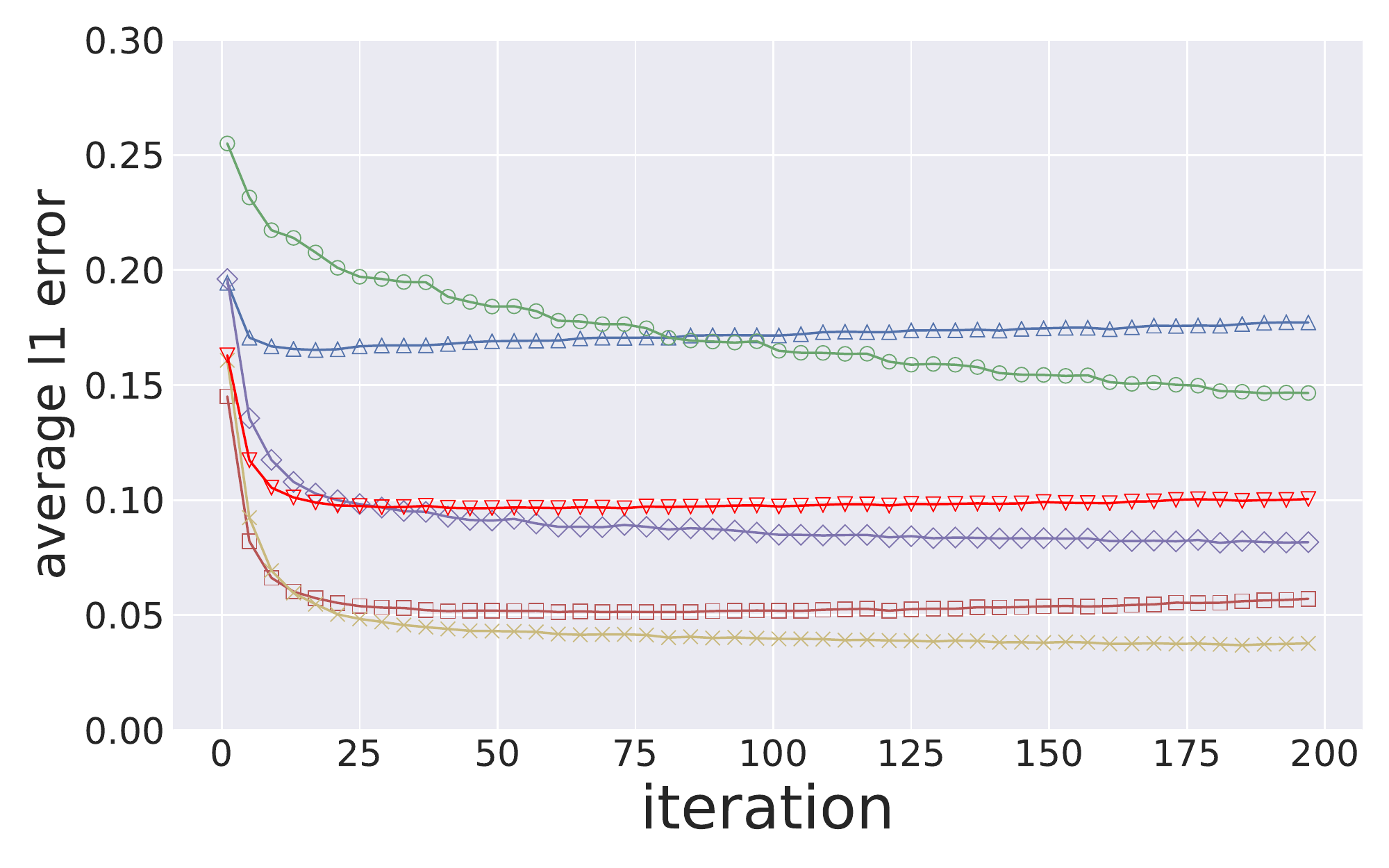}} \\ [1ex]
    
    \centering{Loan} \\ [-2ex]
    
    \subfloat[$\epsilon=0.2$]{\includegraphics[width=0.3\textwidth]{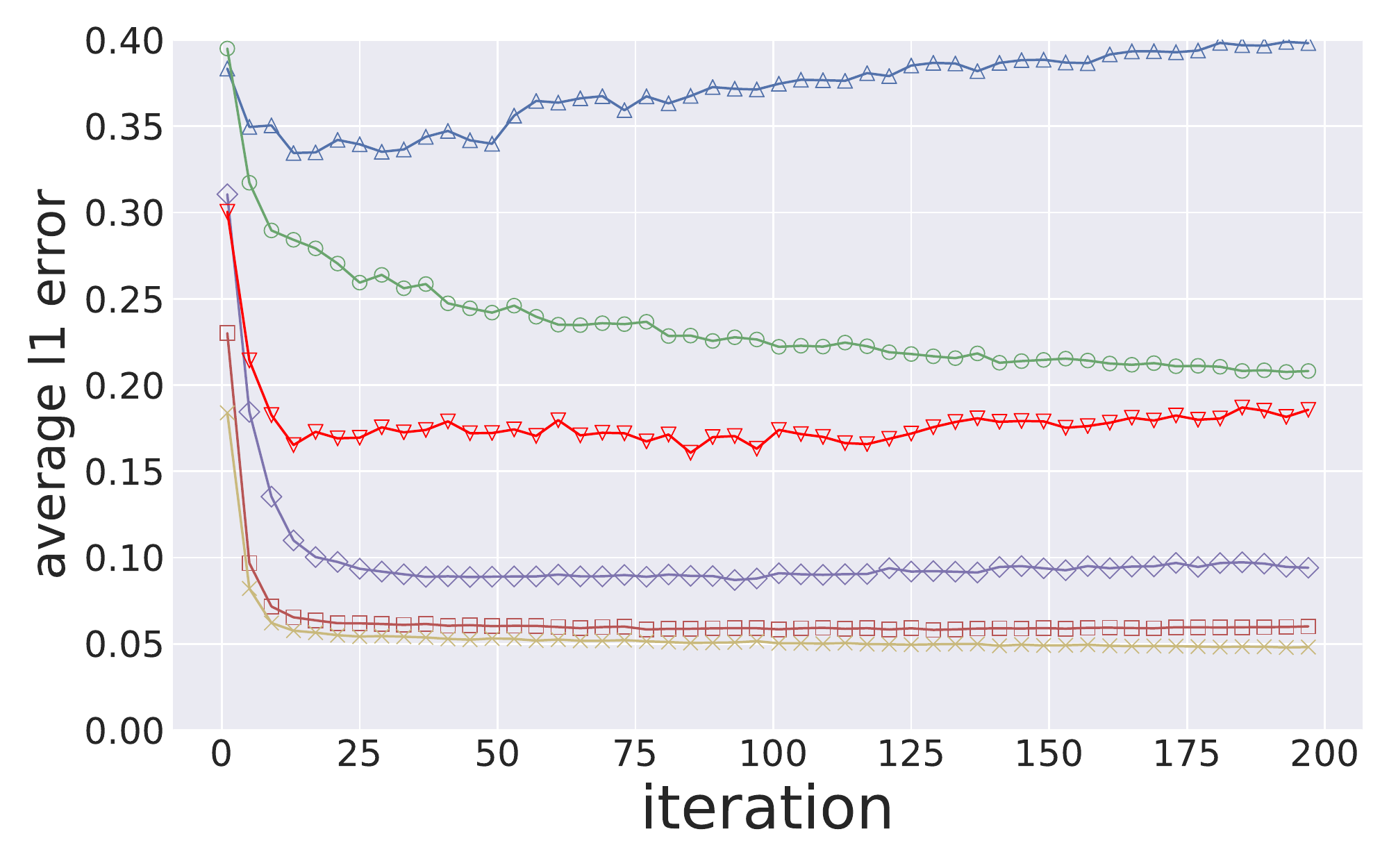}}
    \subfloat[$\epsilon=1.0$]{\includegraphics[width=0.3\textwidth]{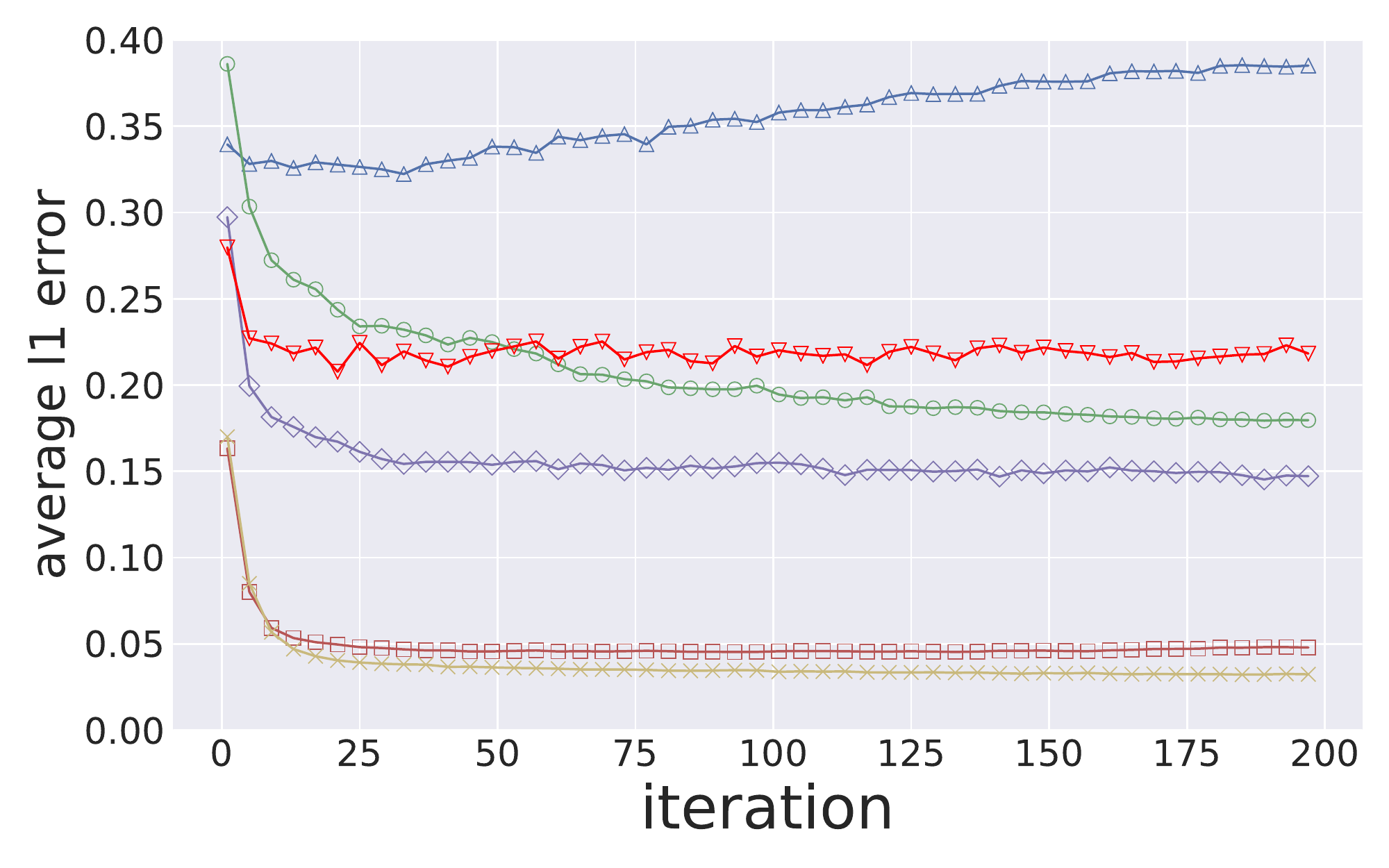}}
    \subfloat[$\epsilon=2.0$]{\includegraphics[width=0.3\textwidth]{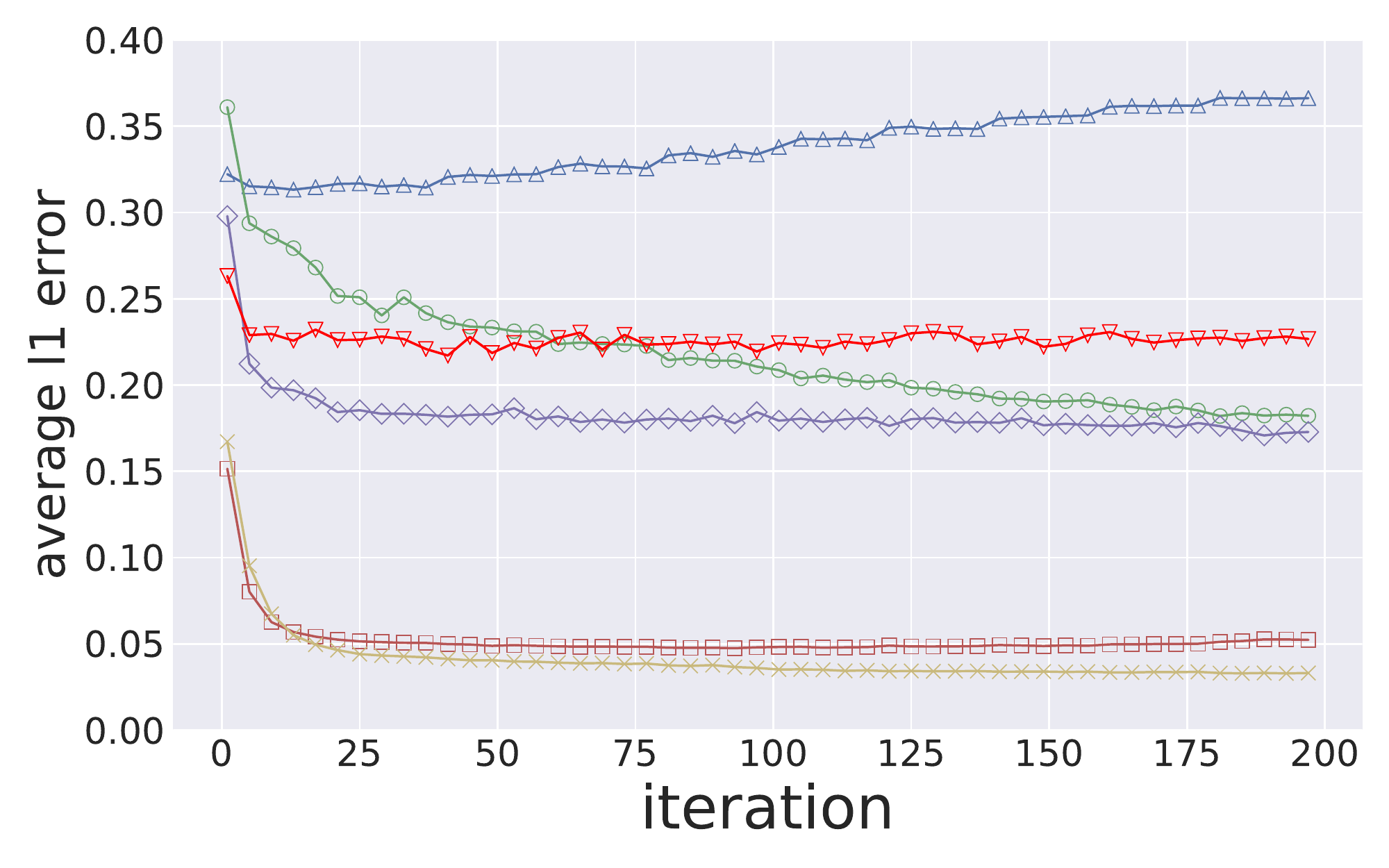}} \\ [1ex]
    
    \centering{Colorado} \\

    \subfloat{\includegraphics[width=0.7\textwidth]{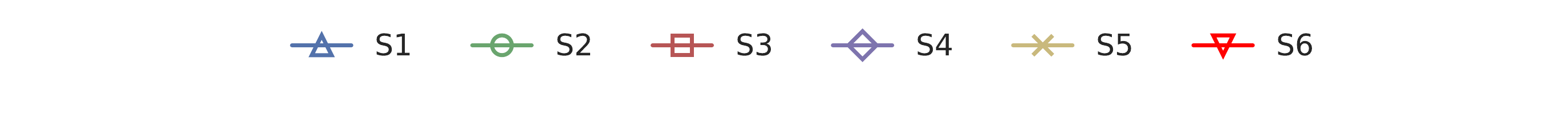}}  \\ [-3ex]

    \caption{Comparison of different records updating algorithms.
    The y-axis stands for the average $\ell_1$ error for all marginals in each iteration.
    S1 stands for the All Replace strategy, S2 stands for the All Duplicate strategy, S3 stands for the All Half-half strategy, S4 stands for the Replace Plus Duplicate strategy, S5 stands for the Half-half Plus Duplicate strategy, S6 stands for the Half-half Plus Replace strategy.
    }
    \label{fig:converge_update}
\end{figure*}

\begin{figure*}[!h]
    \centering

    \subfloat[$\epsilon=0.2$]{\includegraphics[width=0.3\textwidth]{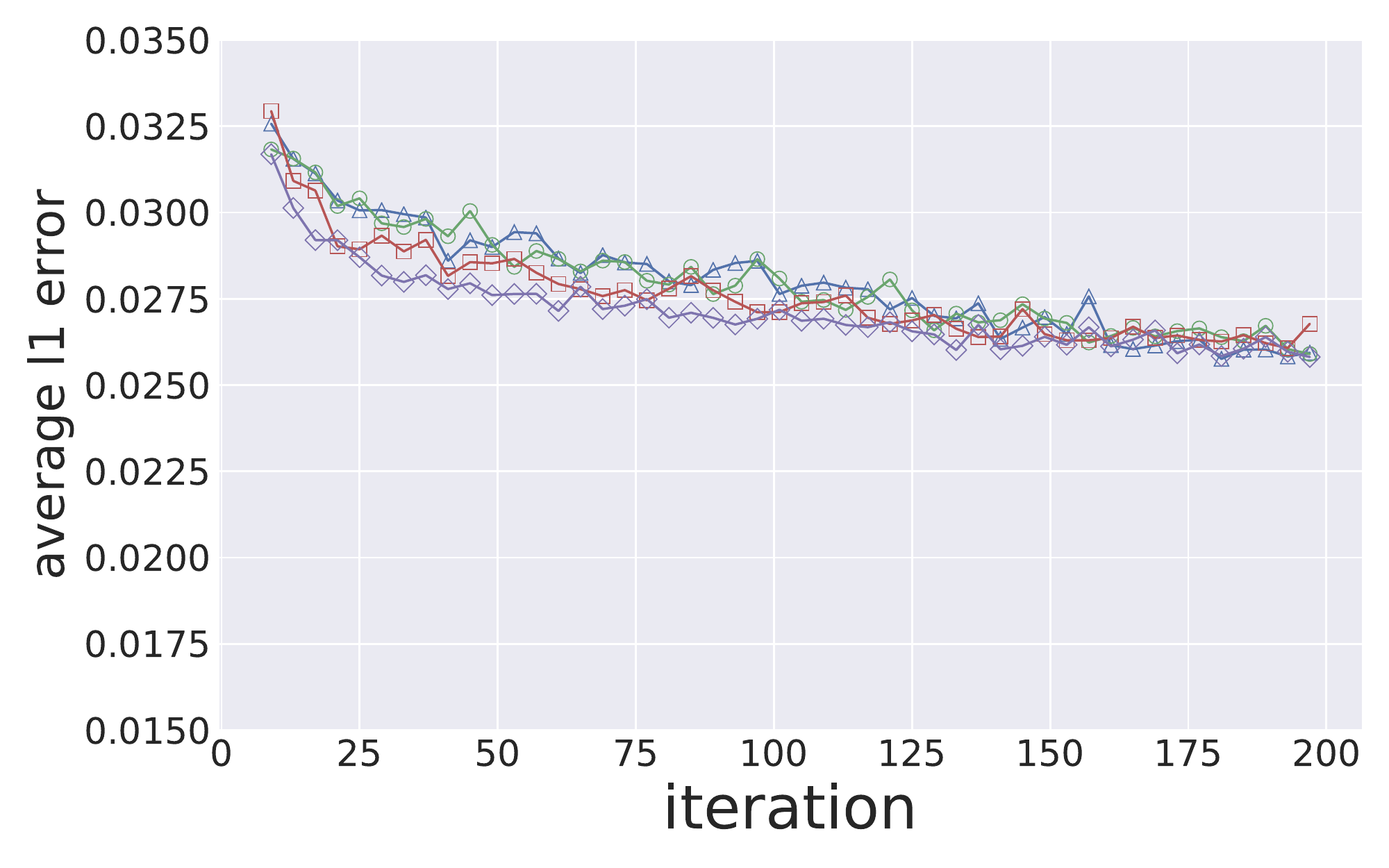}}
    \subfloat[$\epsilon=1.0$]{\includegraphics[width=0.3\textwidth]{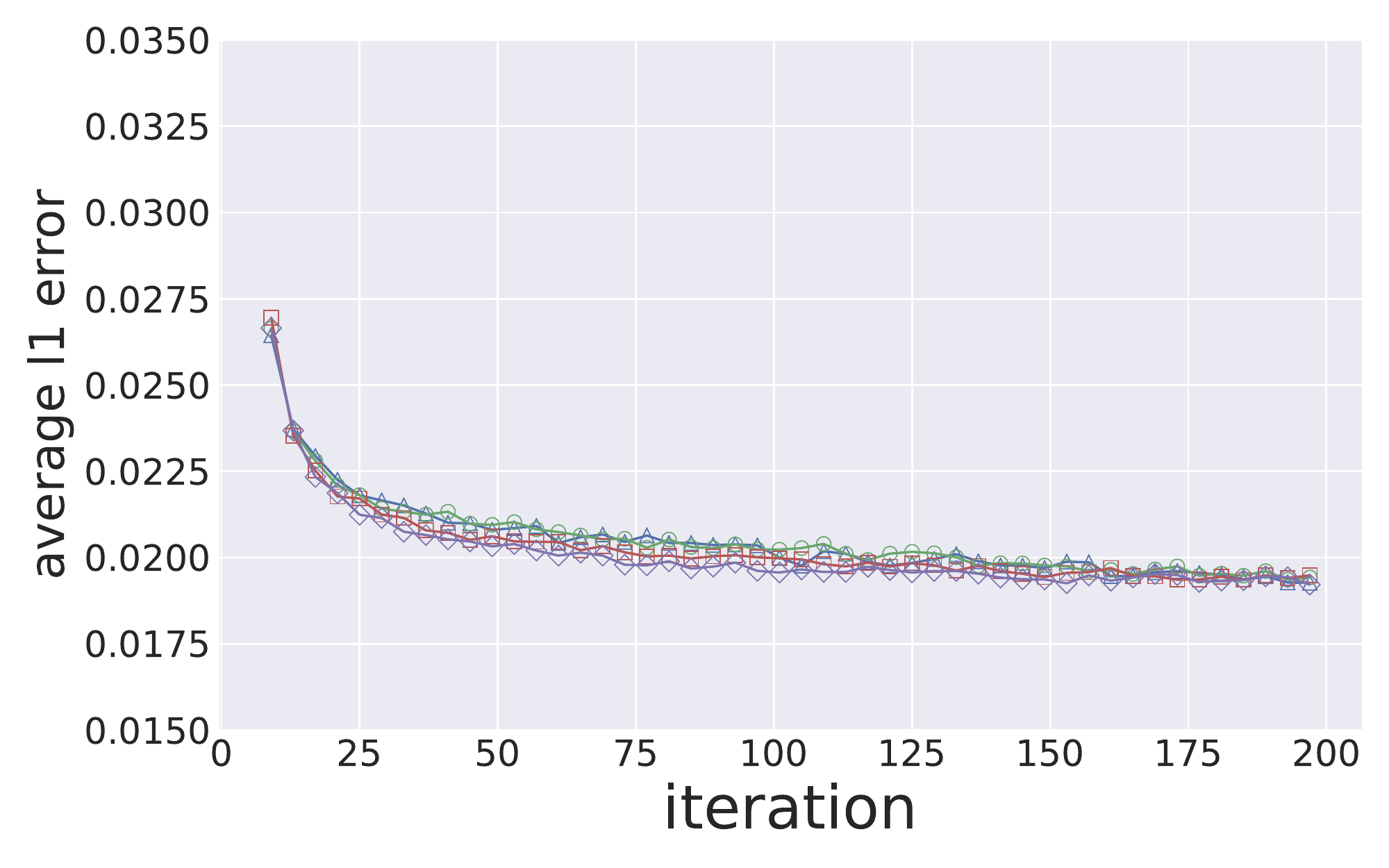}}
    \subfloat[$\epsilon=2.0$]{\includegraphics[width=0.3\textwidth]{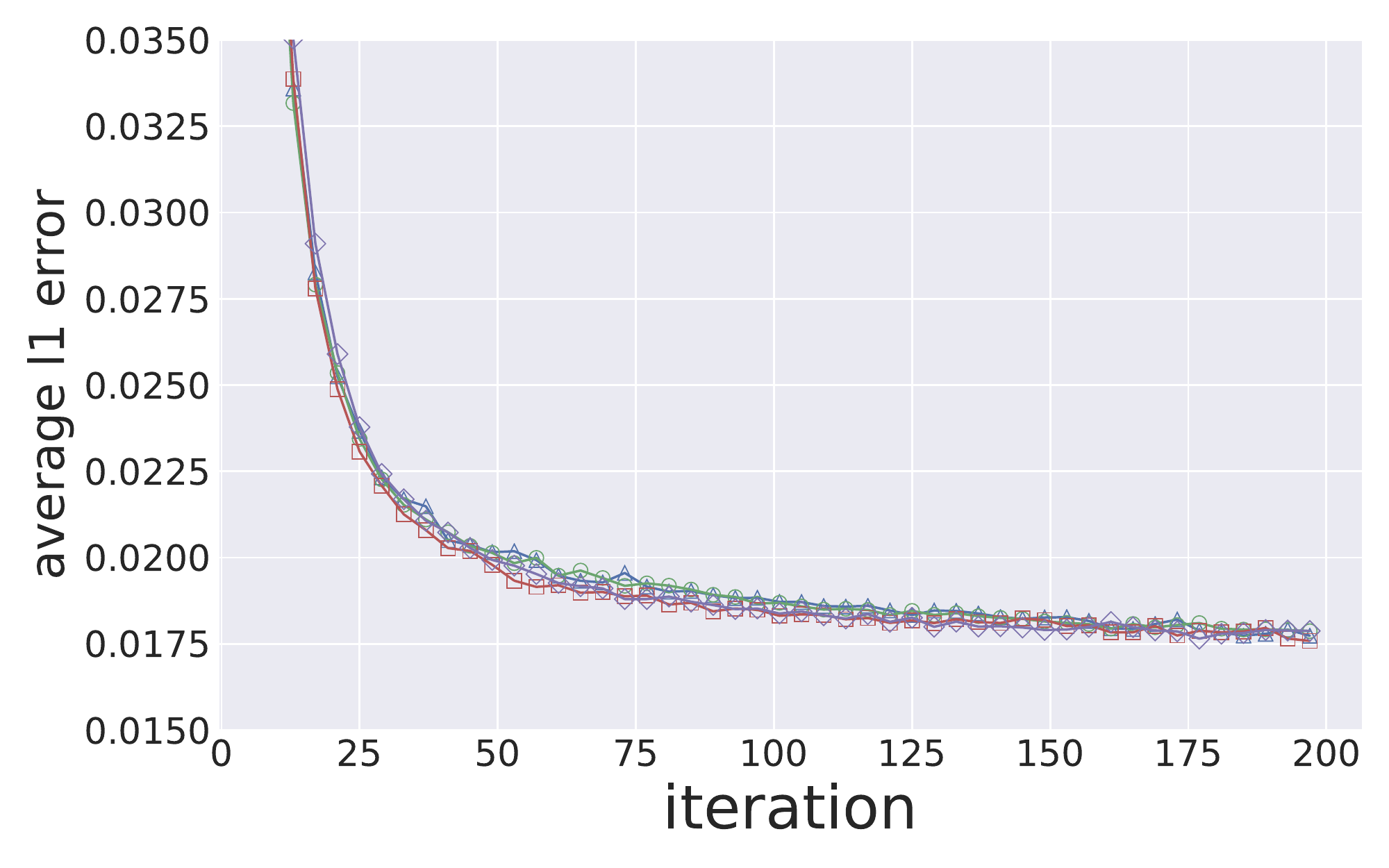}} \\ [1ex]
    
    \centering{US Accident} \\ [-2ex]
    
    \subfloat[$\epsilon=0.2$]{\includegraphics[width=0.3\textwidth]{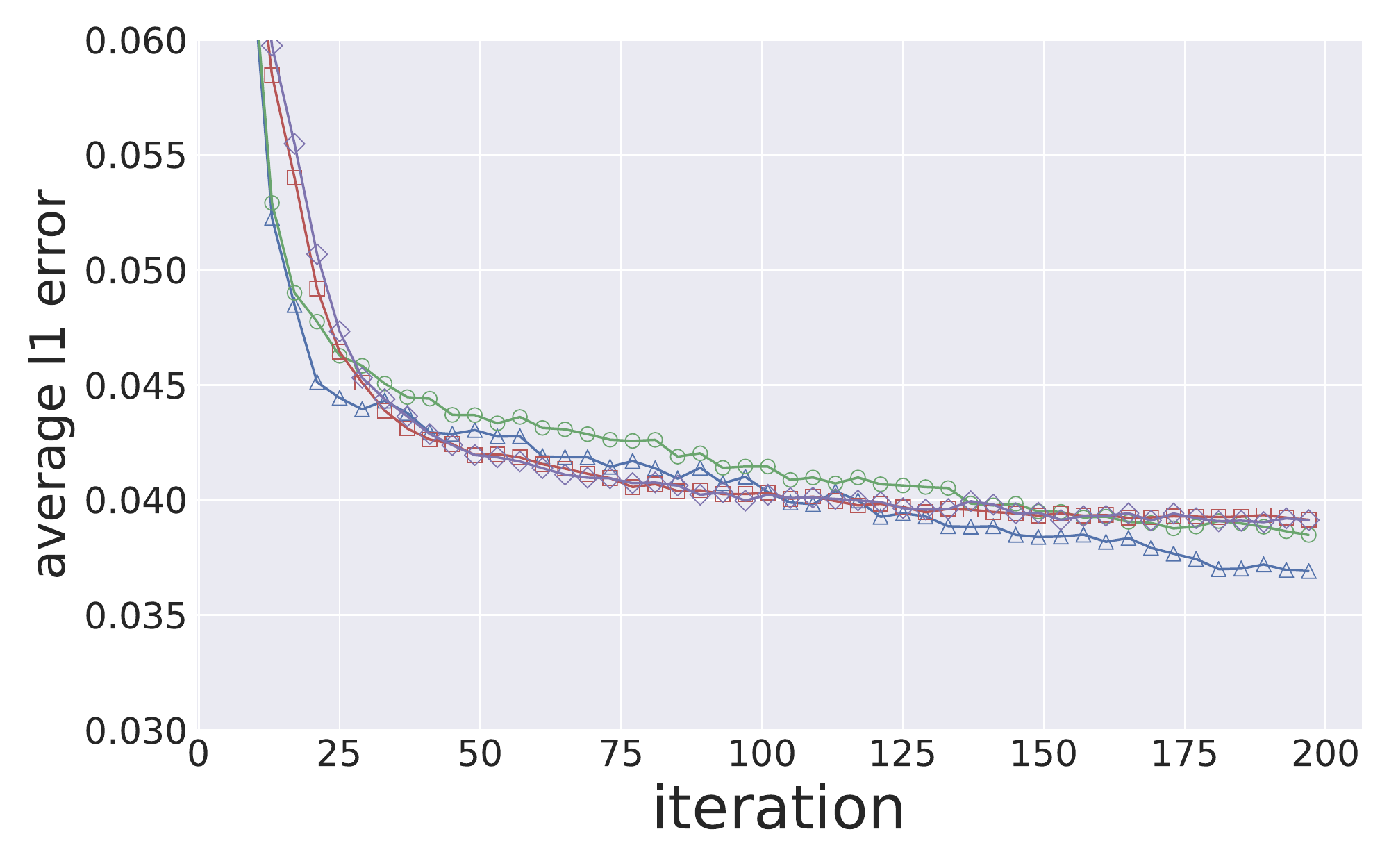}}
    \subfloat[$\epsilon=1.0$]{\includegraphics[width=0.3\textwidth]{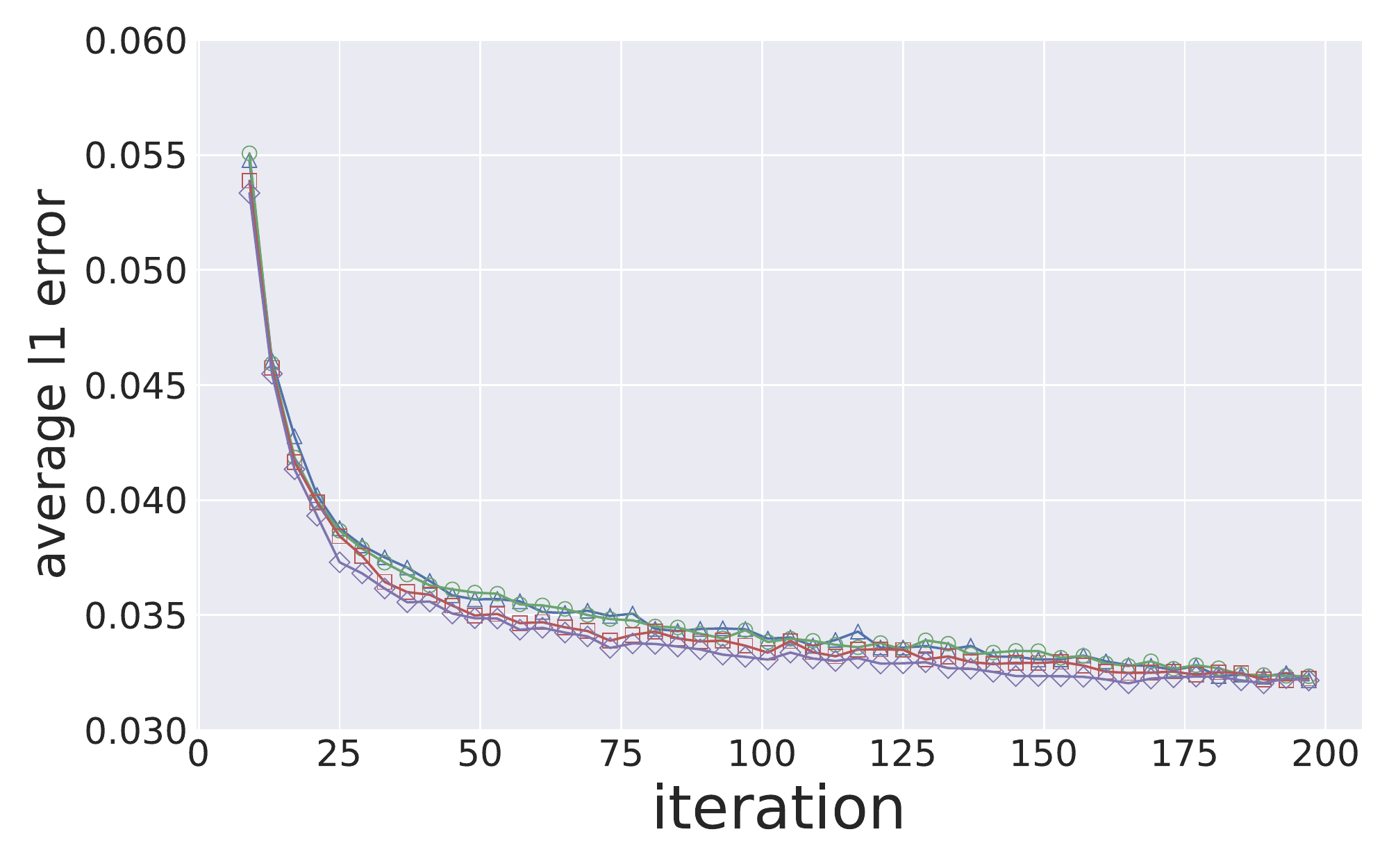}}
    \subfloat[$\epsilon=2.0$]{\includegraphics[width=0.3\textwidth]{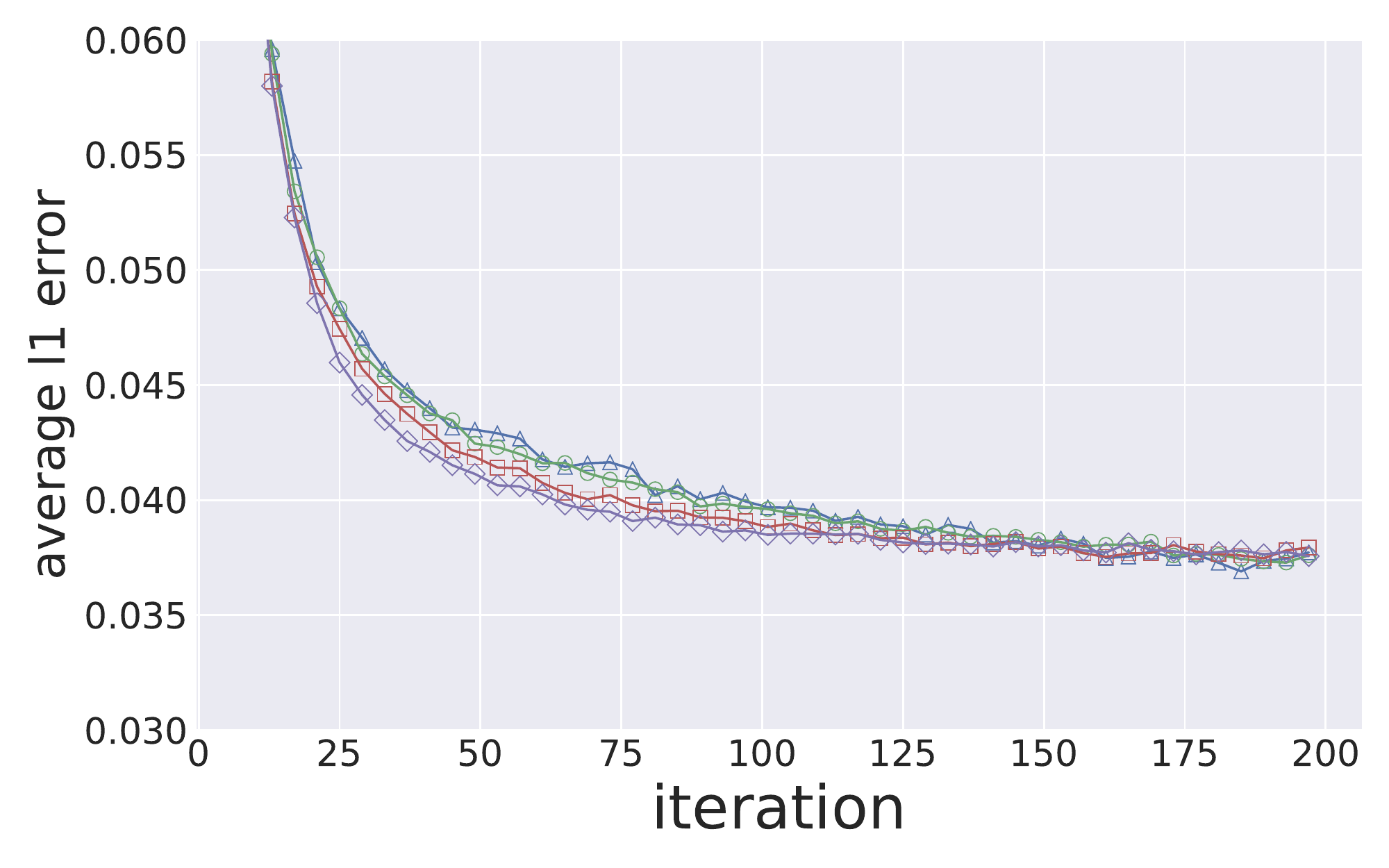}} \\ [1ex]
    
    \centering{Loan} \\ [-2ex]
    
    \subfloat[$\epsilon=0.2$]{\includegraphics[width=0.3\textwidth]{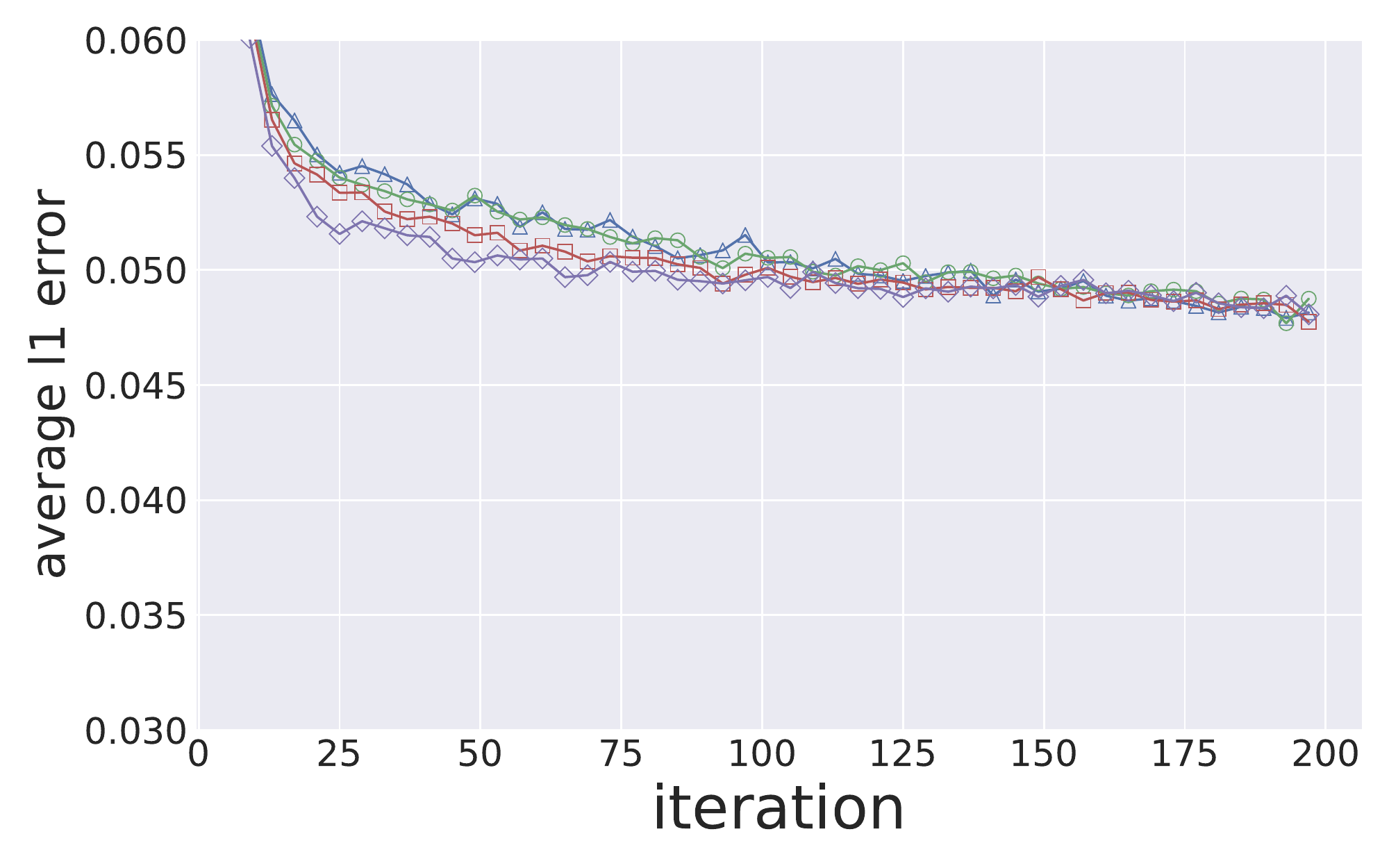}}
    \subfloat[$\epsilon=1.0$]{\includegraphics[width=0.3\textwidth]{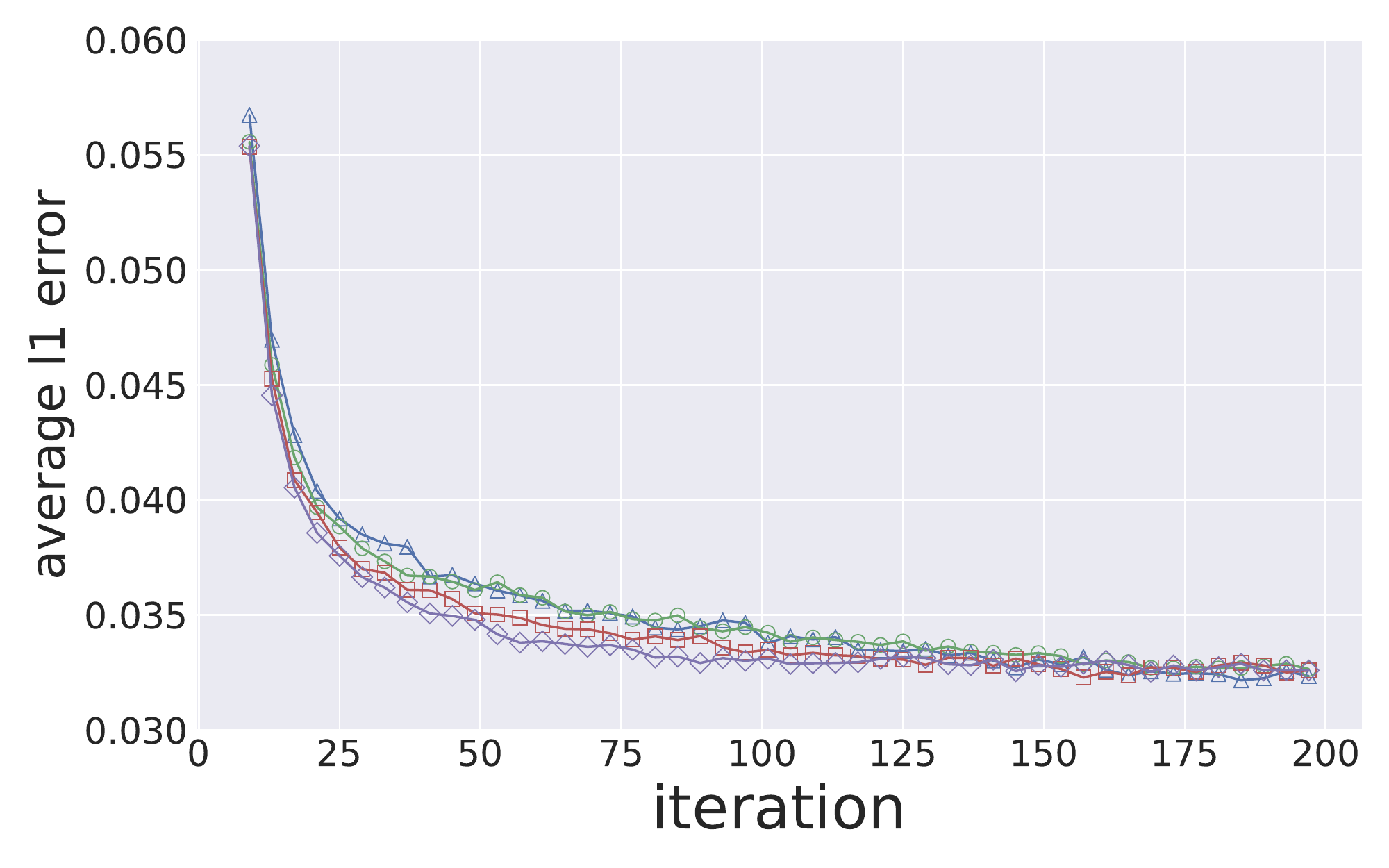}}
    \subfloat[$\epsilon=2.0$]{\includegraphics[width=0.3\textwidth]{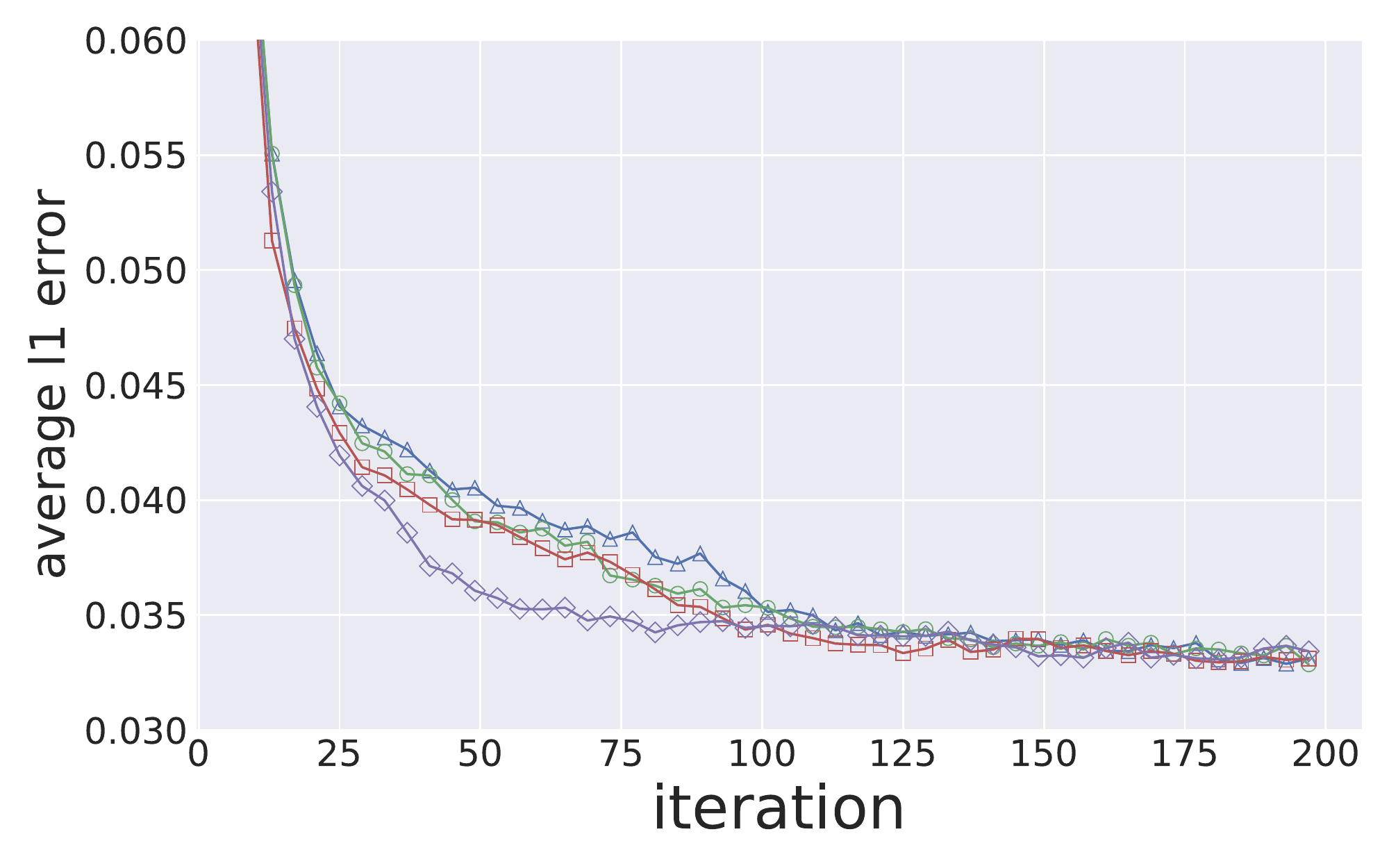}} \\ [1ex]
    
    \centering{Colorado} \\ 

    \subfloat{\includegraphics[width=0.7\textwidth]{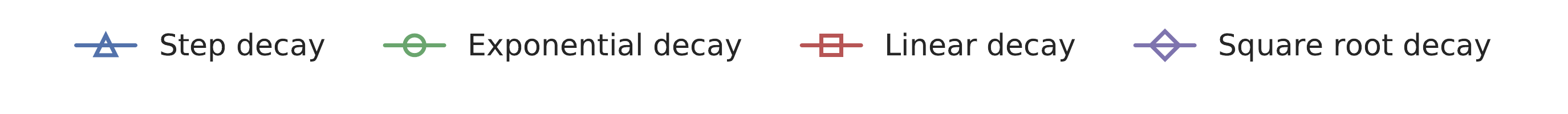}}  \\

    \caption{Comparison of different decay algorithms for $\alpha$.
    The y-axis stands for the average $\ell_1$ error for all marginals in each iteration.
    }
    \label{fig:converge_decay}
\end{figure*}

\section{Comparison of Records Updating \\ Strategies}
\label{app:comparison_update}

In this section, we compare different records updating strategies for \gum.

\mypara{Competitors}
We have three basic strategies: 
(1) only replace the attributes in the marginals;
(2) duplicate the whole records;
(3) half replace and half duplicate.
The example in Figure~\ref{fig:update_example} corresponds to half replace and half duplicate strategy, where $v_4$ is replaced and $v_3$ is duplicated by $v_5$, respectively.
In the empirical study, we observe that applying different basic strategies in different iterations could benefit the convergence performance.
Thus, we also consider three hybrid strategies that combine different basic strategies.
In what follows, we summarize several strategies in our experiments.

\begin{itemize}
    \item \mypara{S1: All Replace}
    Using the replace strategy in all iterations.
    
    \item \mypara{S2: All Duplicate}
    Using the duplicate strategy in all iterations.
    
    \item \mypara{S3: All Half-half}
    Using the half-half strategy in all iterations.
    
    \item \mypara{S4: Replace Plus Duplicate}
    Using the replace strategy and the duplicate strategy in different iterations.
    
    \item \mypara{S5: Half-half Plus Duplicate}
    Using the half-half strategy and the duplicate strategy in different iterations.
    
    \item \mypara{S6: Half-half Plus Replace}
    Using the half-half strategy and the replace strategy in different iterations.
\end{itemize}

\mypara{Results}
Figure~\ref{fig:converge_update} illustrates the convergence performance of different records updating strategies.
In each iteration, we record the average $\ell_1$ error of all marginals on the current synthetic dataset.

The experimental results show that both using replace and duplicate strategy alone cannot achieve satisfactory performance.
In one hand, only using the replace strategy would significantly destroy the correlation information established by other marginals.
In another hand, only using the duplicate strategy will not introduce new records that can better reflect the overall joint distribution.
The half-half strategy balance the drawbacks of replace and duplicate strategy, and can achieve pretty good convergence performance.
Furthermore, if we combine the duplicate strategy and the half-half strategy in different iterations, the convergence performance can be further improved.
Thus, we use Half-half Plus Duplicate strategy in all of our experiments.

\section{Comparison of Decay Algorithms}
\label{app:comparison_decay}

Figure~\ref{fig:converge_decay} illustrates the convergence performance of different decay algorithms for $\alpha$.
In each iteration, we record the average $\ell_1$ error of all marginals on the current synthetic dataset.
We set the initial $\alpha$ as $1.0$.

The experimental result shows that the average $\ell_1$ error drops significantly in the first $100$ iterations.
After $100$ iteration, the improvement of increasing the number of iterations is negligible.
To achieve a robust convergence performance, we set the number of iterations as $100$ in all of our experiments.

When we compare different decay algorithms, we find that their convergence performance are quite similar in most cases.
We further find that the step decay algorithm consistently performs good in almost all settings while others perform relatively bad in some settings.
Thus, we use the step decay algorithm in all of our experiments.
The step decay algorithm is also widely used to update the step size in the training of deep neural networks~\cite{krizhevsky2012imagenet}.

Another observation is that, when the privacy budget increases from $0.2$ to $1.0$, the average $\ell_1$ error increases for all datasets.
The reason is that when the privacy budget is small, the marginal distribution may deviate from the original dataset.
This may lead to no dataset that can perfectly match the distribution of all marginals.
However, when the privacy budget increases from $1.0$ to $2.0$, the average $\ell_1$ error of some datasets drops.
Notice that this phenomenon do not mean that the overall performance of the synthetic dataset drops.
The reason is that we only measure the average $\ell_1$ error of the marginals selected by \marginal.
According to Algorithm~\ref{alg:marginal_selection}, we could select more marginals when the privacy budget is larger which captures more correlation information.

\begin{figure*}[!ht]
    \centering

    \subfloat[US Accident]{\includegraphics[width=0.3\textwidth]{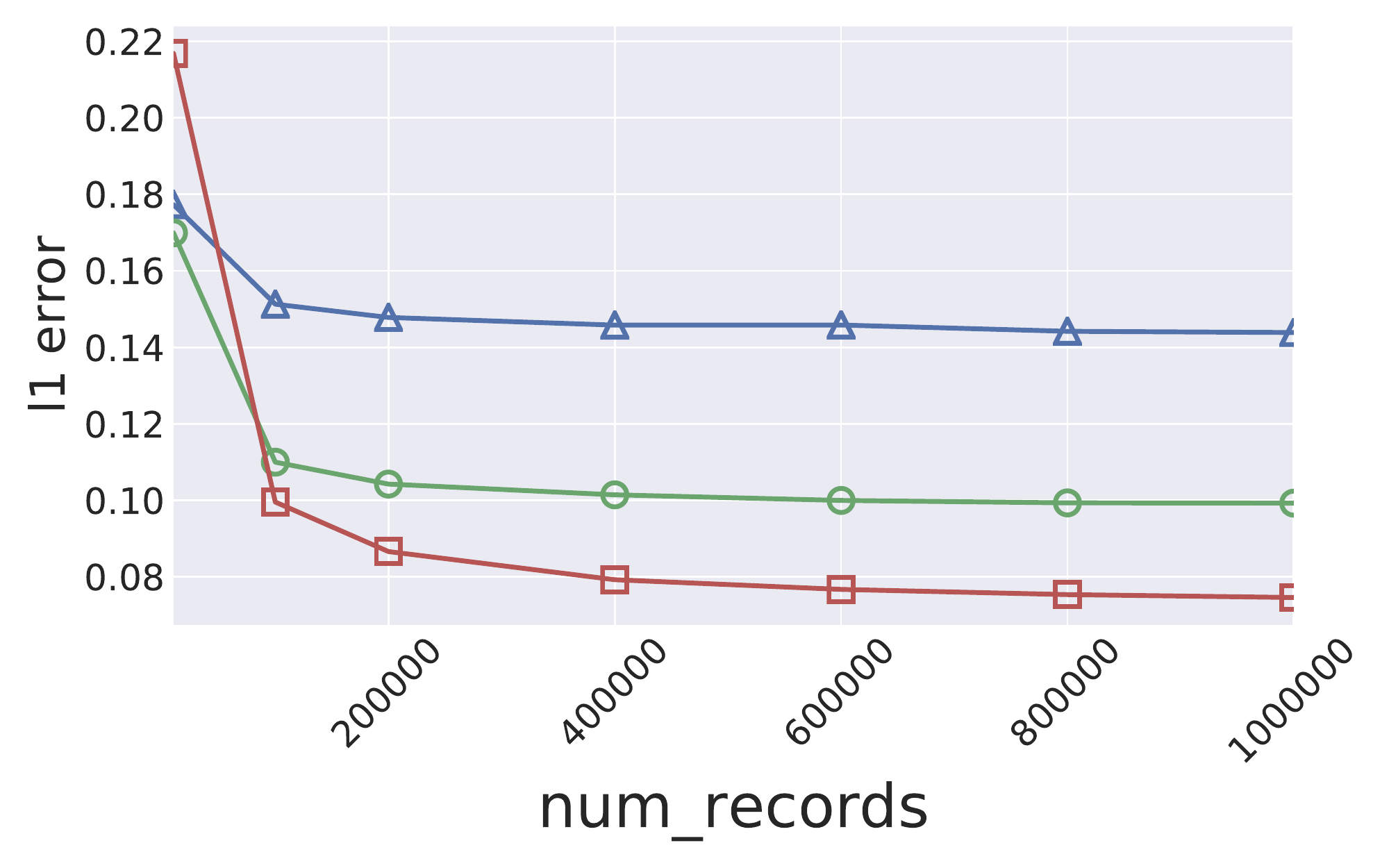}}
    \subfloat[Loan]{\includegraphics[width=0.3\textwidth]{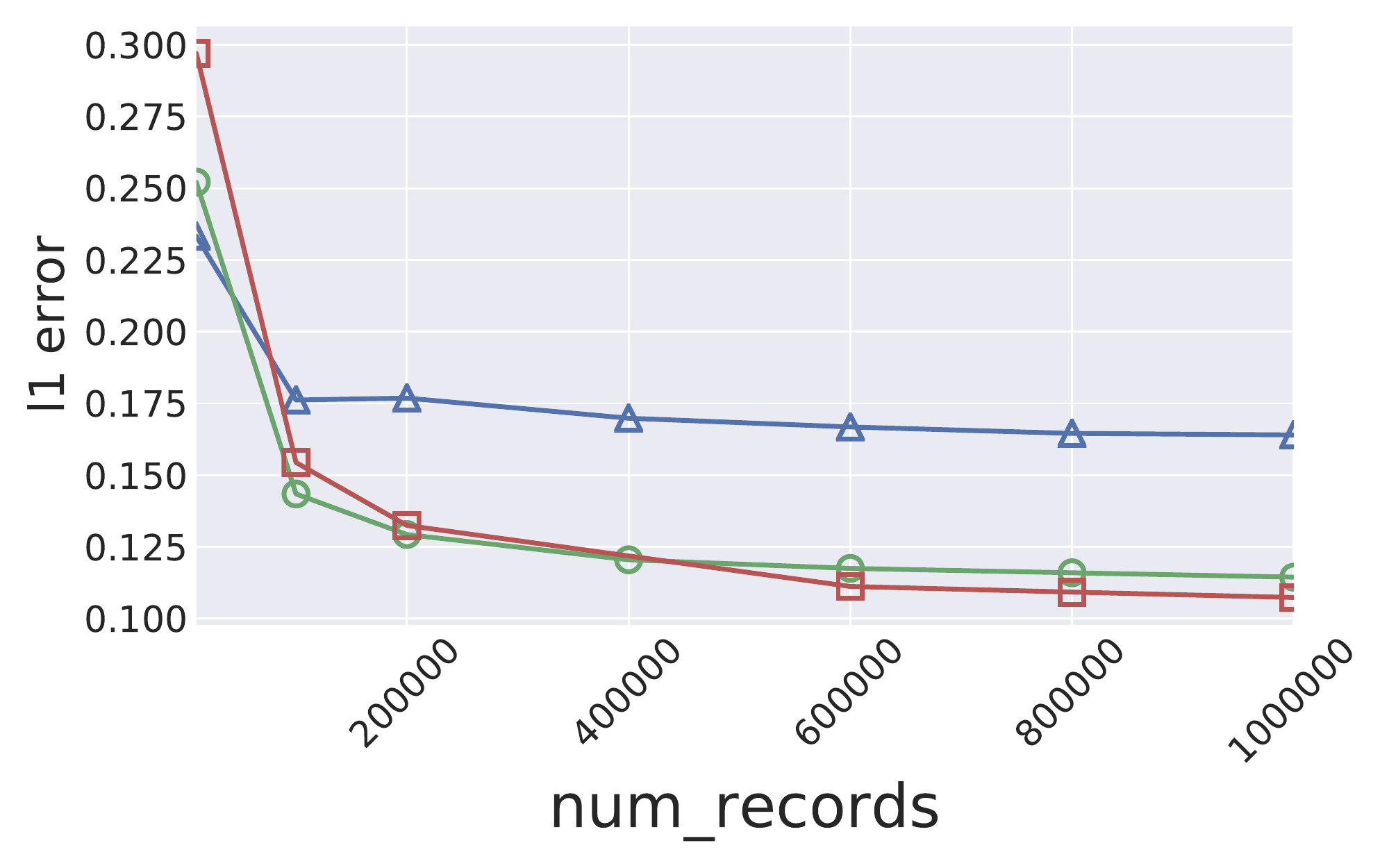}}
    \subfloat[Colorado]{\includegraphics[width=0.3\textwidth]{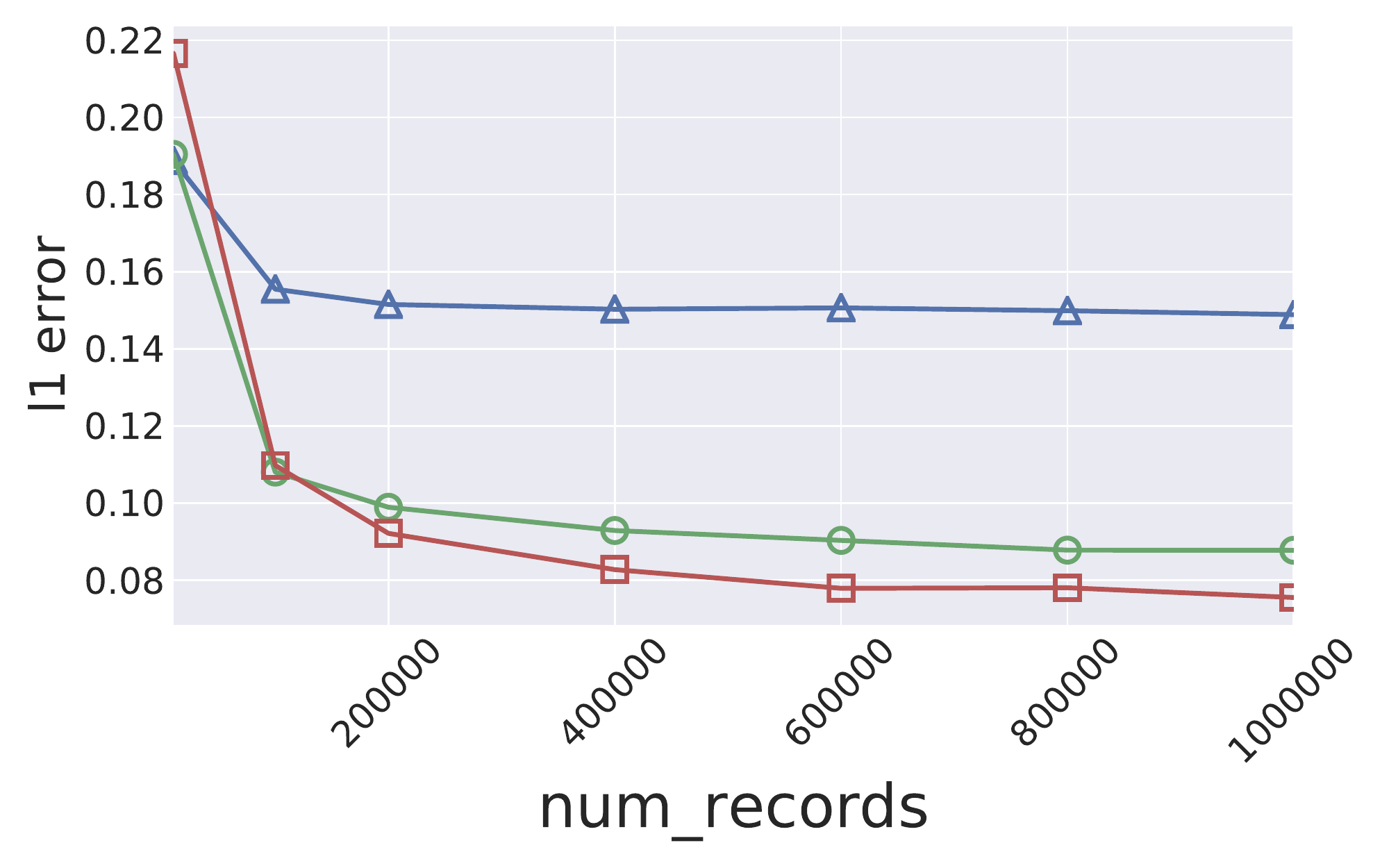}} \\ [1ex]

    \subfloat{\includegraphics[width=0.7\textwidth]{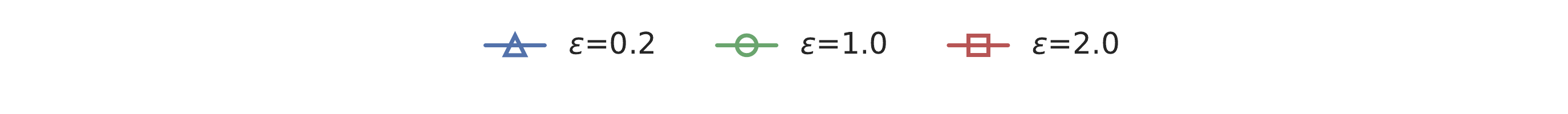}}  \\

    \caption{Impact of the number of synthetic records.
    }
    \label{fig:config_num_records}
\end{figure*}

\section{Impact of the Number of Synthetic Records}
\label{app:impact_number_records}

Figure~\ref{fig:config_num_records} shows the impact of the number of synthetic records $|\ds|$.
We only report the results of the average $\ell_1$ error of all $2$-way marginals, since
it is a good indicator for the overall performance.
In the experiment, we vary $|\ds|$ from $10000$ to $1000000$.

The experimental result shows that the performance improves when $|\ds|$ increases from $10000$ to $600000$.
When $|\ds|$ is larger than $600000$, increasing the number synthetic records has negligible impact on the performance.
We use the number of records in the original datasets, which is approximate to $600000$.

\section{Additional Results on Other Datasets}
\label{app:other_datasets}

Figure~\ref{fig:adult_results} and Figure~\ref{fig:loan_results} show the additional experimental results for Adult and Loan datasets.
The conclusion is consistent with that of the Accident and Colorado datasets.

\begin{figure*}[!htpb]
    \centering
    
    \subfloat[pair-wise marginal]{\includegraphics[width=0.3\textwidth]{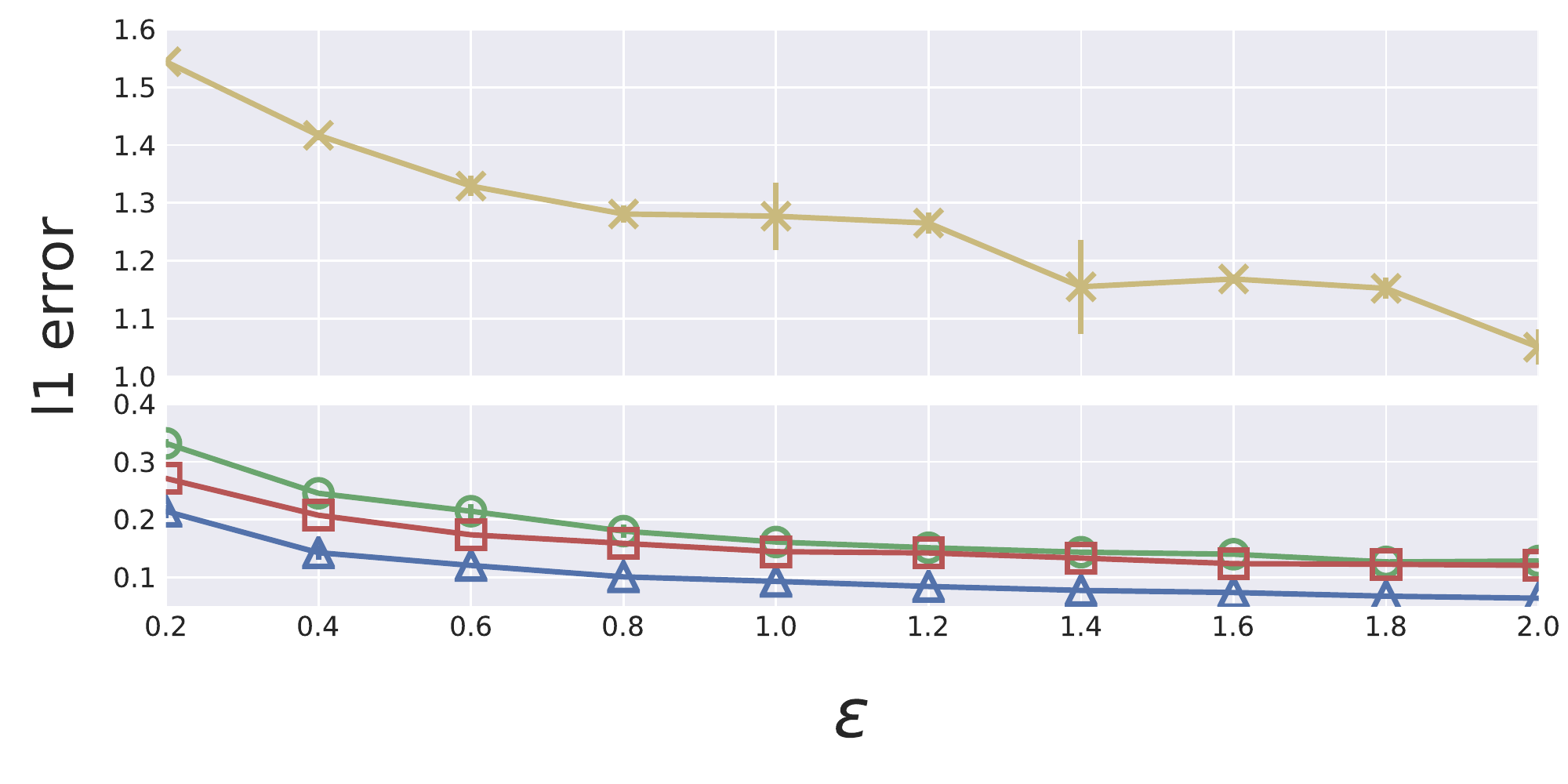}}
    \subfloat[range query]{\includegraphics[width=0.3\textwidth]{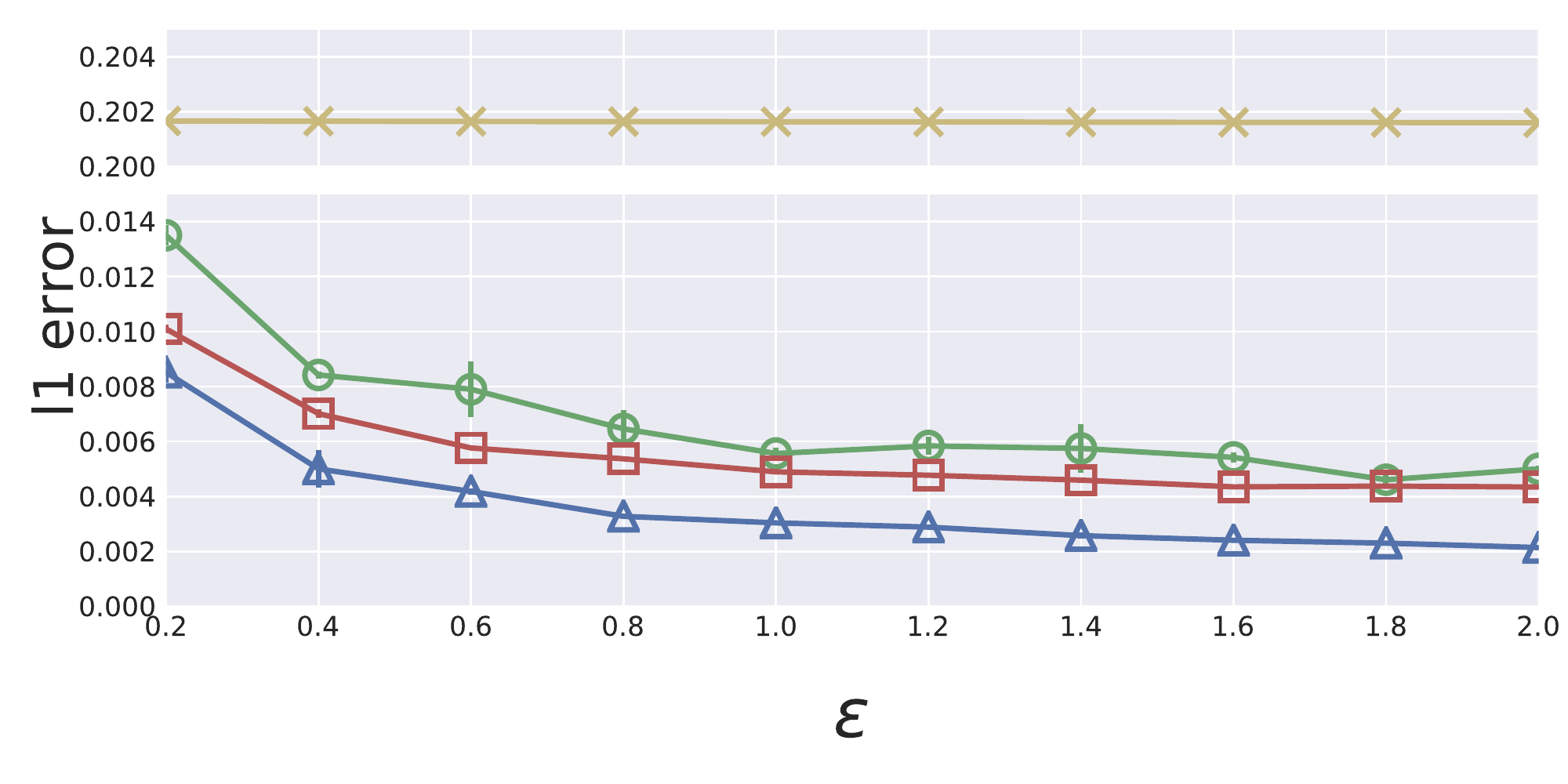}}
    \subfloat[classification]{\includegraphics[width=0.3\textwidth]{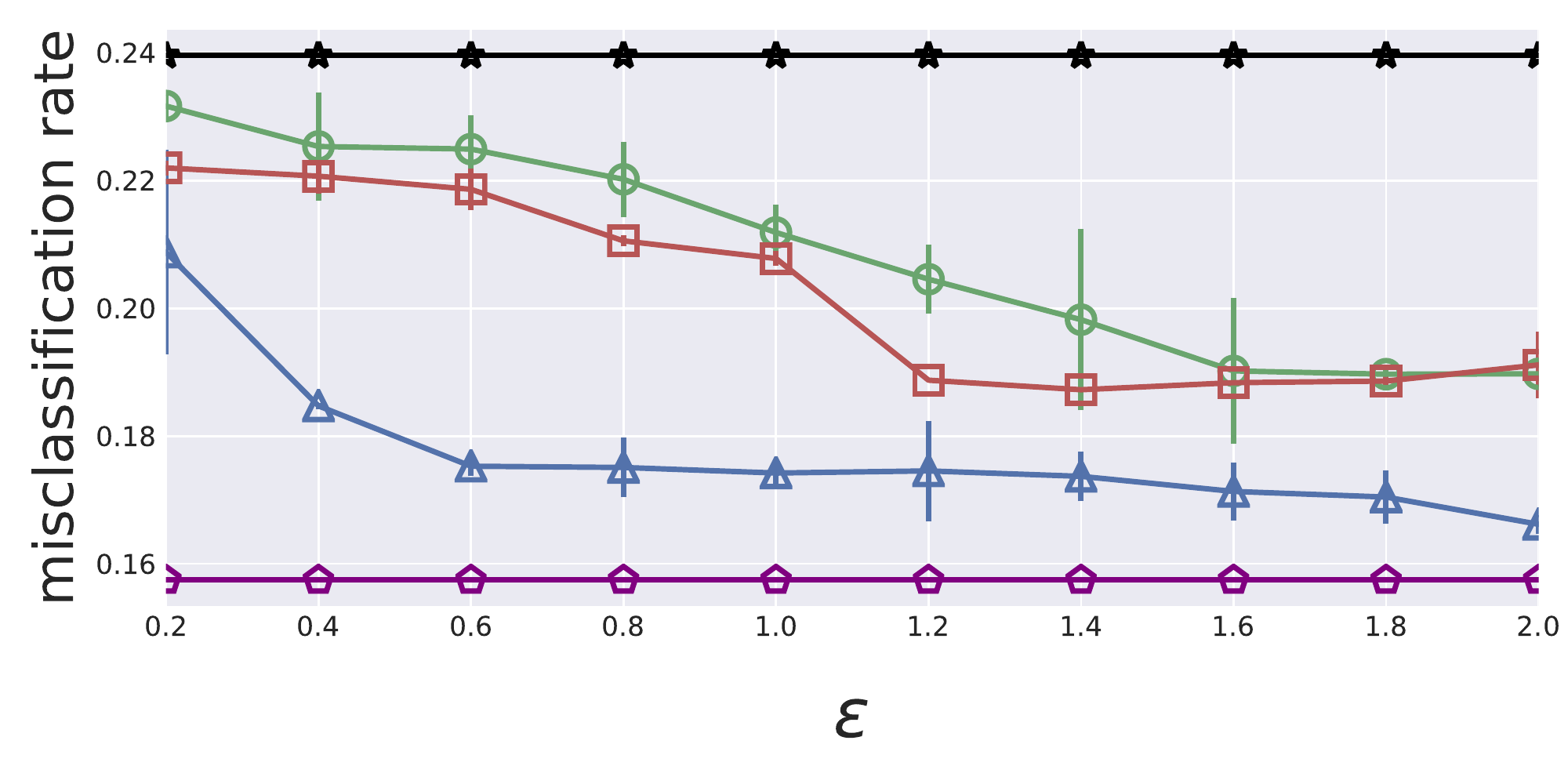}} \\ [1ex]
    
    \subfloat{\includegraphics[width=0.8\textwidth]{Figures/comparison_e2e_legend.pdf}}  \\ [-2ex]

    \subfloat[pair-wise marginal]{\includegraphics[width=0.3\textwidth]{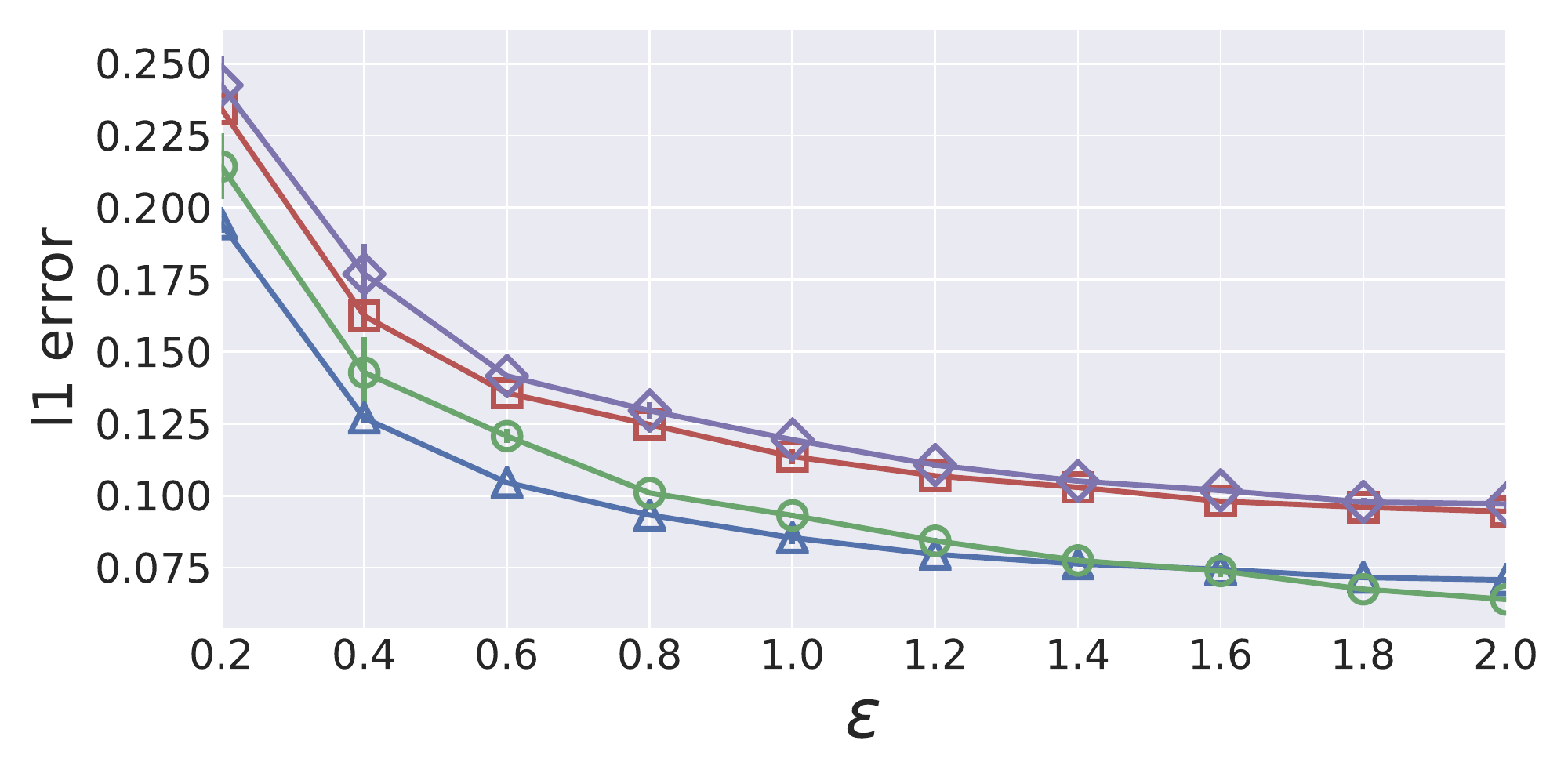}}
    \subfloat[range query]{\includegraphics[width=0.3\textwidth]{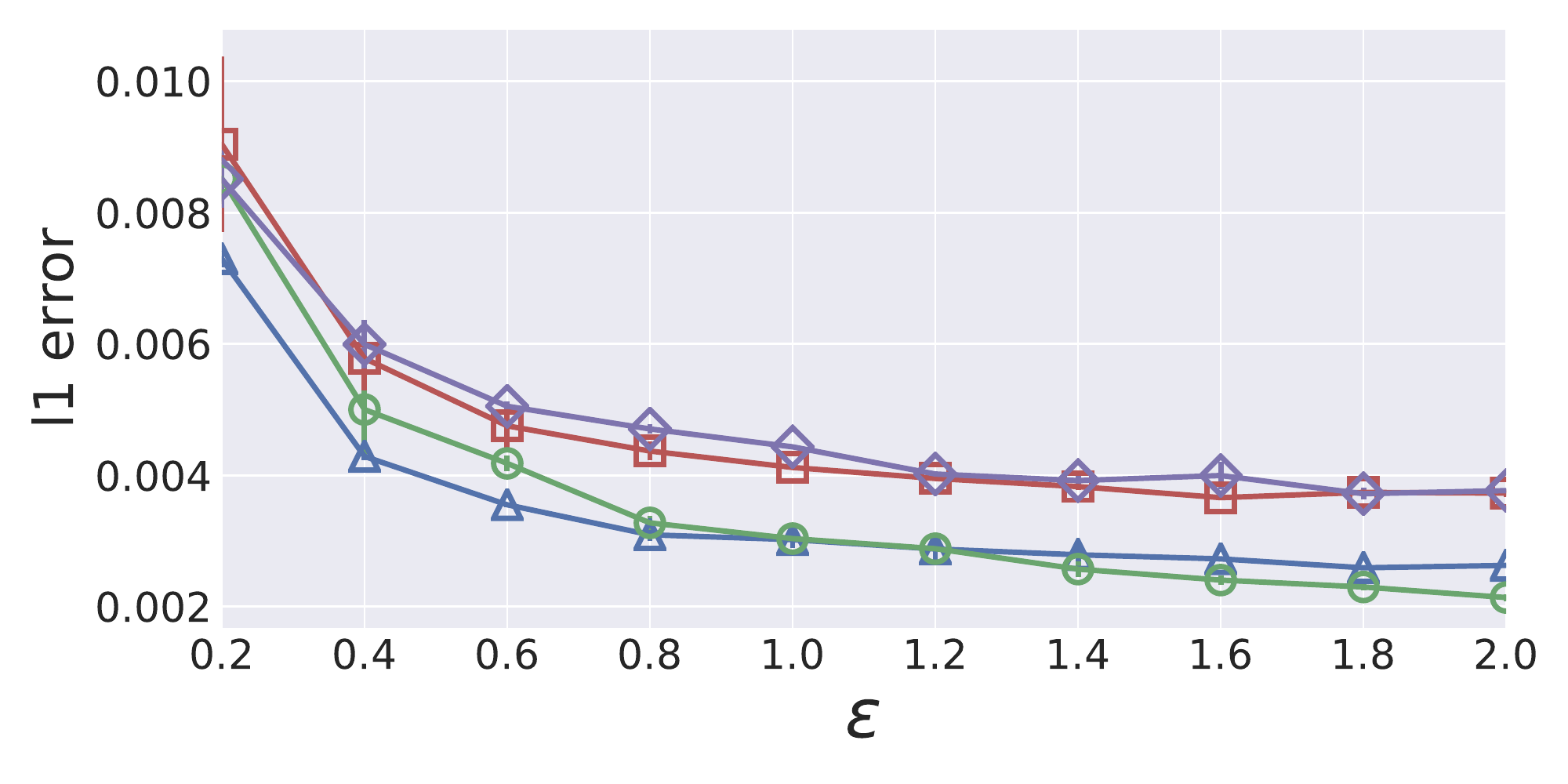}}
    \subfloat[classification]{\includegraphics[width=0.3\textwidth]{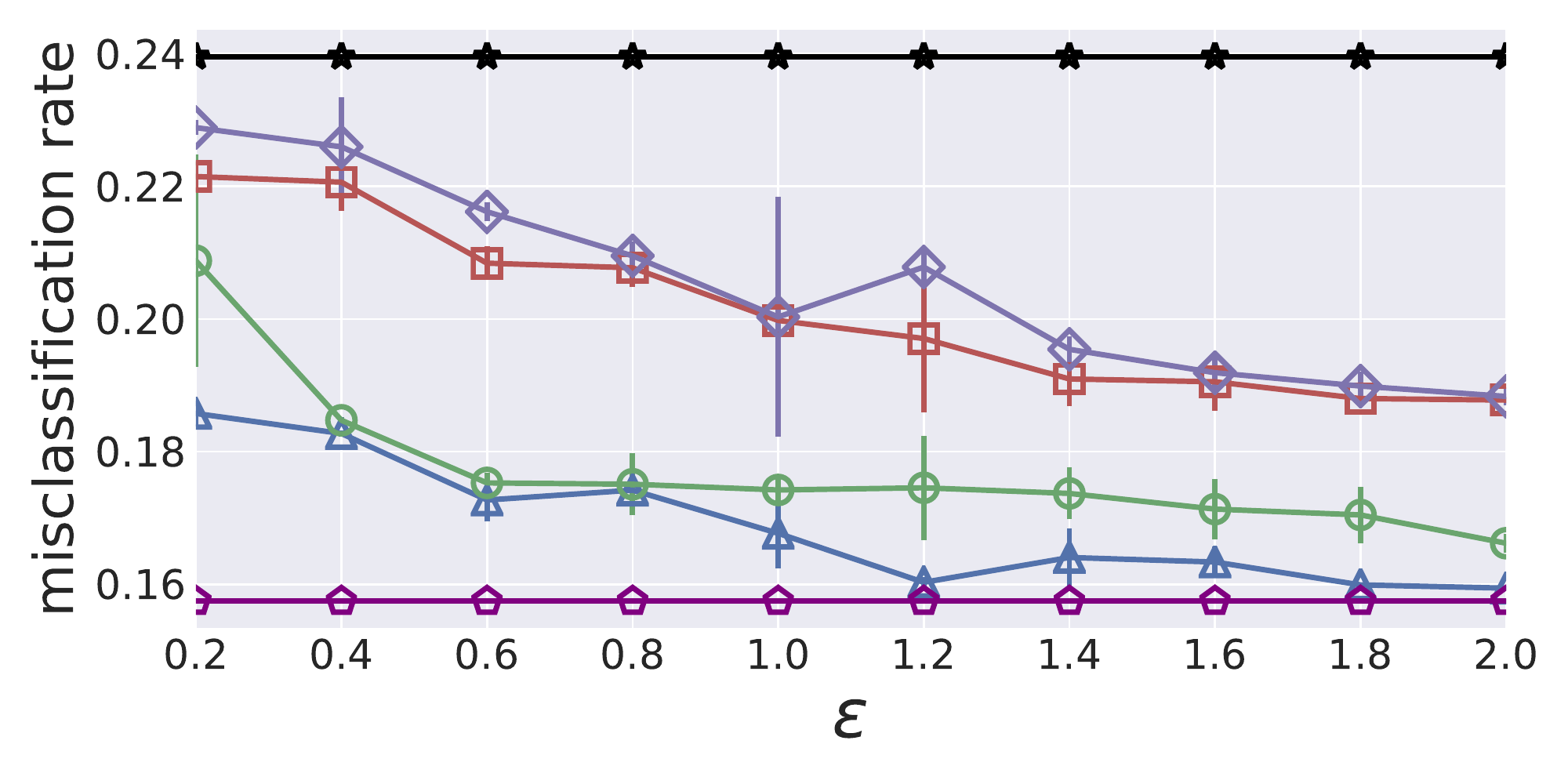}} \\ [1ex]
    
    \subfloat{\includegraphics[width=0.8\textwidth]{Figures/comparison_marginal_legend.pdf}}  \\ [-2ex]
    
    \subfloat[pair-wise marginal]{\includegraphics[width=0.3\textwidth]{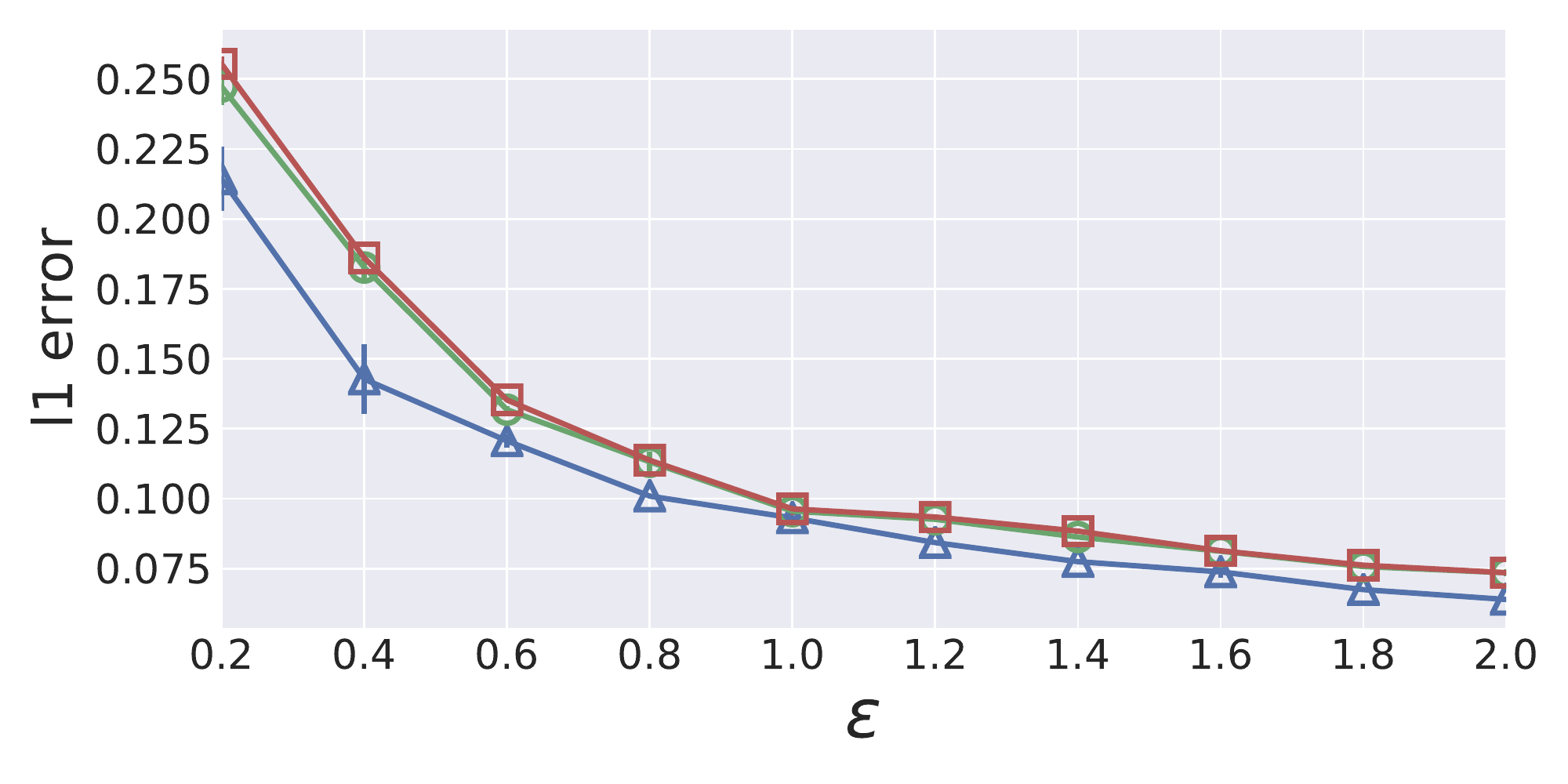}}
    \subfloat[range query]{\includegraphics[width=0.3\textwidth]{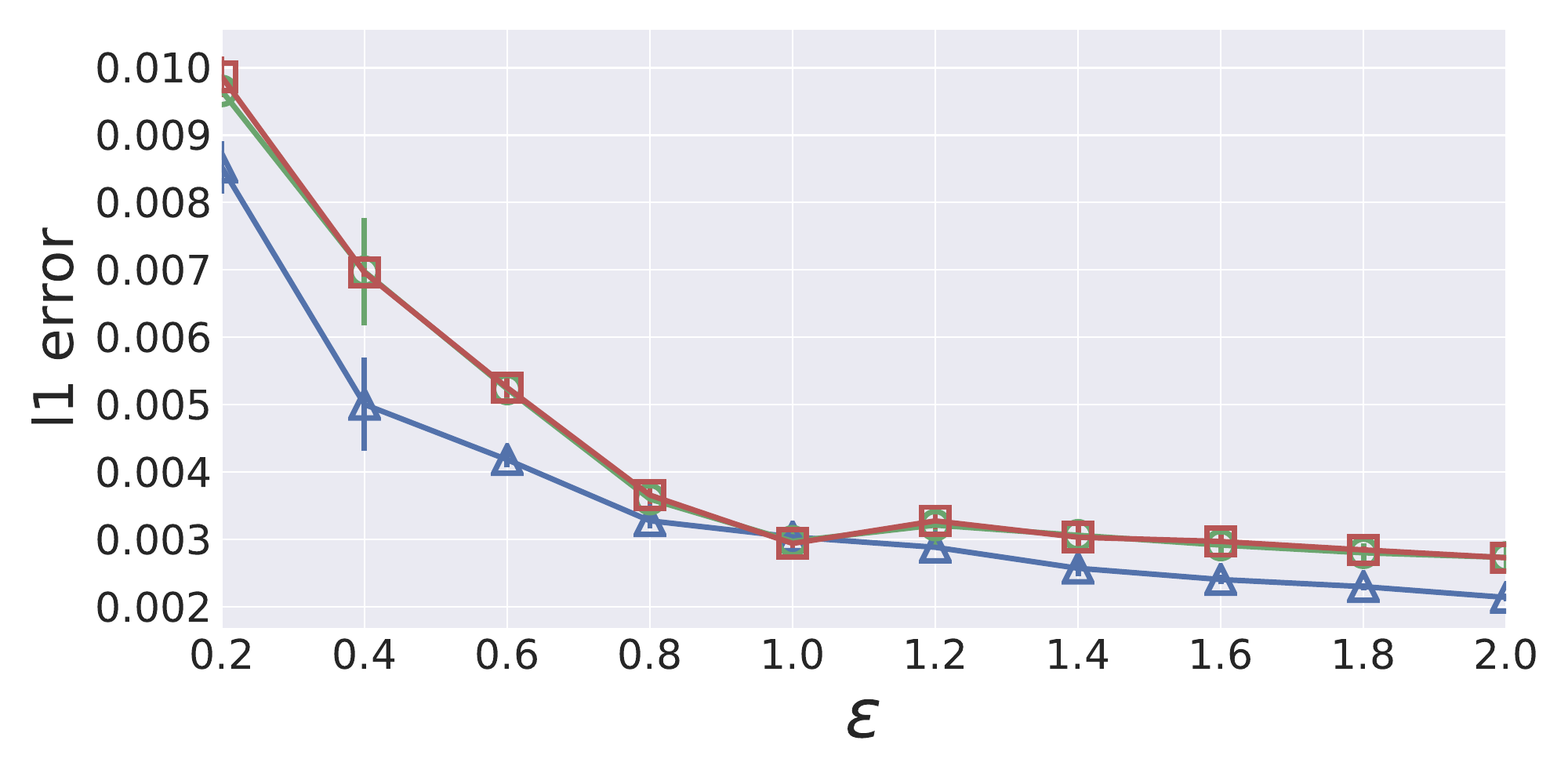}}
    \subfloat[classification]{\includegraphics[width=0.3\textwidth]{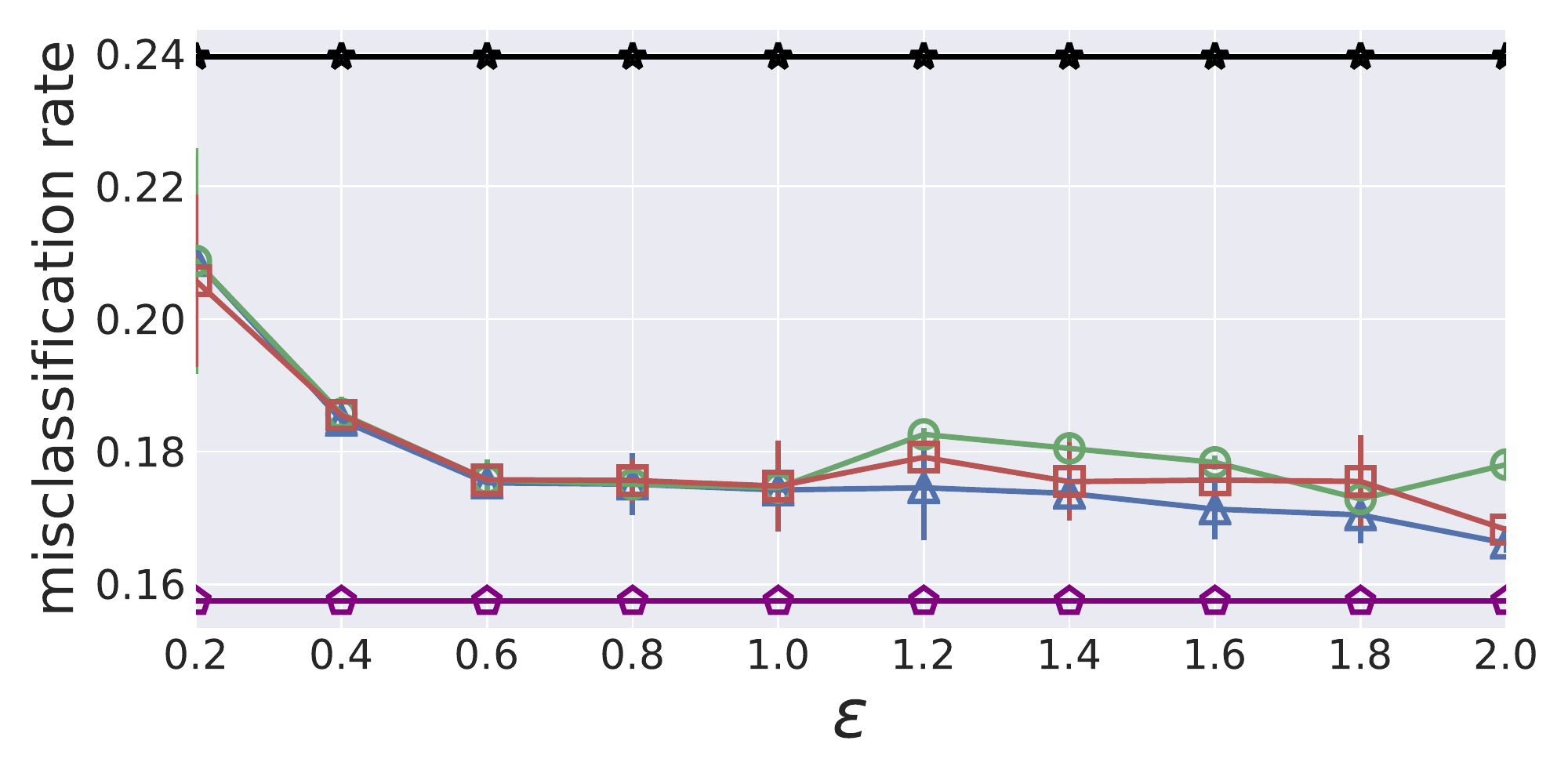}} \\ [1ex]
    
    \subfloat{\includegraphics[width=0.8\textwidth]{Figures/comparison_noise_add_legend.pdf}}  \\ [-2ex]
    
    \subfloat[pair-wise marginal]{\includegraphics[width=0.3\textwidth]{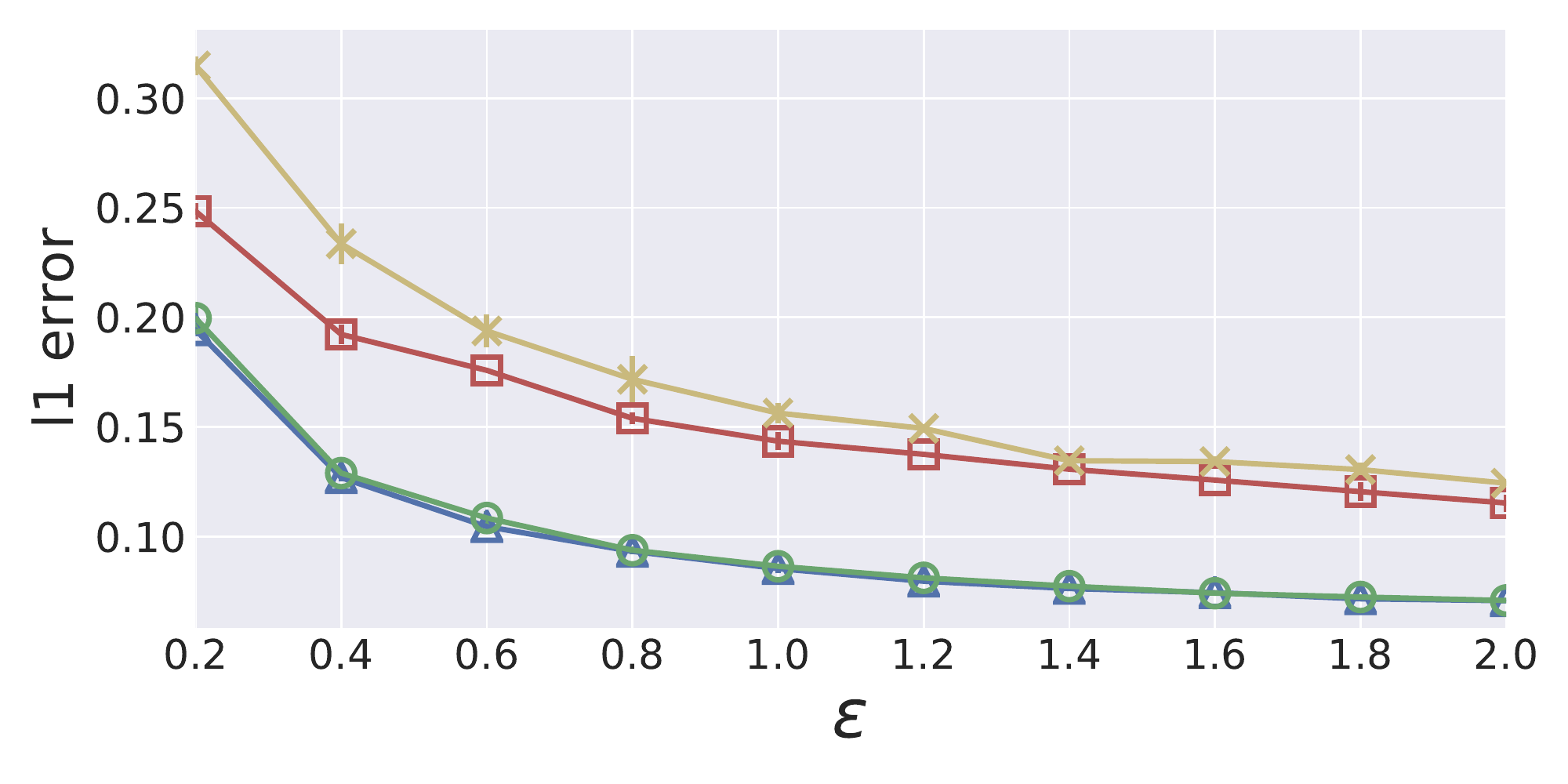}}
    \subfloat[range query]{\includegraphics[width=0.3\textwidth]{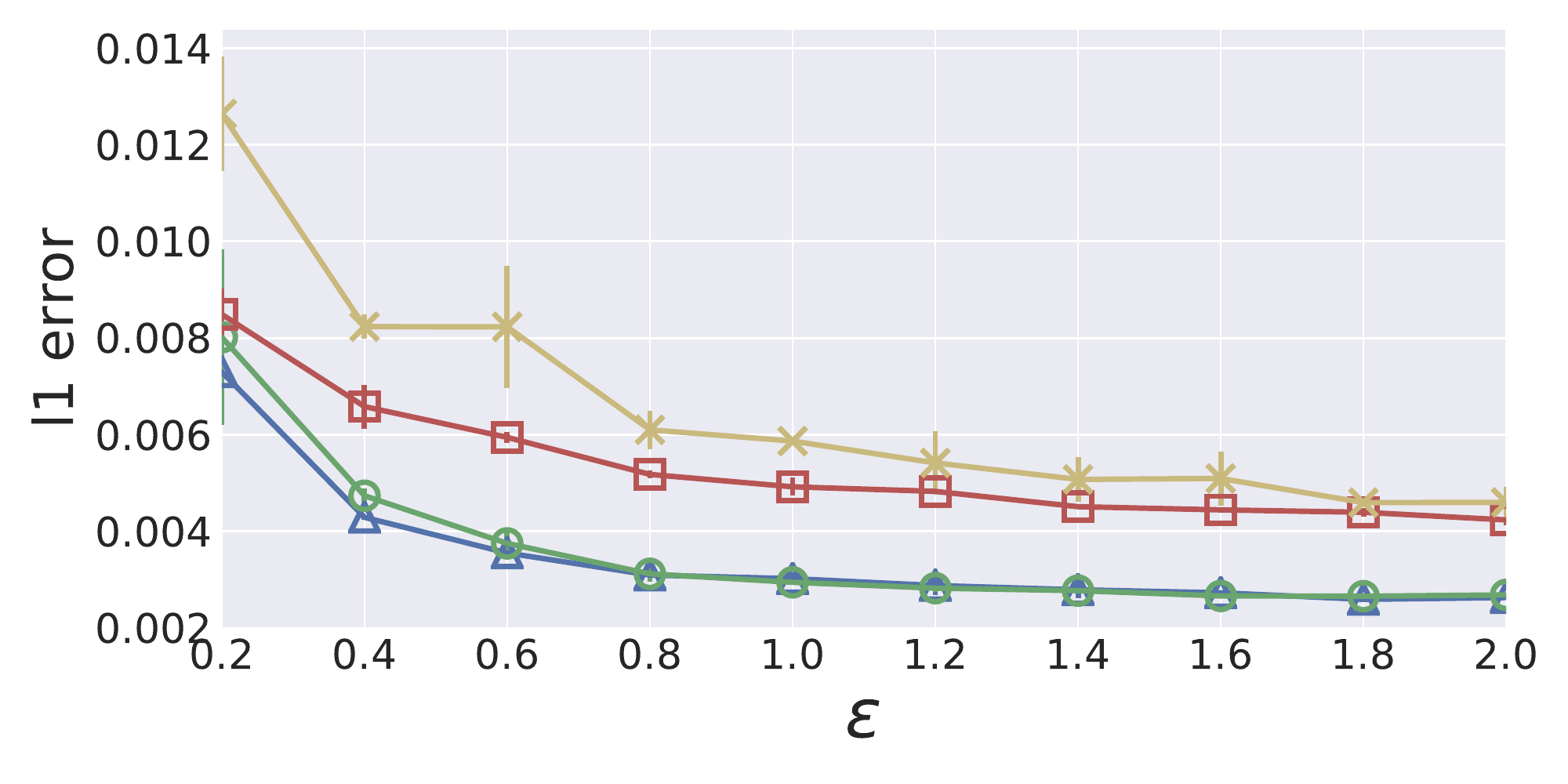}}
    \subfloat[classification]{\includegraphics[width=0.3\textwidth]{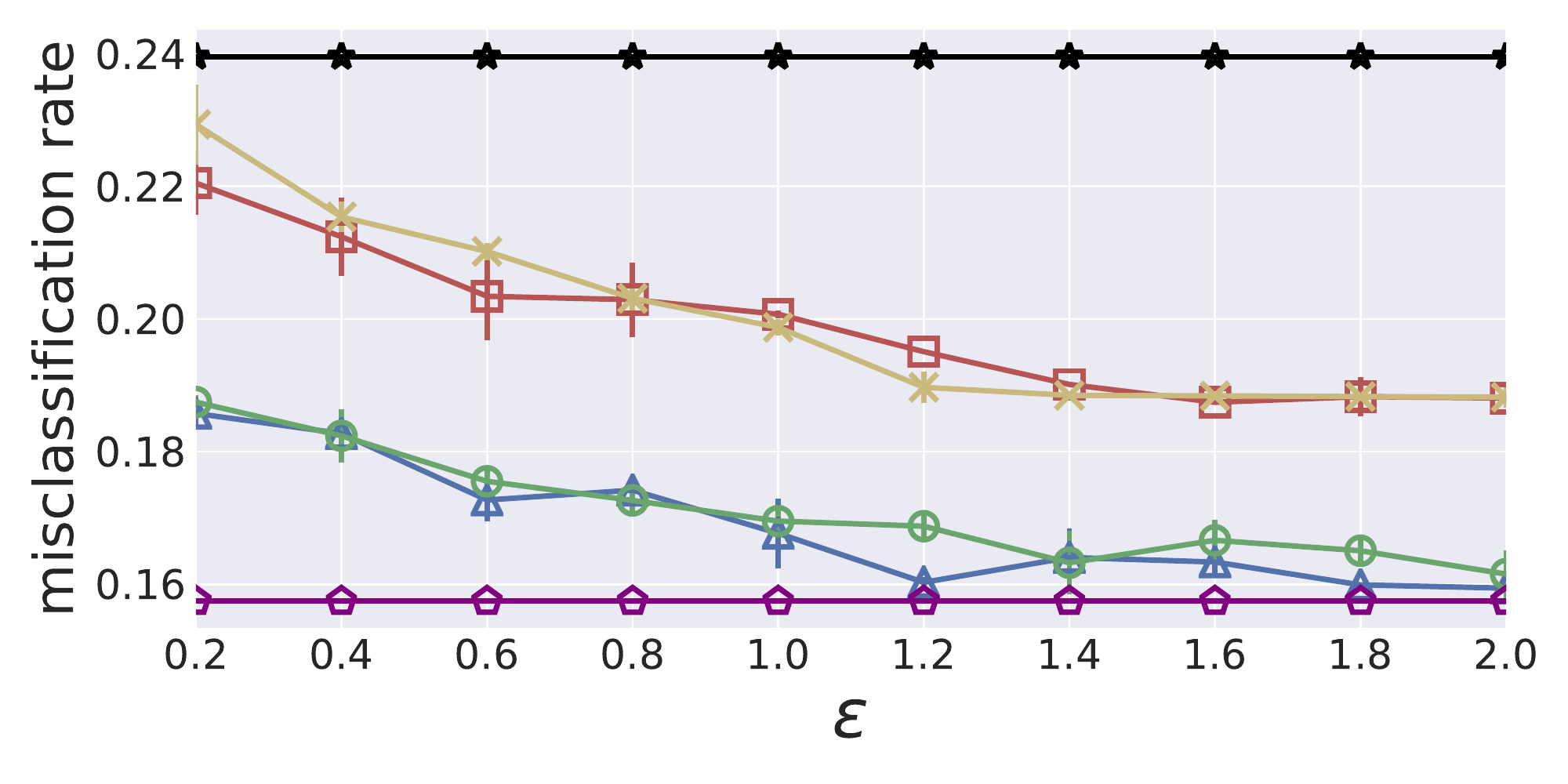}} \\ [1ex]
    
    \subfloat{\includegraphics[width=0.8\textwidth]{Figures/comparison_synthesis_legend.pdf}}  \\ 
    
    \caption{\revision{}{Experimental results for Adult dataset.
    The first row is the end-to-end comparison,
    the second row is the marginal selection methods comparison,
    the second row is the noise addition methods comparison,
    the last row is the synthesis methods comparison.}
    }
    \label{fig:adult_results}
\end{figure*}

\begin{figure*}[!htpb]
    \centering
    
    \subfloat[pair-wise marginal]{\includegraphics[width=0.3\textwidth]{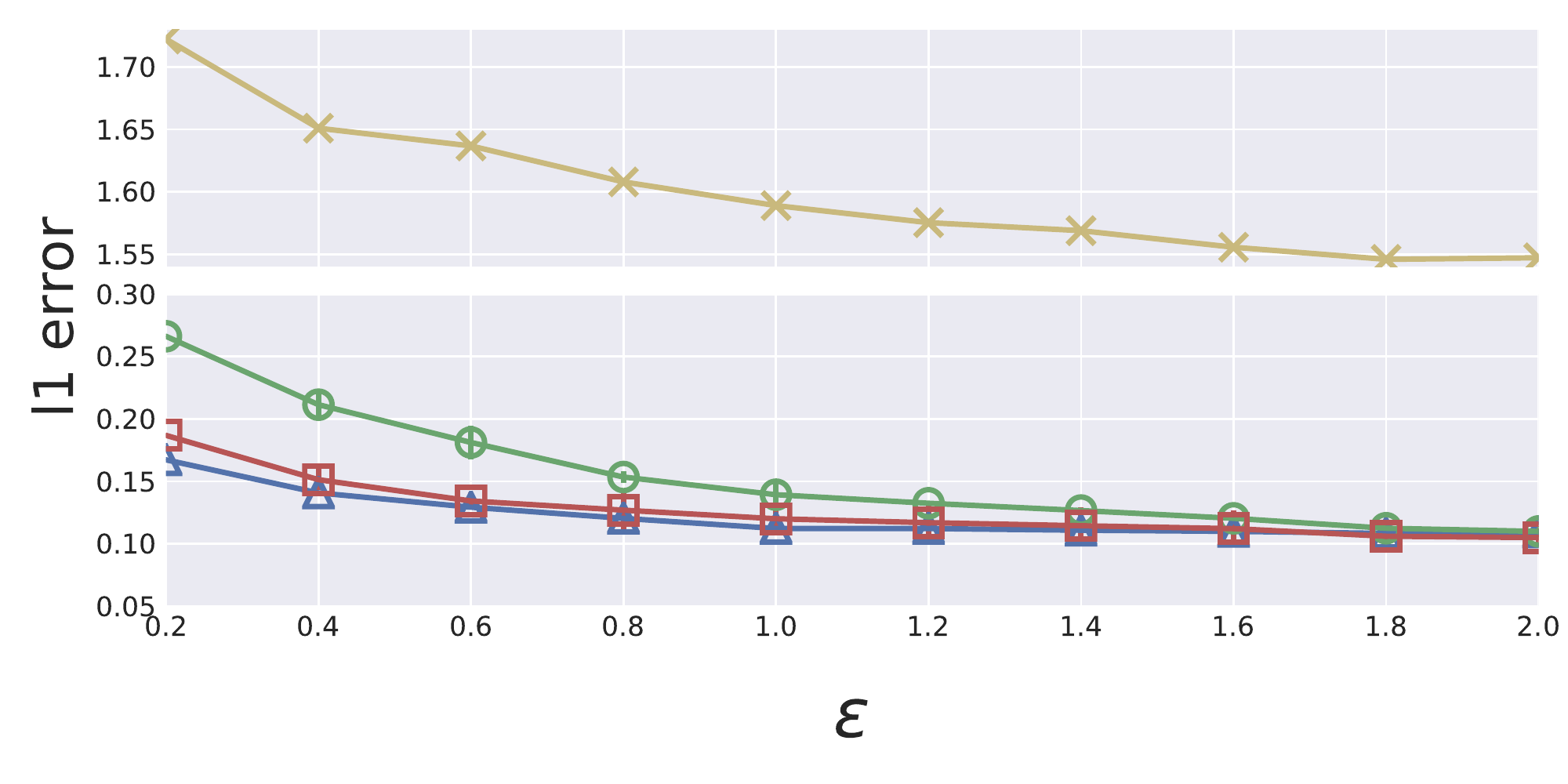}}
    \subfloat[range query]{\includegraphics[width=0.3\textwidth]{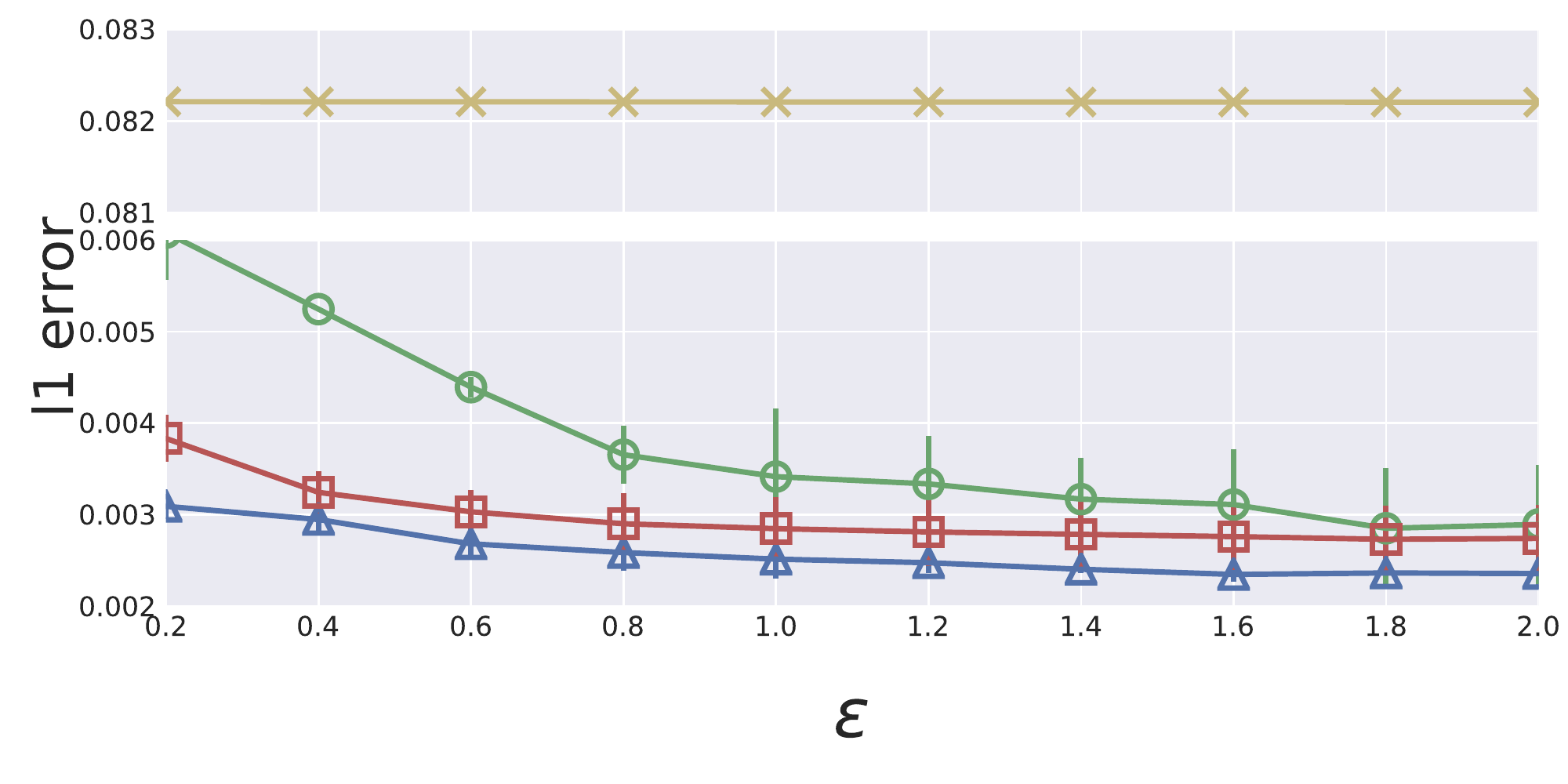}}
    \subfloat[classification]{\includegraphics[width=0.3\textwidth]{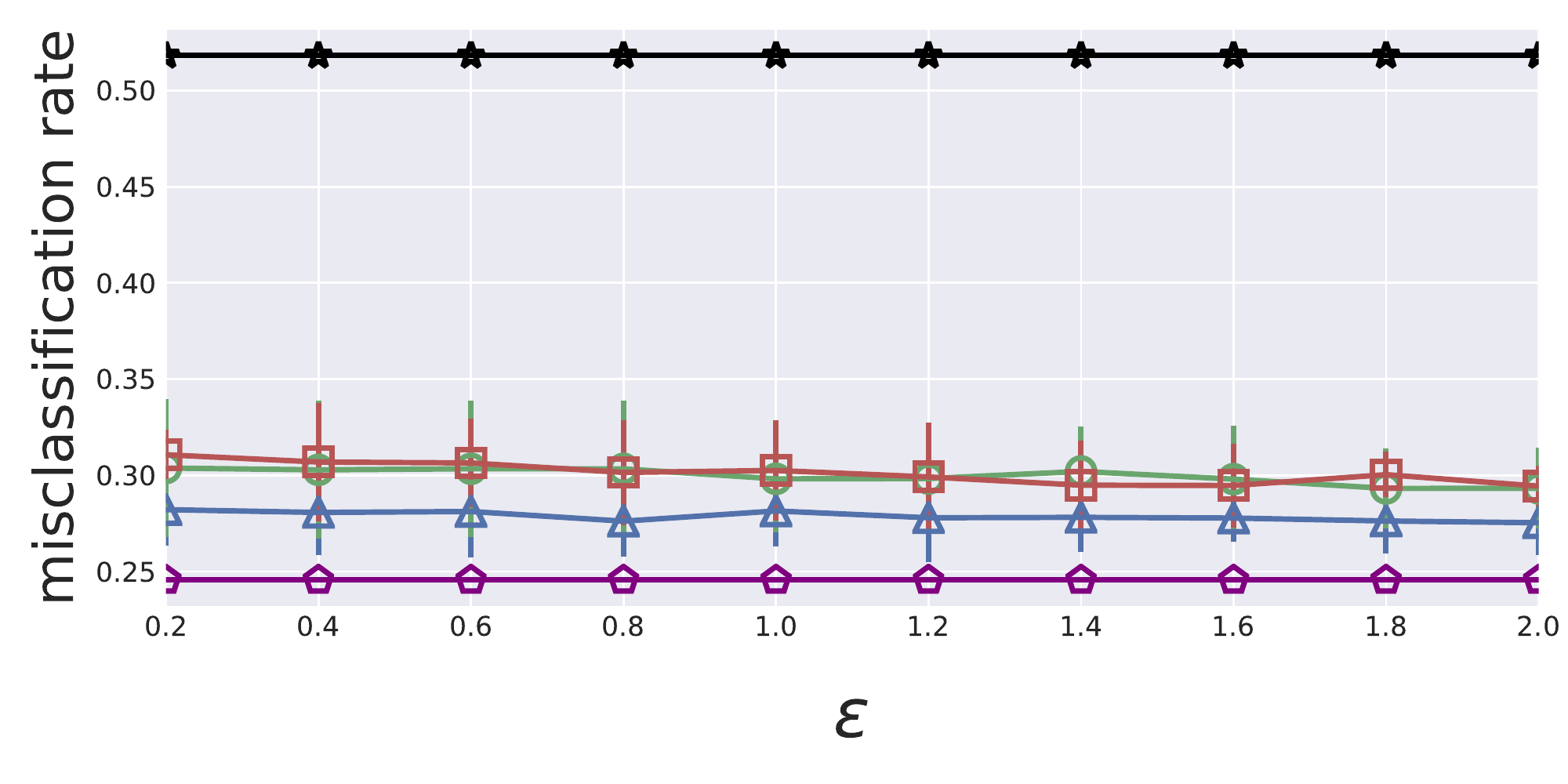}} \\ [1ex]
    
    \subfloat{\includegraphics[width=0.8\textwidth]{Figures/comparison_e2e_legend.pdf}}  \\ [-2ex]

    \subfloat[pair-wise marginal]{\includegraphics[width=0.3\textwidth]{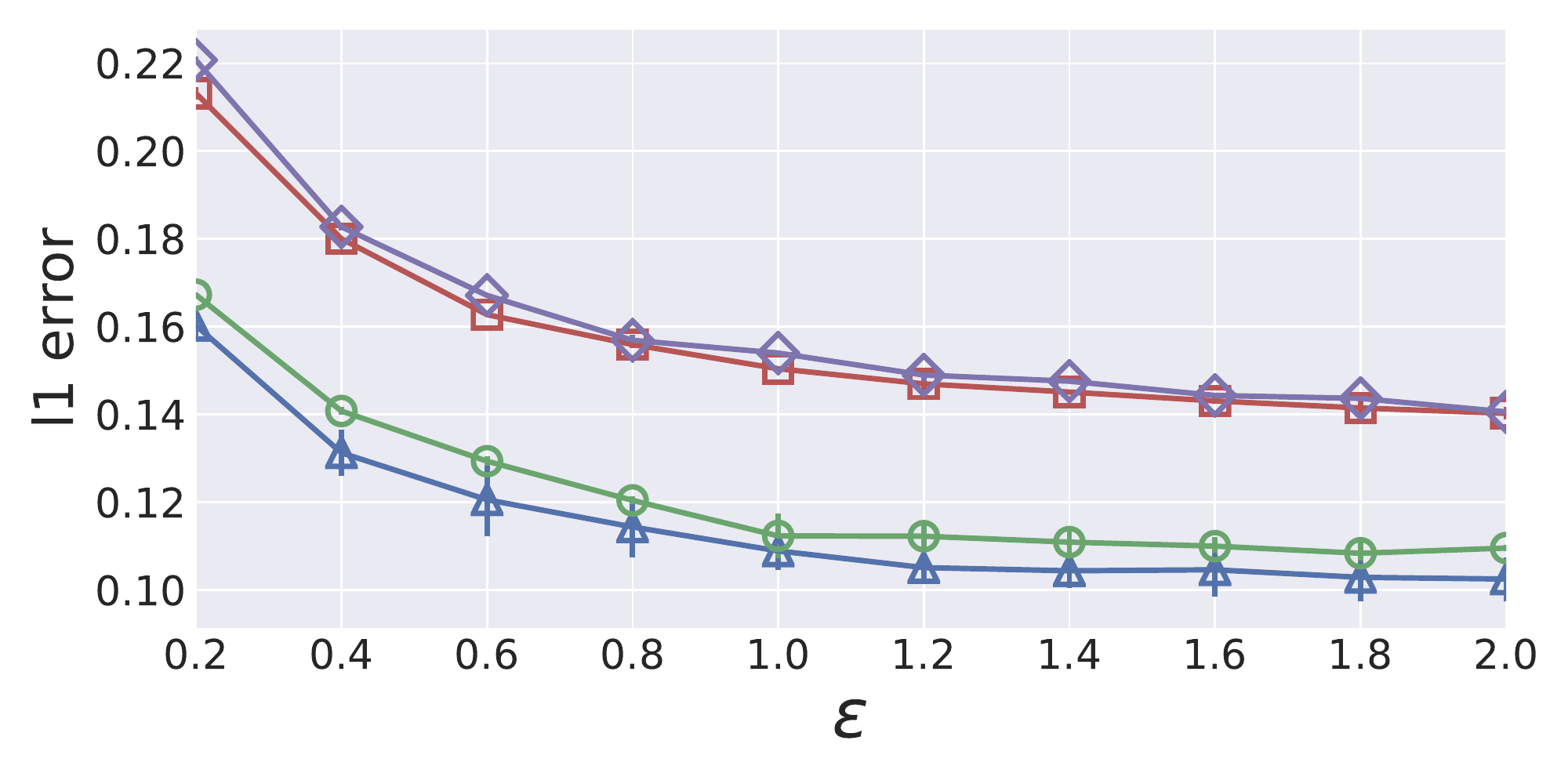}}
    \subfloat[range query]{\includegraphics[width=0.3\textwidth]{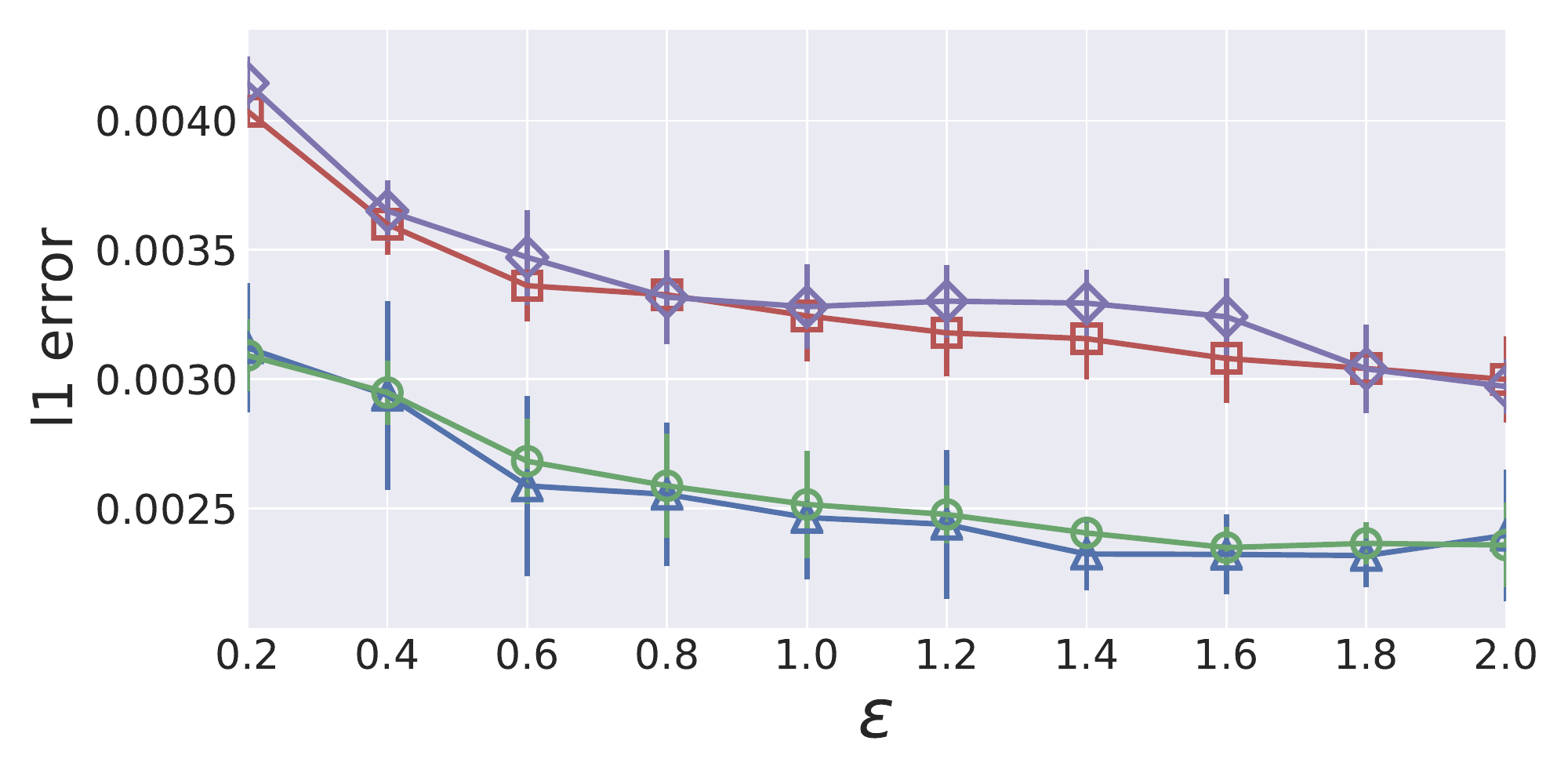}}
    \subfloat[classification]{\includegraphics[width=0.3\textwidth]{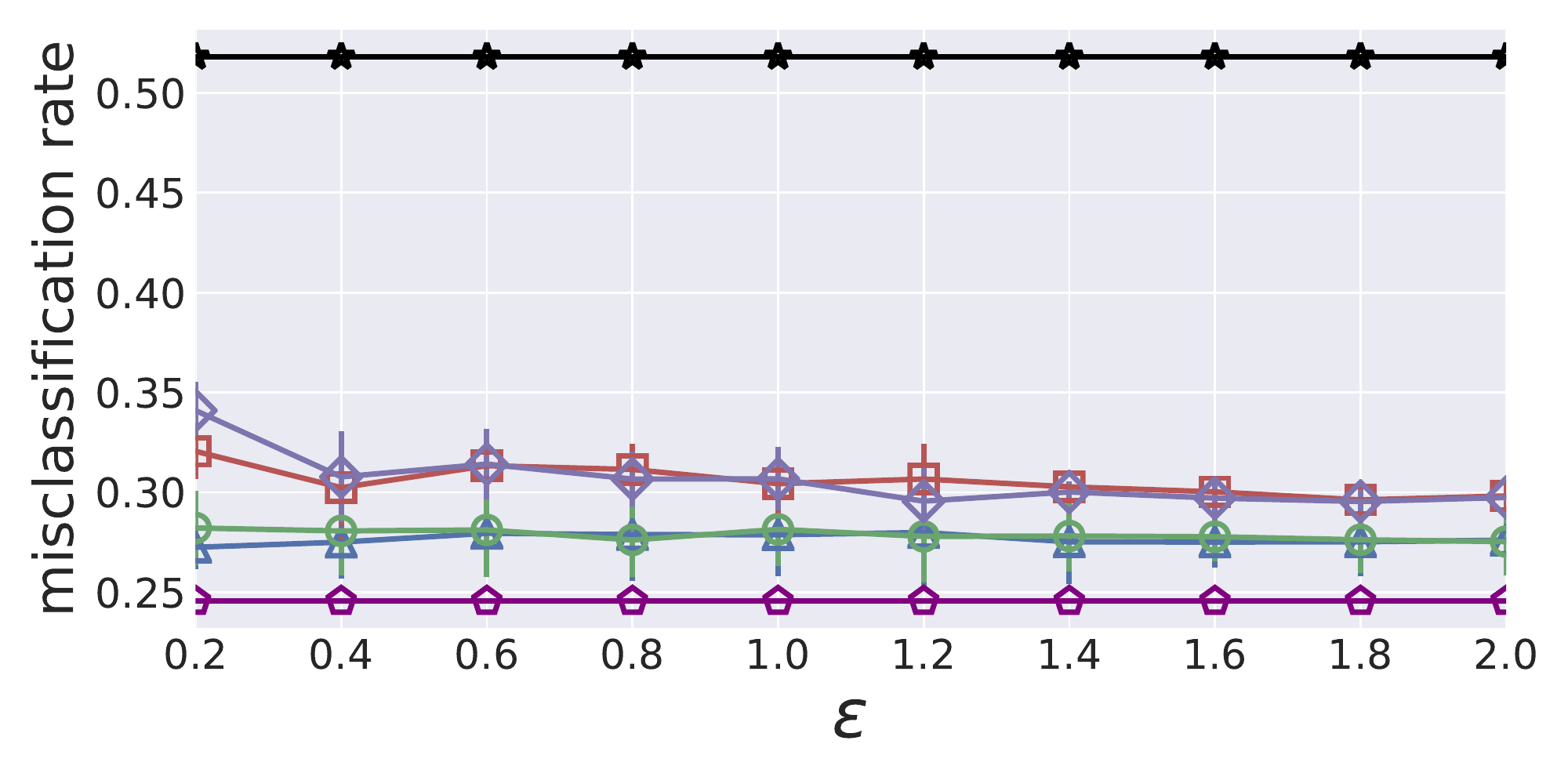}} \\ [1ex]
    
    \subfloat{\includegraphics[width=0.8\textwidth]{Figures/comparison_marginal_legend.pdf}}  \\ [-2ex]
    
    \subfloat[pair-wise marginal]{\includegraphics[width=0.3\textwidth]{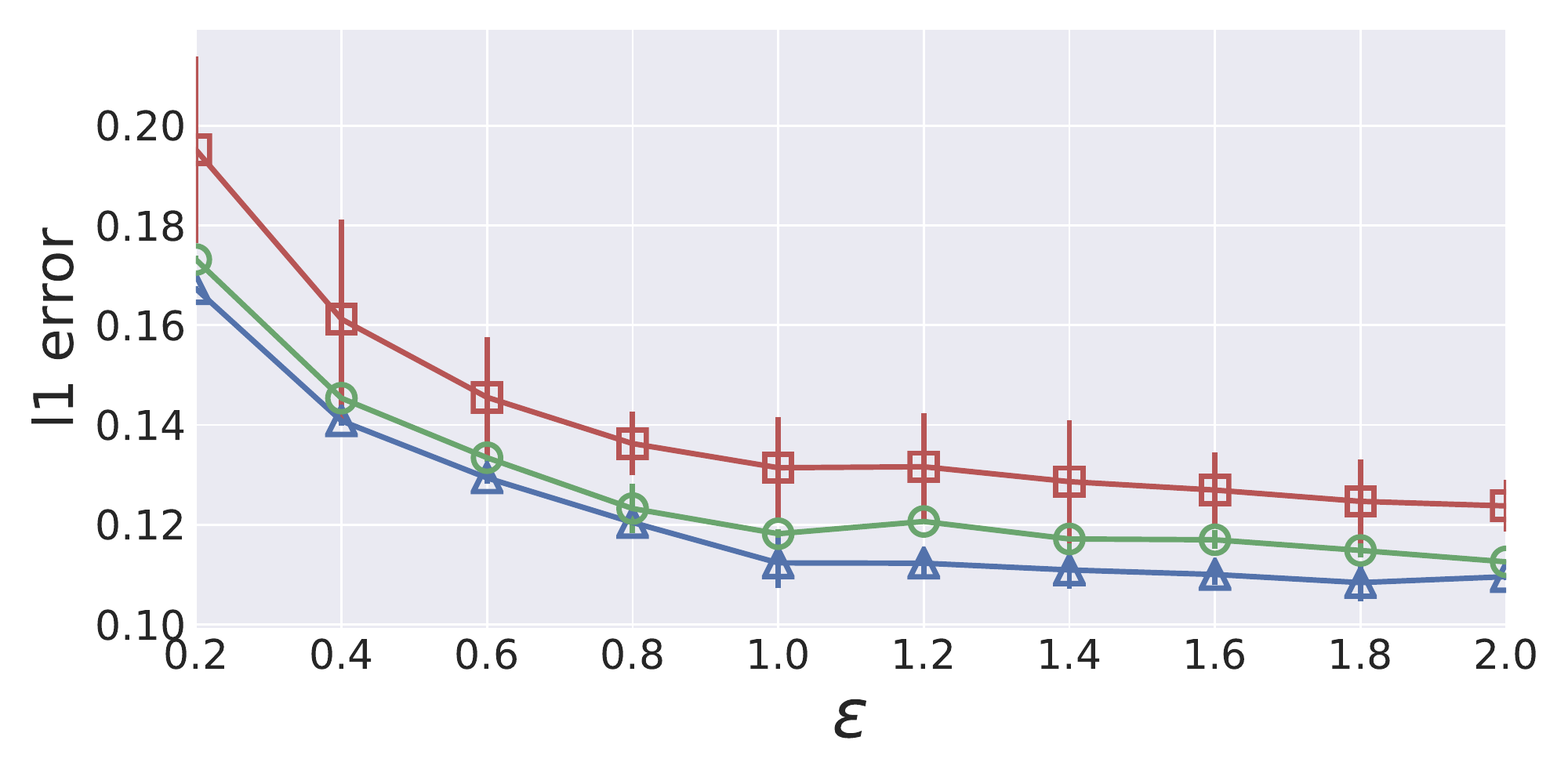}}
    \subfloat[range query]{\includegraphics[width=0.3\textwidth]{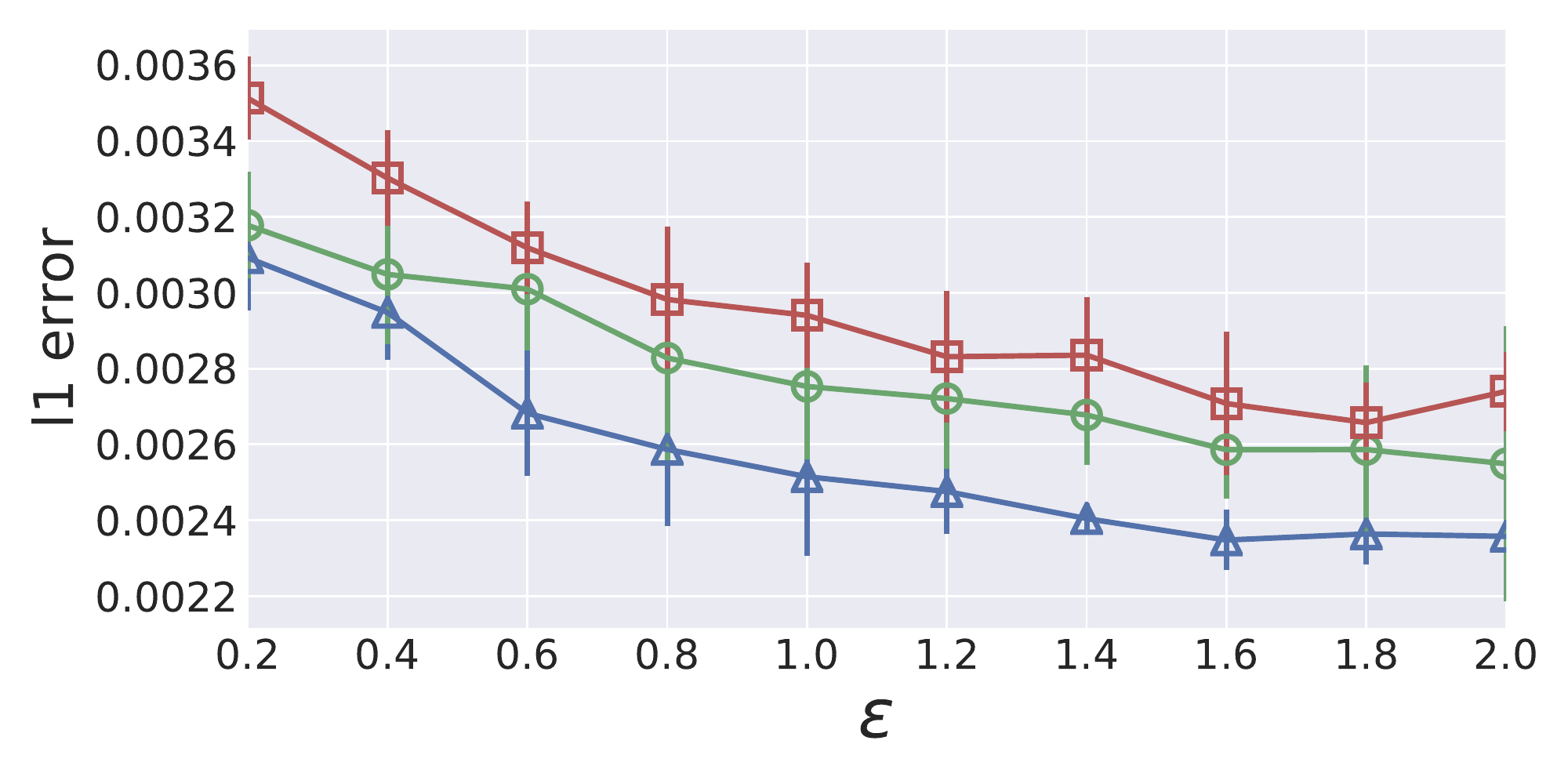}}
    \subfloat[classification]{\includegraphics[width=0.3\textwidth]{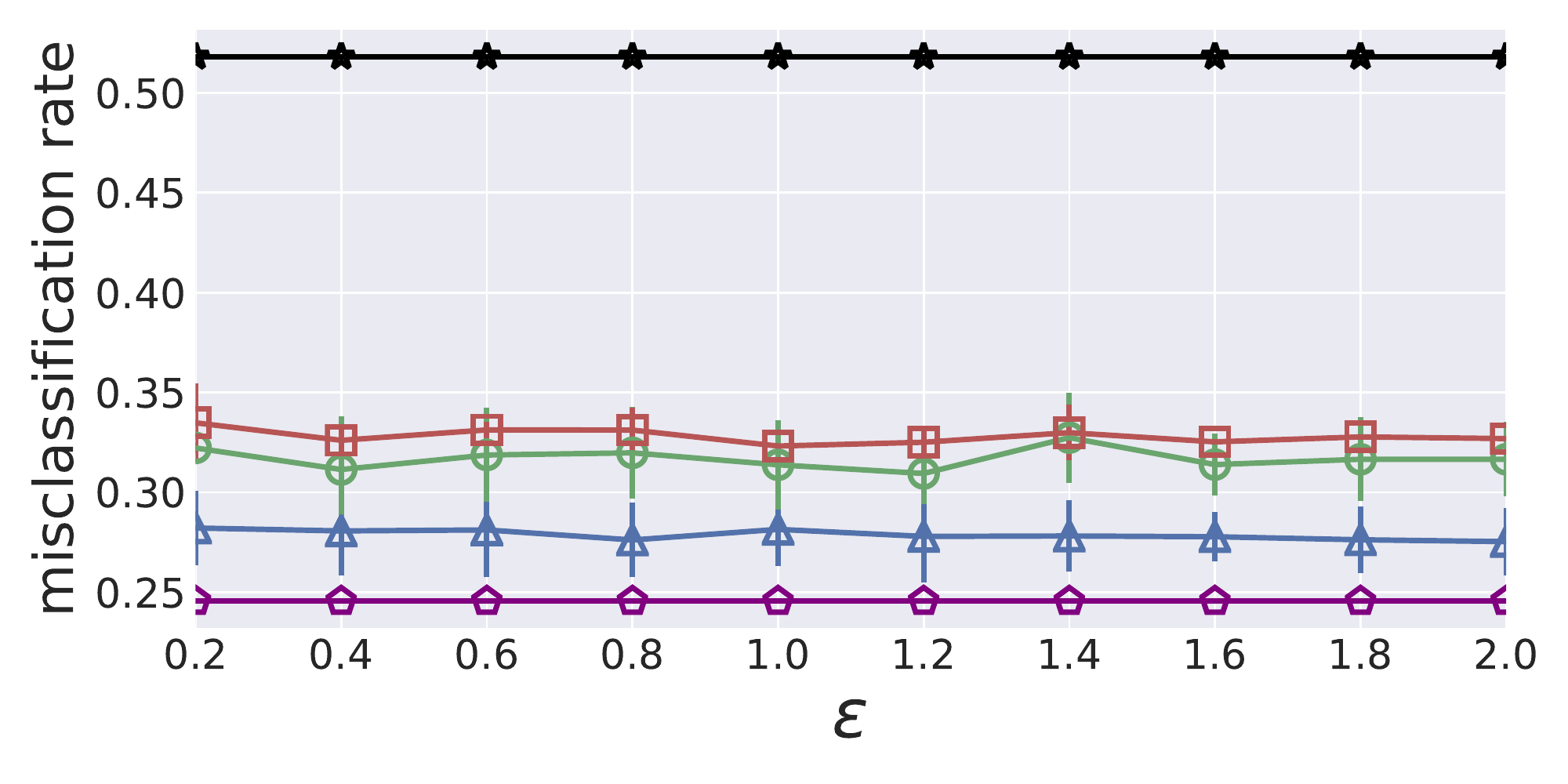}} \\ [1ex]
    
    \subfloat{\includegraphics[width=0.8\textwidth]{Figures/comparison_noise_add_legend.pdf}}  \\ [-2ex]
    
    \subfloat[pair-wise marginal]{\includegraphics[width=0.3\textwidth]{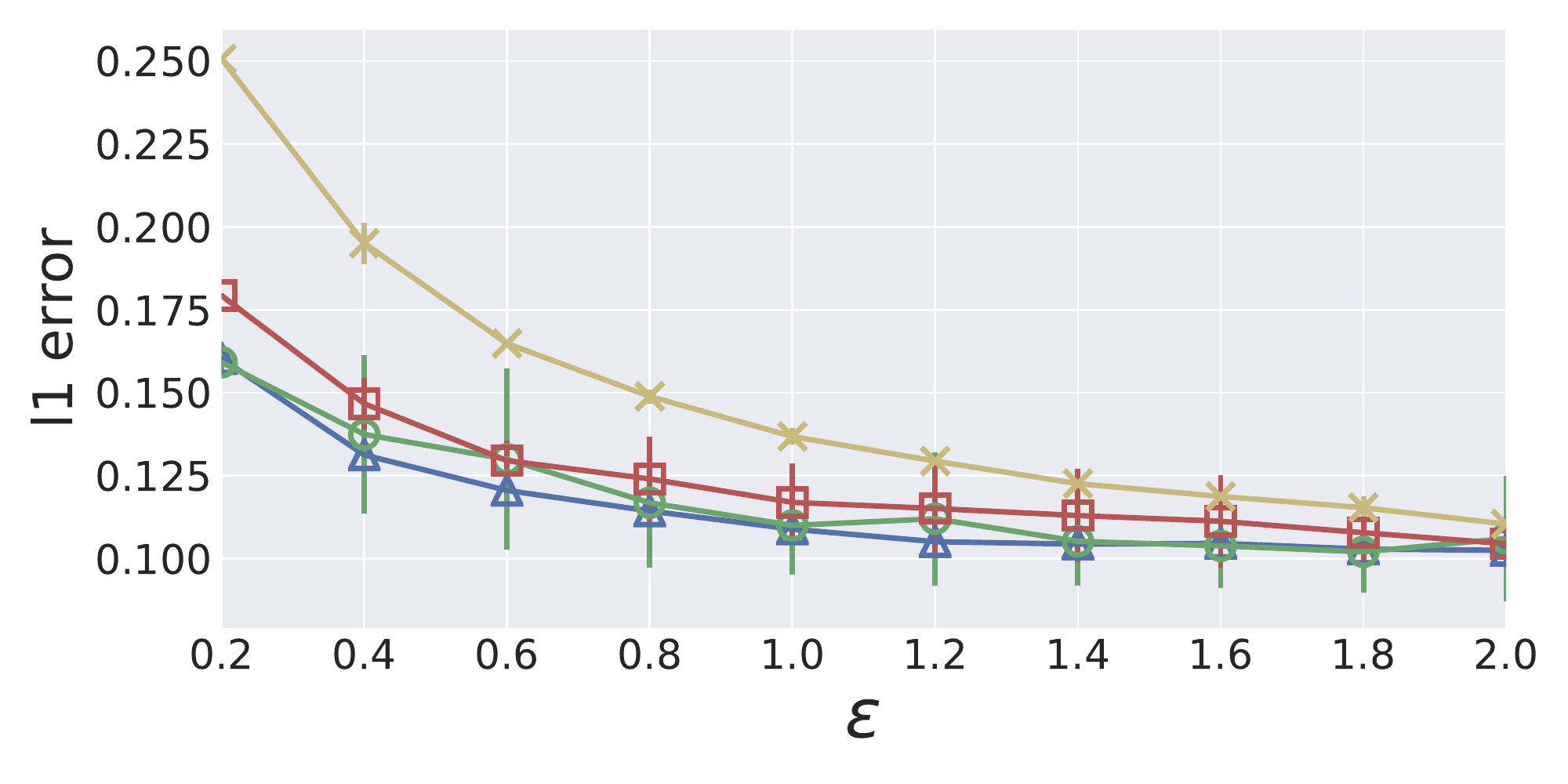}}
    \subfloat[range query]{\includegraphics[width=0.3\textwidth]{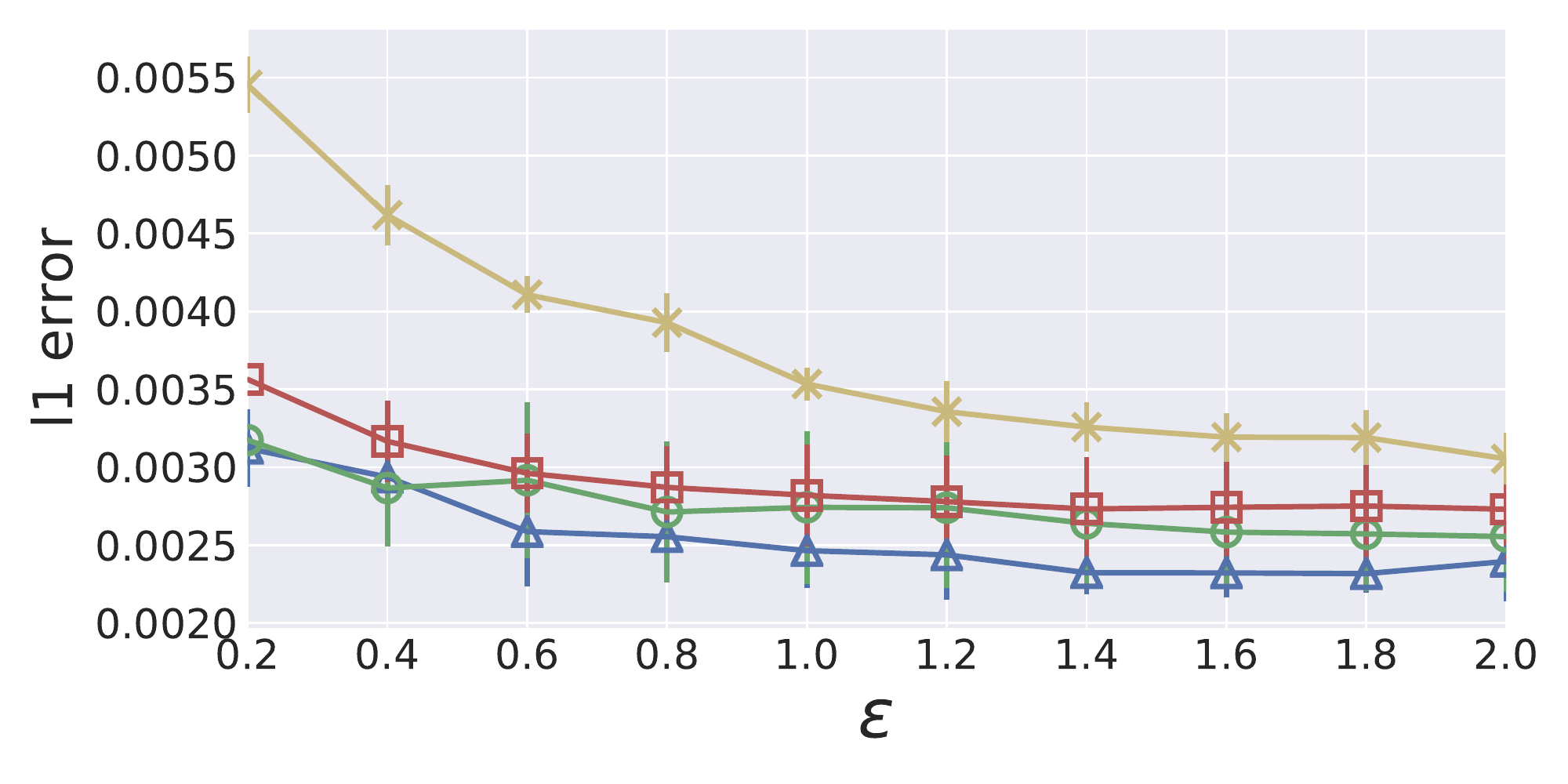}}
    \subfloat[classification]{\includegraphics[width=0.3\textwidth]{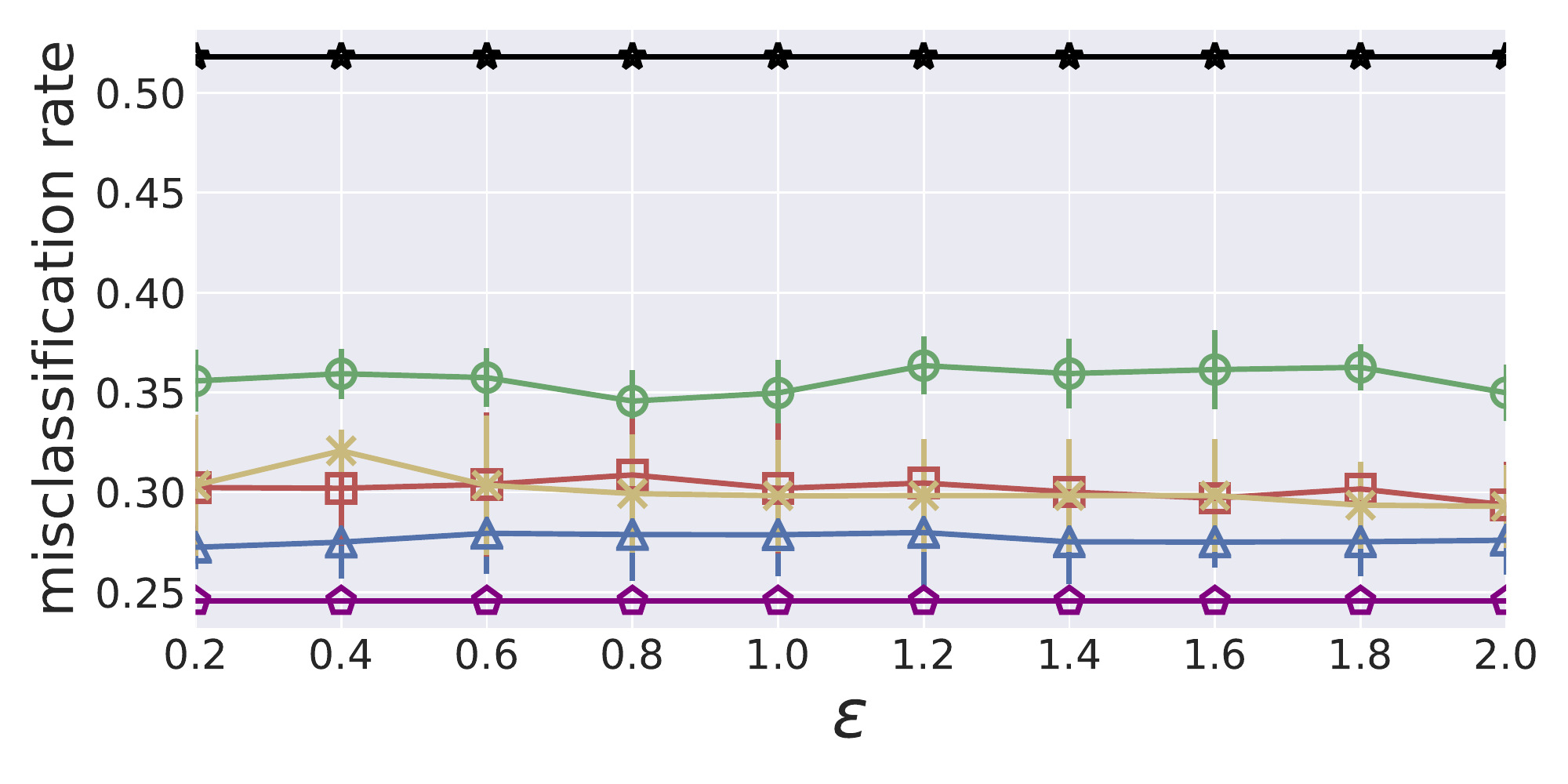}} \\ [1ex]
    
    \subfloat{\includegraphics[width=0.8\textwidth]{Figures/comparison_synthesis_legend.pdf}}  \\ 
    
    \caption{Experimental results for Loan dataset.
    The first row is the end-to-end comparison,
    the second row is the marginal selection methods comparison,
    the second row is the noise addition methods comparison,
    the last row is the synthesis methods comparison.
    }
    \label{fig:loan_results}
\end{figure*}

\end{document}